\definecolor{orcidlogocol}{HTML}{A6CE39}
\tikzset{
  orcidlogo/.pic={
    \fill[orcidlogocol] svg{M256,128c0,70.7-57.3,128-128,128C57.3,256,0,198.7,0,128C0,57.3,57.3,0,128,0C198.7,0,256,57.3,256,128z};
    \fill[white] svg{M86.3,186.2H70.9V79.1h15.4v48.4V186.2z}
                 svg{M108.9,79.1h41.6c39.6,0,57,28.3,57,53.6c0,27.5-21.5,53.6-56.8,53.6h-41.8V79.1z M124.3,172.4h24.5c34.9,0,42.9-26.5,42.9-39.7c0-21.5-13.7-39.7-43.7-39.7h-23.7V172.4z}
                 svg{M88.7,56.8c0,5.5-4.5,10.1-10.1,10.1c-5.6,0-10.1-4.6-10.1-10.1c0-5.6,4.5-10.1,10.1-10.1C84.2,46.7,88.7,51.3,88.7,56.8z};
  }
}
\newcommand\orcid[1]{\href{https://orcid.org/#1}{\mbox{\scalerel*{
\begin{tikzpicture}[yscale=-1,transform shape]
\pic{orcidlogo};
\end{tikzpicture}
}{|}}}}
\newtheorem{theorem}{Theorem}[section]%
\newtheorem{lemma}{Lemma}[section]%
\newtheorem{corollary}{Corollary}[section]%
\newtheorem{example}{Example}[section]%
\newtheorem{remark}{Remark}[section]%
\newtheorem{definition}{Definition}[section]%
\DeclareMathOperator{\Av}{\operatorname{Av}}
\renewcommand{\d}{{{\mathrm d}}}
\title{Quantum Random Walks and Quantum Oscillator in an Infinite-Dimensional Phase Space}
\author{ 
  \orcid{0009-0000-5636-1962} Vladimir Busovikov\\
  Department of Mathematical Methods\\ for Quantum Technologies\\
  Steklov Mathematical Institute \\of Russian Academy of Sciences\\
  Russia, Moscow, Gubkina Str. 8 \\
  \texttt{treonon38@mail.ru} 
	\And
  \orcid{0000-0001-8290-8300} Alexander Pechen \\
  Department of Mathematical Methods \\for Quantum Technologies\\
  Steklov Mathematical Institute \\of Russian Academy of Sciences\\
  Russia, Moscow, Gubkina Str. 8 \\
    and \\
    University of Science and Technology MISIS \\
    Russia, Moscow, Leninskiy prosp. 4 \\
    and \\
    Ivannikov Institute for System Programming of RAS\\
    Russia, Moscow, Alexander Solzhenitsyn Str., 25\\  
    \texttt{apechen@gmail.com}
    \And
      \orcid{0000-0001-8349-1738} Vsevolod Sakbaev \\
  Department of Mathematical Methods\\ for Quantum Technologies\\
  Steklov Mathematical Institute \\of Russian Academy of Sciences\\
  Russia, Moscow, Gubkina Str. 8 \\
  \texttt{fumi2003@mail.ru} 
}
\begin{document}
\maketitle

\begin{abstract}
We consider quantum random walks in an infinite-dimensional phase space constructed using Weyl representation of the coordinate and momentum operators in the space of functions on a Hilbert space which are square integrable with respect to a shift-invariant measure. We study unitary groups of shift operators in the phase space and averaging of such shifts by Gaussian vectors, which form semigroups of self-adjoint contractions: we find conditions for their strong continuity and establish properties of their generators. Significant differences in their properties allow us to show the absence of the Fourier transform as a unitary transformation that implements the unitary equivalence of these compressive semigroups. Next, we prove the Taylor formula for a certain special subset of smooth functions for shifting to a non-finite vector. It  allows us to prove convergence of quantum random walks in the coordinate representation to the evolution of a diffusion process, as well as convergence of quantum random walks in both coordinate and momentum representations to the evolution semigroup of a quantum oscillator in an infinite-dimensional phase space. We find the special essential common domain of generators of semigroups arising in averaging of random shift operators both in position and momentum representations. The invariance of this common domain with respect to both semigroups allows to establish properties of a convex combination of both generators. That convex combination are Hamiltonians of infinite-dimensional quantum oscillators. Thus, we obtain that a Weyl representation of a random walk in an infinite dimensional phase space describes the semigroup of self-adjoint contractions whose generator is the Hamiltonian of an infinite dimensional harmonic oscillator.
\end{abstract}

\keywords{quantum random walk \and Gaussian random process \and strongly continuous semigroup \and translation invariant measure on a Hilbert space \and Fourier transform}

\section{Introduction}

Random walks are a useful mathematical tool which found successful applications in physics, differential equations, computer science, biology, and other fields. Quantum random walks are constructed using compositions of non-commuting random operators. 
Quantum random walks were introduced in the seminal work~\cite{Aharonov_Davidovich_Zagury_1993} with applications in quantum optics. Since then, various mathematical results were obtained. 
In~\cite{Oksak} two explicitly solvable models of quantum random processes were described by the Langevin equation, namely, those for a $"$free$"$ quantum Brownian particle and for a quantum Brownian harmonic oscillator. The paper \cite{Attal_Dhahri_2010} is devoted to a description of a quantum noise model which is generated by the sequence of repeated quantum interactions with a chain of exterior systems. Analysis of this model is reduced to the analysis of compositions of mutually independent random unitary operators. The convergence of sequences of random walks is this and similar models were studied in \cite{Belton_2010,  Sahu_2008}. Random walks are applied for analysis of Anderson localization~\cite{Aizenman_Simone_2015,Chaturvedi_Srivastava_1983}. Quantum Langevin equation with quantum Poisson random process was derived in the low density limit~\cite{Accardi_Pechen_Volovich_2002,PechenJMP2004} closely related to the collisional decoherence~\cite{Filippov2020}. In \cite{Attal_Petruccione_Sabot_Sinayskiy_2012} models of open quantum random walks over graphs were studied as Markov chains of operators corresponding to different open quantum systems. This model of a quantum random walks on a graphs was studied in \cite{Dhamapurkar_Dahlsten_2023}, where it was shown that the node position statistics do not thermalize in the standard sense. In particular, quantum walks over fullerene graphs were shown to constitute a counter-example to the hypothesis that subsystems equilibrate to the Gibbs state~\cite{Kempe_2003, Kempe_2009}. Large time behavior of the quantum mechanical probability distribution of the position
observable in a finite-dimensional space when the sequence of unitary updates is given by an independent identically distributed (i.i.d) sequence of random matrices was studied in~\cite{Joye_2011, Joye_Merkli_2010, Joye_2012a}. Dependence of the properties of quantum random walks on the properties of the coordinate space was studied in~\cite{Wang_Wang_2020}. The analysis and comparison of quantum and classical discrete and continuous random walks were done in \cite{Venegas-Andraca_2012, Weickert}. Quantum multipole noise in indefinite metric pseudo-Hilbert spaces was shown to appear in corrections to quantum Brownian motion~\cite{Pechen_Volovich_2002}. In~\cite{Kolokoltsov_2020},  continuous time random walks were applied to modeling of the continuous quantum measurements yielding the new fractional in time evolution  quantum filtering equation and thus new fractional equation for open quantum systems. The ambiguity of quantization procedure was considered as a source of randomness of quantum Hamiltonian in~\cite{GOSS21}. Quantum random walks as compositions of i.i.d. random quantum dynamical semigroups were studied in~\cite{GOSS22}. 

Quantum random walks in infinite-dimensional spaces are studied in less details than quantum random walks in finite-dimensional spaces. One reason is the absence of the space for unitary representation of the group of shifts on an arbitrary vector in an infinite-dimensional space \cite{Vershik}. To obtain this representation we need in a shift-invariant measure on an infinite-dimensional phase space. According to the Weyl's theorem, there is no nonzero countably additive $\sigma$-finite locally finite measure on the $\sigma$-ring of Borel subsets of an infinite-dimensional topological vector space invariant with respect to shifts on vectors of this space~\cite{Weyl}. Various problems were studied for infinite-dimensional topological vector spaces about about existence of measures on invariant with respect to shifts to vectors from some maximum allowable subspace~\cite{Vershik}, shift-invariant measures which are not either $\sigma$-finite~\cite{Baker} or countably additive~\cite{S2016}, etc. However, invariant measure and corresponding  unitary representation of the group of shifts in infinite-dimensional space have not been constructed.

This work aims to fill this gap and aims towards developing rigorous theory for describing behavior of quantum random walks in infinite dimensional phase spaces. We call a non-negative additive function of a set defined on a certain ring of subsets of $E$ as non-negative measure on $E$. In \cite{S2016}, a family of shift-invariant, locally finite $\sigma$-finite, complete measures on $E$ is defined. We construct a non-negative measure on a Hilbert space that is invariant with respect to a group of shifts along an arbitrary homogeneous vector field and along a some wide class of Hamiltonian vector fields. This measure allows us to obtain a canonical Weyl  representation of commutation relation is the space of function that are square integrable with respect to invariant measure. We use this representation to construct quantum random walks in a phase space of an infinite dimensional Hamiltonian system by using the canonical Weyl representations of the group of shifts in the Hilbert space of functions on a position subspace of the phase space. The position subspace of the phase space is realized as a separable real Hilbert space $E$. The Hilbert space of Weyl representation consists of complex valued functions on the position space that are square integrable with respect to a shift-invariant measure. Since according to the A. Weil theorem~\cite{Weyl, Weil} there is no Lebesgue measure on an infinite dimensional Hilbert space, we introduce a special finitely-additive non-negative shift-invariant measure and develop the analysis of functions that are integrable with respect to this analog of the Lebesgue measure. For this, we study the space of complex-valued functions defined on $E$ that are square integrable with respect to some translation invariant measure on $E$. We show that in this space there is no unitary transformation that has all the properties of the Fourier transform. In Remark \ref{remark-fourier-can-exists}, however, it is indicated that such a unitary transformation can exist in the extension of this space.

A different approach is to consider shift-invariant generalized measures. The generalized measure is defined not as an additive function of the set, but as a linear functional on a suitable space of test functions. The use of a generalized shift-invariant measure allowed to define the Hilbert space of square integrable functions in which the unitary Fourier transform operates~\cite{SSh}.

The structure of this work is the following. In Sec.~\ref{Sec:Preliminary}, preliminary definitions and results are provided including adapted notions of random variables to vector- and operator-valued cases, properties of sesquilinear forms and symmetric operators. In Sec.~\ref{Sec:FiniteApproximation}, following \cite{OSZ-20} an evolution  
of a finite-dimensional quantum oscillator is approximated by a composition of random shifts in the coordinate and momentum representations. In Sec.~\ref{Sec:Shifts} we define a finite additive translation invariant measure on $E$ and the space of square integrable functions with respect to it, and study properties of the operators acting by shifting argument of these functions in the coordinate and momentum representations. In Sec.~\ref{Sec:Averaging} we study general properties of semigroups arising as averaging of operators of a random shift in coordinate and momentum representations. In Sec.~\ref{Sec:Generators} we study properties of generators of these semigroups. The key result here is that positive combination of such generators under some conditions is self-adjoint and generates a strongly continuous semigroup. This allows us to define a Hamiltonian of an infinite-dimensional quantum oscillator. In Sec.~\ref{Sec:Fourier} we prove the absence of the Fourier transform on the space of square integrable with respect to shift-invariant measure functions on E. However, partial Fourier transforms with respect to a finite number of variables are introduced and their properties are analysed. In Sec.~\ref{Sec:Taylor} we prove Taylor's formula for shift of argument of smooth functions (def. \ref{def-smooth-functions}) along a vector from a certain subspace in coordinate and momentum representations. We also study smoothness of the image of a function under action of operators of the averaging of random shifts. In Sec.~\ref{Sec:InfiniteApproximation} we prove that evolution of a diffusion process on $E$ and evolution of an infinite-dimensional quantum oscillator on $E$ can be approximated by a composition of random shifts in the coordinate and momentum representations. Thus, we generalize the results of section \ref{Sec:FiniteApproximation} to the infinite-dimensional case. Conclusions Sec.~\ref{Sec:Conclusions} summaries the results and discusses open problems. 

\section{Preliminary information}\label{Sec:Preliminary}

In this section we introduce some definitions and technical results. Particularly, we provide notions of random variables adapted to vector- and operator-valued cases in infinite-dimensional spaces and Chernoff theorem. We remind formal definition of Weyl representation. We also discuss a connections between densely defined sesquilinear forms and symmetric operators. Using the representation theorem, we show that if one sesquilinear form is bounded by another then the domains of the operators which represent these forms are nested. 

Let $H$ be a complex Hilbert space (possibly non-separable). Let $B(H)$ denote the Banach space of bounded linear operators on $H$ endowed with the operator norm. Let $Y_{\rm s}(\mathbb{R}_+, B(H))$ be the set of strongly continuous functions from $\mathbb{R}_+=[0,+\infty)$ to the space $B(H)$ of bounded operators on $H$. Topology $\tau_{\rm s}$ on $Y_{\rm s}(\mathbb{R}_+, B(H))$ is defined by the family of functionals $\phi_{T,u}$ where $u \in H$ and $T>0$, acting as
$$
\phi_{T,u} (F) = \sup\limits_{t \in [0,t]} \|F(t) u\|_{B(H)}.
$$

\subsection{Measures and functional spaces}

In \cite{S2016, SITO}, a finitely additive measure $\lambda$ was introduced on $E$, defined on the minimal ring ${\rm r}$ containing absolutely measurable blocks (so called "infinite-dimensional rectangles" whose products of edge lengths converge absolutely), which is invariant with respect to any shifts and orthogonal transformations of $E$. For each orthonormal basis ${\mathcal E}=\{e_j\}$ of $E$, the restriction $\lambda _{\mathcal E}$ of  measure $\lambda$ onto the minimal ring ${\rm r}_{\mathcal E}$ containing measurable blocks with edges, co-directional with the vectors of the basis $\mathcal E$, was investigated. Using the standard completion procedure, the measure $\lambda$ was expanded from the ring $\rm r$ to the ring $\lambda$-measurable sets ${\cal R}$, and the measure $\lambda_{\mathcal E}$ to the ring $\lambda_{\mathcal E}$-measurable sets ${\cal R}_{\mathcal E}$. The Hilbert space ${\mathcal H}=L_2(E,{\cal R} ,\lambda ,\mathbb{C})$ was introduced as equivalence classes of complex-valued functions $u: E \to \mathbb{C}$ that are square integrable with respect to $\lambda$ and, similarly, for each orthonormal basis (ONB) $\cal E$ of the space $E$ the Hilbert space 
${\mathcal H}_{\mathcal E}=L_2(E,{\cal R}_{\mathcal E},\lambda _{\mathcal E},\mathbb{C})$ was introduced. It was determined
that for any orthonormal basis $\mathcal E$, the space ${\mathcal H}_{\mathcal E}$ is not separable and is a closed subspace in ${\mathcal H}$. 

In \cite{S2016}, a unitary representation in the Hilbert space ${\cal H}_{\cal E}$ was obtained for
the Abelian group of vectors of the space $E$ by the unitary Abelian group ${\bf S}_{h}, \, h\in E,$ of shift operators on $E$. A diffusive semigroup of self-adjoint contractions ${\bf U}_{\gamma_t},\, t\geq 0$ was investigated, which arises when averaging a Gaussian random process given by a family of centered Gaussian measures $\gamma _t,\, t\geq 0,$ on the space $E$ forming a semigroup regarding the convolution operation $\gamma _t*\gamma _s=\gamma _{t+s},\, t,s\geq 0$. The criterion of strong continuity of the one-parameter shift group ${\bf S}_{th},\, t\in\mathbb R$, along a constant vector field and the criterion of strong continuity of the diffusion semigroup ${\bf U}_{\gamma_t},\, t\geq 0,$ in the space ${\cal H}_{\cal E}$ was obtained.

A Banach space of bounded linear operators in a Banach space $X$ is denoted by $B(X)$.

\subsection{Weyl representation of canonical commutation relations}

The phase space $F=E\oplus E$, endowed with the Euclidean norm, is an Abelian Banach topological group with respect to the addition of elements. For each ONB ${\cal E}=\{ e_k\}$ in the space $E$ the symbol $L_1({\cal E})$ denotes the subspace $\{ x\in E:\ \{ (x,e_k)\} \in \ell _1\}$ equipped with the new norm $\| x\|_{L_1({\cal E})}=\| \{ (x,e_k)\}\|_{\ell _1}$. Then, the subspace $F_1({\cal E})=L_1({\cal E}) \oplus L_1({\cal E})$ is a densely embedded in the Banach group $F$ subgroup that is Banach with respect to the new norm $\| \cdot \|_{L_1}$.

A dense on $F$ map $F_1({\cal E})\to B(H_{\cal E}(E))$ which associates to any vector $(h,a)\in F_1({\cal E})$ the unitary operator ${\bf W}_{h,a} \in B(H_{\cal E})$ is called a unitary Weyl representation of the Abelian group $F_1({\cal E})$ in the space $H_{\cal E}$ (\cite{RS2, H}) if the following conditions are satisfied
\begin{enumerate}[label=(\roman*)]
     \item The operator ${\bf W}(z)$ is unitary for each $z \in F_1({\cal E})$,
     \item For any $z_1,z_2 \in F_1$ the equality
     $$
      \qquad {\bf W}_{z_1}{\bf W}_{z_2}=\exp(i\omega (z_1,z_2))W_{z_1+z_2}
     $$
     holds, where $\omega $ is a symplectic form on $F$, 
     \item Convergence $\| z_n-z_0\|_{F_1}\to 0$ 
     as $n\to \infty$ implies  $\| ({\bf W}(z_n)-{\bf W}(z_0))f\|_{H_{\cal E}}\to 0$.
\end{enumerate}
\begin{align}
{\bf W}_{h_1,a_1}{\bf W}_{h_2,a_2}=\exp \Bigl\{-\frac{i}{2}\Bigl[(x_1,a_2)_E-(x_2,a_1)_E\Bigr]\Bigr\}{\bf W}_{h_1+h_2,a_1+a_2},\nonumber\\ (h_1,a_1), (h_2,a_2)
\in L_1 ({\cal E})\oplus L_1({\cal E}).\label{0}
\end{align}
CCR representation
(\ref{0})
can be realized by the mapping $(h,a)\to {\bf W}_{h,a}$ defined by the equality
\begin{equation}\label{1}
{\bf W}_{h,a}=\exp (\frac{i}{2}(h,a)_E){\bf S}_h{\bf V}_a,\ (h,a)\in L_1 ({\cal E}){\oplus} L_1({\cal E}).
\end{equation}
(see \cite{H}).
Note that the family of operators that implement canonical commutation relations (\ref{0}) generates an algebra of operators describing bosons~\cite{Bogo}, Chapter 6.4).

In \cite{OSZ-20}, in the case of a finite-dimensional Euclidean space $E$, random walks in the momentum space of a quantum system were investigated, given by compositions of independent identically distributed random multiplication operators ${\bf M}_h$ by an imaginary exponent $\exp (i(x,h))$ from the linear function $(\cdot ,h)$ on the coordinate space $E$, where $h$ is a Gaussian random variable with values in the space $E$. The mathematical expectation of shift operators on a random vector $h$ in the momentum space according to the Gaussian measure $\gamma _t$ on the space $E$ with a non-negative covariance operator $t{\bf D},\, t\geq 0,$ is a semigroup ${\bf A}_{\bf D}(t),\ t\geq 0,$ of multiplication operators by the function $\exp(-t({\bf D}x,x)),\, t\geq 0$~\cite{OSZ-20}. In this case, the Fourier transform on the space $L_2(E)$ implements the unitary equivalence of the shift operator ${\bf S}_h$ along the vector $h$ in coordinate space and the operator ${\bf M}_h$ shift along the same vector in momentum space, as well as the unitary equivalence of the semigroup ${\bf U}_{\gamma _t},\, t\geq 0,$ diffusion operators and semigroups ${\bf A}_{\bf D}(t),\, t\geq 0.$

In this paper, in the case of an infinite-dimensional Euclidean space $E$, the properties of the strong continuity of the semigroup ${\bf A}_{\bf D}(t),\,t\geq 0,$ of multiplication operators by the function $\exp(-t({\bf D}x,x)),\, t\geq 0$ are investigated as soon as properties of semigroups ${\bf U}_{\gamma _t},\, t\geq 0,$ of convolution operators with centered Gaussian measures $\gamma_{t{\bf D}},\, t\geq 0,$ whose images under Fourier transform are the functions $\exp(-t({\bf D}x,x)),\, t\geq 0$. The generator ${\bf W}_{\bf D}$ of the semigroup of self-adjoint contractions ${\bf A}_{\bf D}$ is investigated by the same methods as the generator of the semigroup ${\bf U}_{\gamma}$ is investigated in \cite{SITO}. It is shown that in the case of a separable real infinite-dimensional space $E$, there is no such unitary transformation of the space ${\cal H}_{\cal E}$ onto itself, which, with an arbitrary choice of the non-negative nuclear operator $\bf D$, implements the unitary equivalence of the semigroup ${\bf A}_{\bf D}(t),\, t\geq 0,$
and semigroups ${\bf U}_{\gamma _t},\, t\geq 0.$

\subsection{Random operator-valued variables}

Let $Y_{\rm s}(\mathbb{R}_+,B(H))$ be a topological vector space of strongly continuous mappings ${\mathbb R}_+\to B(H)$. 
The topology on the space  $Y_{\rm s}(\mathbb{R}_+,B(H))$ is defined by the family of functionals $\{ {\phi }_{t,u},\ t\in {\mathbb R}_+,\, u\in H\}$ where $\phi _{t,u}(F)=\| F(t)u\|_H,\ F\in  Y_{\rm s}(\mathbb{R}_+,B(H))$.

If $(\Omega,\Sigma,P)$ is some probability space, then we call a measurable map $F:\ \Omega\to Y_{\rm s}(\mathbb{R}_+,B(H))$ as a random operator-valued function. It means that $(F(t)u,v)$ is a complex valued random variable for every $t\in {\mathbb R}_+$ and any $u,v\in H$.  If its image is almost surely a semigroup (a unitary semigroup), then we call it a random semigroup
(a random unitary semigroup).

Since the space $H$ is separable, the sum and a product of any to random operator valued function $F,G$ are random operator values function, $F(t)u,\ t\in {\mathbb R}_+$ is a random vector valued function for every $u\in H$ and $\|F(t)\|_{B(H)},\ t\in {\mathbb R}_+$ is a random real valued function.

The expectation of a random operator $A:\ \Omega\to B(H)$ is defined in the sense of the Pettis integral:
$$
(x, \mathbb{E}(A_\omega)y) = \int\limits_\Omega (x, A_\omega y) dP(\omega) \quad \forall x,y \in H.
$$
The variance is defined as
$$
\mathbb{D}(A_\omega) = \mathbb{E} [(A - \mathbb{E} A_\omega)^*(A - \mathbb{E} A_\omega)].
$$
The independence of the random operators $A$ and $B$ is defined as the independence of $(x_1, A x_2)$, $(x_3, B x_4)$ for all vectors $x_1, x_2, x_3, x_4$.

For each random semigroup $U_\omega(t): \Omega \to Y_{\rm s}(\mathbb{R}_+, B(H))$ and $n\in {\mathbb Z}_+$ we define its averaging $\Av_n U(t): \Omega^n\to Y_{\rm s}(\mathbb{R}_+, B(H))$ as
$$
[\Av_n U]_{\omega_1, \ldots, \omega_n}(t) = U_{\omega_1}(t/n) \circ U_{\omega_2}(t/n) \circ \ldots \circ U_{\omega_n}(t/n).
$$ 
\begin{lemma}\label{expect-independent-unitary}(proved in~\cite{SS18feynman})
    Let $U_1, \ldots, U_n$ be independent random unitary operators. Then
    $$\mathbb{E} (U_1 \circ U_2 \circ \ldots \circ U_n) 
    = \mathbb{E} U_1 \circ \mathbb{E} U_2 \circ \ldots \circ \mathbb{E} U_n.$$
\end{lemma}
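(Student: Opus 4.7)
The plan is to proceed by induction on $n$, reducing everything to the two-operator case and to the definition of expectation via the Pettis integral. For $n=1$ the statement is immediate, and once the $n=2$ case is available, the inductive step follows because $U_2\circ\cdots\circ U_n$, built from measurable functions of $U_2,\dots,U_n$, is independent of $U_1$ in the sense provided (each matrix element $(x,(U_2\cdots U_n)y)$ is a limit of polynomials in the matrix elements of $U_2,\dots,U_n$, and such polynomials remain independent of every $(x',U_1 y')$).

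For the core case $n=2$, I would fix $x,y\in H$ and expand using an orthonormal basis $\{e_k\}$ of $H$ (the space is separable, so this is available). Writing $U_2 y=\sum_k (e_k,U_2 y)\,e_k$ and applying $U_1$, one gets
\begin{equation*}
(x,U_1 U_2 y) \;=\; \sum_{k}(e_k,U_2 y)\,(x,U_1 e_k).
\end{equation*}
Taking expectation and using independence of $(e_k,U_2 y)$ from $(x,U_1 e_k)$ termwise,
\begin{equation*}
\mathbb{E}(x,U_1 U_2 y) \;=\; \sum_{k}\mathbb{E}(e_k,U_2 y)\,\mathbb{E}(x,U_1 e_k)\;=\;\sum_{k}(x,\mathbb{E}(U_1)e_k)(e_k,\mathbb{E}(U_2)y),
\end{equation*}
and the last sum is precisely $(x,\mathbb{E}(U_1)\mathbb{E}(U_2)y)$ by Parseval in the basis $\{e_k\}$. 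Since $x,y$ are arbitrary, the Pettis integral identity $\mathbb{E}(U_1\circ U_2)=\mathbb{E}(U_1)\circ\mathbb{E}(U_2)$ follows.

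The main obstacle is justifying the interchange of expectation and summation on the right-hand side. Here unitarity is exactly what we need: for every $\omega$, Cauchy--Schwarz combined with Parseval gives
\begin{equation*}
\sum_{k}\bigl|(e_k,U_2(\omega)y)(x,U_1(\omega)e_k)\bigr|\;\le\;\|U_2(\omega)y\|\,\|U_1(\omega)^{*}x\|\;=\;\|y\|\,\|x\|,
\end{equation*}
a bound independent of $\omega$. Thus Fubini's theorem applies on the product probability space (and in turn dominated convergence on the partial sums), which legitimizes pulling $\mathbb{E}$ inside the sum and using independence of $U_1$ and $U_2$ termwise. The same bound also shows, at the start, that $\mathbb{E}(U_i)$ is a well-defined bounded operator (in fact a contraction) through the sesquilinear form $(x,y)\mapsto\int (x,U_i(\omega)y)\,dP(\omega)$, so that the right-hand side $\mathbb{E}(U_1)\circ\mathbb{E}(U_2)$ genuinely makes sense in $B(H)$.

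The inductive step for $n\ge 3$ is then a straightforward repackaging: apply the $n=2$ case to the pair $(U_1,\,U_2\circ\cdots\circ U_n)$ to obtain $\mathbb{E}(U_1\circ\cdots\circ U_n)=\mathbb{E}(U_1)\circ\mathbb{E}(U_2\circ\cdots\circ U_n)$, and apply the induction hypothesis to the second factor. The only point that requires a brief remark is that $U_2\circ\cdots\circ U_n$ is again a random unitary operator (composition of almost surely unitary operators), so the hypothesis of the $n=2$ case is met, and its independence of $U_1$ follows because the matrix elements $(x,(U_2\cdots U_n)y)$ are measurable with respect to $\sigma(U_2,\dots,U_n)$, which is independent of $\sigma(U_1)$ by assumption.
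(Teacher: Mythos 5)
Your argument is correct in substance, and it is worth saying up front that the paper itself contains no proof of this lemma at all---it is stated with a citation to \cite{SS18feynman}---so your write-up is a genuinely self-contained argument rather than a variant of anything in the text. The core $n=2$ computation is the natural one and works: the orthonormal expansion $(x,U_1U_2y)=\sum_k(e_k,U_2y)(x,U_1e_k)$, the uniform bound $\sum_k|(e_k,U_2(\omega)y)(x,U_1(\omega)e_k)|\le\|U_2(\omega)y\|\,\|U_1(\omega)^*x\|=\|x\|\,\|y\|$ from Cauchy--Schwarz, Parseval and unitarity, dominated convergence on the partial sums, termwise factorization of expectations, and reassembly using boundedness of $\mathbb{E}(U_1)$. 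Note that the termwise factorization uses exactly the paper's definition of independence of two random operators (independence of the scalar pair $(e_k,U_2y)$ and $(x,U_1e_k)$, both bounded), so the $n=2$ case is fully justified by the stated hypotheses. Observe also that unitarity enters only through the uniform norm bound, so your proof actually yields the lemma for independent random contractions.

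Two caveats deserve explicit mention. First, your inductive step needs strictly more than the paper's literal definition of independence: to conclude that $(x,(U_2\cdots U_n)y)$, a pointwise limit of polynomials in matrix elements of $U_2,\dots,U_n$, is independent of every $(x',U_1y')$, you need mutual independence of the generated $\sigma$-algebras $\sigma(U_1),\dots,\sigma(U_n)$ together with the grouping property, whereas the paper defines only pairwise independence of pairs of scalar matrix elements, which does not by itself give independence of $\sigma(U_1)$ from $\sigma(U_2,\dots,U_n)$. This stronger reading is clearly the intended one (in every application in the paper the $U_i$ are measurable functions of i.i.d.\ random vectors $h_1,\dots,h_m$), but you should state it as a hypothesis; it is also genuinely unavoidable, since a direct $n$-fold expansion would not rescue you---the multi-indexed series $\sum_{k_1,\dots,k_{n-1}}(x,U_1e_{k_1})(e_{k_1},U_2e_{k_2})\cdots(e_{k_{n-1}},U_ny)$ is in general not absolutely convergent, the middle factors being only square-summable in each index, so induction with grouping is forced. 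Second, separability: you invoke a countable orthonormal basis, consistently with the paper's own remark in that subsection, but the lemma is later applied on $\mathcal{H}_{\mathcal{E}}$, which the paper proves is non-separable. The argument survives there because the relevant random vectors are almost surely separably valued (e.g.\ $h\mapsto S_hf$ is continuous in the $L_1(\mathcal{E})$ norm and the Gaussian measure is concentrated on the separable subspace $L_1(\mathcal{E})$ when $D^{1/2}$ is nuclear), so $U_2(\omega)y$ lies a.s.\ in a fixed separable closed subspace and the expansion has only countably many nonvanishing terms; a sentence localizing the proof in this way would make it match its actual use in the paper.
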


The convolution of a measurable $n$-dimensional function with the measure $P$ is denoted as
$$
f(x) \star P := \int f(x-t) d P(t).
$$
In particular, if $P$ is defined by density $g(x)$, the expression above coincides with the convolution of functions: 
$$
f(x) \star g(x) := \int f(x-t)g(t) dt = \int f(t)g(x-t) dt.
$$
If $T(t), \, t\ge 0$ is a strongly continuous operator-valued function, then we call the function $(T(t/n))^n$ as $n$-th Chernov iteration. The following classical theorem allows to find the limit of such iterations at $n\to\infty$.

\begin{theorem}[Chernoff, \cite{engel-nagel}, Corollary 5.3 chapter III]
    Let function $V: \mathbb{R}_+ \to B(H)$ satisfy the conditions:
    \begin{enumerate}
        \item $V(0) = I$; 
        \item $\| V(t)^k \| \le M e^{wkt}$ for some $M,k\in\mathbb{R}$ and all $k\in\mathbb{N}, t\in\mathbb{R}_+$;   
        \item There exist a linear subspace $D\subset H$ and the operator $A$ on $D$ such that for all $x\in D$ 
        $$
        Ax = \lim\limits_{t\to 0} \dfrac{1}{t} (V(t)x - x);
        $$
        \item Either there exists such $\lambda_0 > w$ that both $D$ and $(\lambda_0 - A)D$ are dense in $H$, or the closure of $\bar{A}$ of the operator $A$ is a generator of some strongly continuous one-parameter semigroup.
    \end{enumerate}
    Then the closure $\bar{A}$ of the operator $A$ generates a strongly continuous semigroup $T(t): \mathbb{R}_+\to B(H)$ such that
    $$ 
    \lim\limits_{k\to\infty} \sup\limits_{t\in[0,T]} \|T(t) x - (V(t/k))^k x\|_H = 0
    $$
    for all $x\in H$ and the following inequality is satisfied
    $$
    \|T(t)\|_{B(H)} \le M e^{wt}.
    $$
\end{theorem}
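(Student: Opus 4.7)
The plan is to break the convergence $V(t/n)^n \to T(t)$ into two comparisons: first relate $V(t/n)^n$ to the bounded-operator exponential $S_n(t):=\exp\bigl(n(V(t/n)-I)\bigr)$ via the elementary Chernoff inequality, and then show $S_n(t) \to T(t)$ strongly by a Trotter--Kato approximation applied to the generators $B_n(t):=n(V(t/n)-I)$, which converge to $A$ on the core $D$ by condition~(3).

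First I would secure the existence of the limit semigroup. The second alternative in condition~(4) supplies $T(t)$ directly. In the first alternative, differentiating the bound in condition~(2) at $t=0$ on vectors in $D$ gives a dissipativity estimate for $A-w$ after a standard similarity that absorbs the constant $M$, and the density of both $D$ and $(\lambda_0-A)D$ provides the Hille--Yosida range condition, so that $\bar A$ generates a strongly continuous semigroup $T(t)$ of type $(M,w)$.

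For the Chernoff comparison, after the rescaling $V(s)\mapsto e^{-ws}V(s)/M$ which reduces matters to a family of contractions, one invokes the classical estimate
\begin{equation*}
\bigl\|C^n x - \exp\bigl(n(C-I)\bigr) x\bigr\| \le \sqrt{n}\,\|(C-I)x\|,
\end{equation*}
valid for any Hilbert-space contraction $C$, applied with $C=V(t/n)$. By condition~(3), for $x\in D$ one has $n(V(t/n)-I)x\to Ax$, hence $\sqrt{n}\,\|(V(t/n)-I)x\|=(t/\sqrt{n})\|B_n(t)x/t\|\to 0$, uniformly in $t$ on compact intervals. For the complementary step $S_n(t)\to T(t)$, the Trotter--Kato theorem applies once $B_n(t)x\to Ax$ on the core $D$ and $\|S_n(t)\|$ is uniformly bounded, which follows from condition~(2) via $\|S_n(t)\|\le e^{n(\|V(t/n)\|-1)}\le e^{n(Me^{wt/n}-1)}$, whose limit supplies the required bound after the rescaling.

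The main obstacle is propagating the pointwise convergence on $D$ to all of $H$, uniformly in $t\in[0,T]$. This is handled by an $\varepsilon/3$ density argument: approximate an arbitrary $x\in H$ by $x_\varepsilon\in D$, absorb $\|x-x_\varepsilon\|$ using the uniform bound $Me^{wt}$ that controls both $V(t/n)^n$ (from condition~(2)) and $T(t)$, and apply the Chernoff and Trotter--Kato estimates to $x_\varepsilon$. The uniformity in $t\in[0,T]$ follows from compactness of $[0,T]$ together with the strong continuity of all objects in $t$. The delicate accounting lies in verifying that the two rescalings required by the Chernoff step and the Trotter--Kato step can be performed simultaneously without losing the growth constants $M$ and $w$ promised by condition~(2).
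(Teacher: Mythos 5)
The paper does not actually prove this statement: Chernoff's theorem is quoted from Engel--Nagel (Corollary 5.3, Chapter III) as a classical tool, so the only meaningful comparison is with the textbook proof. Your skeleton --- the estimate $\bigl\|C^n x - e^{n(C-I)}x\bigr\| \le \sqrt{n}\,\|(C-I)x\|$ for contractions, combined with a Trotter--Kato approximation for $S_n(t)=e^{n(V(t/n)-I)}$, plus the observation that $\frac{1}{\sqrt n}\|B_n(t)x\|\to 0$ uniformly on $[0,T]$ for $x\in D$ --- is exactly that classical route, and for $M=1$ your plan is essentially complete. (Incidentally, the $\sqrt n$ lemma holds in any Banach space; nothing Hilbertian is used.)

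For general $M$, however, your handling of the constants contains genuine errors, and they sit precisely at the point you yourself flag as delicate. The rescaling $V(s)\mapsto e^{-ws}V(s)/M$ is a change of the operator, not of the norm, and it is incompatible with both halves of the argument: $\bigl(e^{-wt/n}V(t/n)/M\bigr)^n$ differs from $e^{-wt}V(t/n)^n$ by the factor $M^{-n}$, so the $\sqrt n$ lemma for the rescaled operator says nothing about $V(t/n)^n$; worse, $n\bigl(e^{-wt/n}V(t/n)/M - I\bigr)x = \frac{n}{M}\bigl[(e^{-wt/n}V(t/n)-I)x + (1-M)x\bigr]$ diverges for $M>1$, destroying the generator convergence of condition (3). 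The correct device is an equivalent renorming per fixed $t$ and $n$, namely $|||x||| := \sup_{k\ge 0} e^{-wkt/n}\|V(t/n)^k x\|$, in which $e^{-wt/n}V(t/n)$ is a genuine contraction; the $\sqrt n$ estimate then returns to the original norm at the cost of a single factor $M$. The same confusion infects two further steps. Your stability bound $\|S_n(t)\|\le e^{n(Me^{wt/n}-1)}$ diverges like $e^{n(M-1)}$ when $M>1$; one must instead insert the power bounds of condition (2) into the exponential series, giving $\|S_n(t)\|\le e^{-n}\sum_{k\ge 0}\frac{n^k}{k!}\|V(t/n)^k\|\le M e^{n(e^{wt/n}-1)}$, which is uniformly bounded on compacts. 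And the claim that condition (2) yields dissipativity of $A-w$ ``after a standard similarity that absorbs $M$'' is unfounded: no similarity absorbs $M$, a renorming built from the data controls only powers of a single $V(t)$ (mixed-time products are not bounded by (2)), and generators of $(M,w)$-type with $M>1$ need not be quasi-dissipative in the original norm; in the first alternative of (4), generation of $T$ must come from the Trotter--Kato theorem itself (stability of the semigroups generated by $B_n(t)/t$ plus the range condition on $(\lambda_0-A)D$). Finally, the plain Trotter--Kato theorem concerns a fixed sequence of generators, whereas your $B_n(t)$ depends on the same $t$ at which $S_n(t)$ is evaluated; the appeal to compactness of $[0,T]$ does not by itself close this two-parameter gap, though it is reparable because $B_n(t)x\to tAx$ uniformly in $t\in[0,T]$ for each $x\in D$.
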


We also use a theorem which, by the existence of the first $k$ moments of the $n$-dimensional distribution, guarantees the existence of the first $k$ derivatives of the characteristic function. The special case for one-dimensional distributions is described in theorem 1 of paragraph 2.12 \cite{shiryaev}. Below we give the proof of the general case which is close to the proof in~\cite{shiryaev}.
\begin{lemma}\label{lemma-momentum-charf}
    If for some random vector $X$ all moments
    $$
    m_{j_1, \ldots, j_m} = \mathbb{E} X_{j_1} \ldots X_{j_m}
    $$
    up to order $n$ are finite, then for its characteristic function $\phi(t) = \int e^{i(t,x)} dP(x)$ all mixed derivatives up to order  $n$ are  well defined, and
    \begin{equation}\label{eq-charf-der}
    \dfrac{\d^m}{\d t_{j_1} \ldots \d t_{jm}} \phi(t) = \int\limits_{\mathbb{R}^n} (i x_{j_1}) \ldots(i x_{j_m}) e^{i(t,x)} P(x), \quad m \le n.
    \end{equation}
    Thus,
    \begin{equation}\label{eq-momentum-by-charf}
    m_{j_1, \ldots, j_m} = i^m \dfrac{\d^m}{\d t_{j_1} \ldots \d t_{j_m}} \phi(0).
    \end{equation}
\end{lemma}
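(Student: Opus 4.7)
The plan is to proceed by induction on $m$, in each step justifying differentiation under the integral sign via the dominated convergence theorem, exactly as in the one-dimensional argument of~\cite{shiryaev}.

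Before starting the induction I would record the standard observation that finiteness of all mixed moments of order up to $n$ implies finiteness of the \emph{absolute} mixed moments of the same order: by Hölder's inequality applied coordinate by coordinate,
\[
\mathbb{E}\,|X_{j_1}\cdots X_{j_m}|\;\le\;\prod_{k=1}^{m}\bigl(\mathbb{E}\,|X_{j_k}|^{m}\bigr)^{1/m},
\]
and each factor on the right is controlled by the hypothesis (since $|y|^m\le 1+|y|^n$ up to a coordinate-wise AM--GM rearrangement when $m\le n$). This is what will supply the $P$-integrable dominating functions later.

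For the base case $m=1$, fix $t\in\mathbb{R}^n$ and a coordinate $j$, and form the difference quotient
\[
\frac{\phi(t+h e_j)-\phi(t)}{h}\;=\;\int_{\mathbb{R}^n}\frac{e^{ih x_j}-1}{h}\,e^{i(t,x)}\,dP(x).
\]
The integrand converges pointwise to $i x_j\,e^{i(t,x)}$ as $h\to 0$, and $\bigl|(e^{ih x_j}-1)/h\bigr|\le|x_j|$ by the elementary estimate $|e^{i\alpha}-1|\le|\alpha|$. Since $|x_j|$ is $P$-integrable by the first-moment hypothesis, dominated convergence gives $\partial_{t_j}\phi(t)=\int i x_j\,e^{i(t,x)}\,dP(x)$, which is (\ref{eq-charf-der}) for $m=1$.

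For the inductive step, assume (\ref{eq-charf-der}) holds for some $m-1<n$. Differentiating the resulting integral with respect to $t_{j_m}$ and applying the same argument, the integrand of the new difference quotient is bounded in modulus by $|x_{j_1}\cdots x_{j_m}|$, which is $P$-integrable by the preliminary observation together with the hypothesis on moments of order $m\le n$. Dominated convergence yields (\ref{eq-charf-der}) for order $m$. Finally, setting $t=0$ in (\ref{eq-charf-der}) gives
\[
\frac{\partial^{m}\phi}{\partial t_{j_1}\cdots\partial t_{j_m}}(0)=i^{m}\int_{\mathbb{R}^n} x_{j_1}\cdots x_{j_m}\,dP(x)=i^{m}m_{j_1,\ldots,j_m},
\]
which rearranges into (\ref{eq-momentum-by-charf}). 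The only real subtlety is the passage from finiteness of signed mixed moments to finiteness of absolute mixed moments needed to invoke dominated convergence; once this is in hand, every other step is a routine iteration of the scalar argument.
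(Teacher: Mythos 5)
Your proof is correct and follows essentially the same route as the paper's: induction on the order of the derivative, with the difference quotient dominated via $|e^{i\alpha}-1|\le|\alpha|$ by the function $|x_{j_1}|\cdots|x_{j_m}|$ and Lebesgue's dominated convergence theorem. Your preliminary H\"older-inequality step justifying integrability of the dominating function is the only addition --- the paper leaves this implicit (and it is in fact automatic, since finiteness of a moment in the Lebesgue sense already means absolute integrability) --- so it is harmless extra care rather than a different argument.
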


\begin{proof}
    Equality \ref{eq-momentum-by-charf} follows directly from \ref{eq-charf-der}. 
    
    We prove  \ref{eq-momentum-by-charf} by induction.
    
    According to the definition of derivative, we have
    $$
     \dfrac{\d}{\d t_{j_{m+1}}} 
         \dfrac{\d^{m}}{\d t_{j_1} \ldots \d t_{j_m}} \phi(t) 
         = \lim\limits_{\delta \to 0} \dfrac{1}{\delta} \left(
            \dfrac{\d^{m}}{d t_{j_1} \ldots \d t_{j_m}} \phi(t + \delta e_{j_{m+1}}) 
            - \dfrac{\d^{m}}{d t_{j_1} \ldots \d t_{j_m}} \phi(t)
         \right),
     $$
     where according to the base of induction,
     $$\dfrac{\d^{m}}{\d t_{j_1} \ldots \d t_{j_m}} \phi(t) =  \int\limits_{\mathbb{R}^n} (i x_{j_1}) \ldots (i x_{j_m}) e^{i(t,x)} P(x).
    $$
    Substituting, we get
    $$
     \dfrac{\d}{\d t_{j_{m+1}}} 
         \dfrac{\d^{m}}{\d t_{j_1} \ldots \d t_{j_m}} \phi(t) 
         =  \lim\limits_{\delta \to 0}   \int\limits_{\mathbb{R}^n} \dfrac{1}{\delta} (i x_{j_1}) \ldots (i x_{j_m}) e^{i(t,x)} (e^{i\delta x_{j_{m+1}}} - 1) P(x).
    $$
    In the last expression, one can rearrange the limit with an integral according to Lebesgue's dominated convergence theorem. Indeed, the expression under the integral sign can be evaluated uniformly over $\delta$ as
    $$
    \left| \dfrac{1}{\delta} (i x_{j_1}) \ldots (i x_{j_m}) e^{i(t,x)} (e^{i\delta x_{j_{m+1}}} - 1)\right| \le 
    \dfrac{1}{\delta} |x_{j_1}| \ldots |x_{j_m}| \cdot |\delta x_{j_{m+1}}|
    \le |x_{j_1}| \ldots |x_{j_{m+1}}|
    $$
    since the inequality $|e^{ia} - 1| \le a$ is true for any real $a$.
    
    Considering that
    $$
    \lim\limits_{\delta \to 0} \dfrac{1}{\delta} (e^{i\delta x_{j_{m+1}}} - 1) = i x_{j_{m+1}}
    $$
    we obtain the required equality.
\end{proof}

The following lemma is a consequence of statement 12.2.8 of
\cite{BogachevSmolyanov}

\begin{lemma}(mean-value theorem)
    If $f \in C^1(\mathbb{R}_+, H)$, where $H$ is a complex or real Hilbert space (not necessarily separable), then
    $$
    \|f(b) - f(a)\| \le (b-a) \sup\limits_{t \in [a,b]} \|f'(t)\|.
    $$
\end{lemma}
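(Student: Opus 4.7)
The plan is to reduce the Hilbert-space-valued statement to a scalar-valued fundamental theorem of calculus by testing against a single well-chosen vector. Set $v = f(b) - f(a) \in H$. If $v = 0$ the inequality is trivial, so I assume $v \neq 0$. Define the scalar-valued function $\varphi : [a,b] \to \mathbb{C}$ by $\varphi(t) = (v, f(t))_H$. Since $f \in C^1(\mathbb{R}_+, H)$ and the inner product $(v, \cdot)_H$ is a bounded (hence continuous, in fact differentiable along any path) linear functional on $H$, the scalar function $\varphi$ is $C^1$ with derivative $\varphi'(t) = (v, f'(t))_H$.

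Next I would apply the usual fundamental theorem of calculus to the real and imaginary parts of $\varphi$ separately, yielding
\begin{equation*}
\varphi(b) - \varphi(a) = \int_a^b \varphi'(t)\, \d t = \int_a^b (v, f'(t))_H\, \d t.
\end{equation*}
On the other hand, $\varphi(b) - \varphi(a) = (v, f(b) - f(a))_H = (v,v)_H = \|v\|^2$. Equating the two expressions and applying the Cauchy--Schwarz inequality inside the integral,
\begin{equation*}
\|v\|^2 = \left| \int_a^b (v, f'(t))_H\, \d t \right| \le \int_a^b \|v\| \cdot \|f'(t)\|\, \d t \le (b-a)\,\|v\| \sup_{t \in [a,b]} \|f'(t)\|.
\end{equation*}
Dividing both sides by $\|v\| > 0$ yields the desired bound $\|f(b) - f(a)\| \le (b-a)\sup_{t\in[a,b]} \|f'(t)\|$.

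There is no serious obstacle here; the only thing to be careful about is handling the complex-Hilbert-space case correctly, where one should either split $\varphi$ into real and imaginary parts before invoking the scalar fundamental theorem of calculus, or else note that the Bochner/Riemann integral of the continuous $H$-valued function $f'$ exists and commutes with the bounded functional $(v, \cdot)_H$. Both routes give the same identity $f(b) - f(a) = \int_a^b f'(t)\, \d t$ as elements of $H$, from which the stated inequality follows immediately upon taking norms and using $\|\int_a^b f'(t)\, \d t\| \le \int_a^b \|f'(t)\|\, \d t$; the scalarization argument above is just a way to obtain this without having to first develop the theory of $H$-valued integrals, which is why it is well suited to the setting where $H$ need not even be separable.
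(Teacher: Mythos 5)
Your proof is correct. One point of comparison is forced here: the paper does not actually prove this lemma at all --- it simply states it as a consequence of statement 12.2.8 of \cite{BogachevSmolyanov}, where the mean-value inequality is established for vector-valued functions in the general Banach-space setting (the standard argument there reduces to the scalar case by testing $f(b)-f(a)$ against a norming functional supplied by the Hahn--Banach theorem, or by estimating a vector-valued integral). Your argument is the Hilbert-space specialization of exactly that duality idea, but with the decisive simplification that in a Hilbert space the norming functional for $v = f(b)-f(a)$ is explicit, namely $(v,\cdot)_H$ up to normalization: this lets you bypass Hahn--Banach entirely and, as you note yourself, avoid developing any theory of $H$-valued Riemann or Bochner integration, which is precisely why separability of $H$ never enters. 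Two small remarks: the identity $\varphi(b)-\varphi(a)=(v,v)_H=\|v\|^2$ and the formula $\varphi'(t)=(v,f'(t))_H$ hold under either convention for the slot in which the inner product is conjugate-linear, since $t$ is real and additivity in the second slot is unaffected by conjugation, so your argument is convention-independent; and the finiteness of $\sup_{t\in[a,b]}\|f'(t)\|$ follows from continuity of $f'$ on the compact interval, though the inequality is vacuously true even without it. What the paper's citation buys is generality (the result holds in any normed space) and brevity; what your proof buys is a short self-contained argument tailored to the Hilbert setting the paper actually works in.
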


The next lemma is a generalization of the classical result from real analysis.

\begin{lemma}\label{lemma-uniform-der-conv}
Let $f_n(t), A(t):\mathbb{R}_+\to H$, where $H$ is a real or complex Hilbert space, and all $A_n(t)$ are continuously differentiable in $t$. Let $A_n(t)$ converge to $A(t)$ pointwise, and assume that for some function $B:\mathbb {R}_+\to(\mathcal{H})$ the convergence holds:
    $$
    \forall T > 0: \quad \lim\limits_{m\to \infty} \sup\limits_{t \in [0,T]} \| A_m'(t) - B(t)\| = 0.
    $$
Then $A(t)$ is continuously differentiable, $A'(t) = B(t)$, and
    $$
    \forall T > 0: \quad \lim\limits_{m\to \infty} \sup\limits_{t \in [0,T]} \| A_m(t) - A(t)\| = 0.
    $$
\end{lemma}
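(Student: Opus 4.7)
The plan is to invoke the fundamental theorem of calculus for continuously differentiable $H$-valued functions, which lets us trade derivative information for integral information. Since each $A_m$ is $C^1$, we can write
\begin{equation*}
A_m(t) = A_m(0) + \int_0^t A_m'(s)\,\mathrm{d}s,
\end{equation*}
where the integral is the Bochner (or Riemann) integral of a continuous $H$-valued function. First I would note that $B$ is continuous on every $[0,T]$ as a uniform limit on $[0,T]$ of the continuous functions $A_m'$, so the integral $\int_0^t B(s)\,\mathrm{d}s$ exists as a $C^1$ $H$-valued function of $t$ with derivative $B(t)$.

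Next I would pass to the limit $m\to\infty$ in the identity above. Fixing $T>0$ and $t\in[0,T]$, the uniform convergence $A_m'\to B$ on $[0,T]$ gives
\begin{equation*}
\Bigl\|\int_0^t A_m'(s)\,\mathrm{d}s - \int_0^t B(s)\,\mathrm{d}s\Bigr\| \le t\sup_{s\in[0,T]}\|A_m'(s)-B(s)\| \longrightarrow 0,
\end{equation*}
while $A_m(0)\to A(0)$ and $A_m(t)\to A(t)$ by the pointwise hypothesis. Therefore $A(t)=A(0)+\int_0^t B(s)\,\mathrm{d}s$, which together with continuity of $B$ shows that $A\in C^1(\mathbb{R}_+,H)$ with $A'(t)=B(t)$.

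Finally, for the uniform convergence claim, I would subtract the two integral representations to obtain
\begin{equation*}
A_m(t)-A(t) = \bigl(A_m(0)-A(0)\bigr) + \int_0^t \bigl(A_m'(s)-B(s)\bigr)\,\mathrm{d}s,
\end{equation*}
so that for every $T>0$ and every $t\in[0,T]$,
\begin{equation*}
\|A_m(t)-A(t)\| \le \|A_m(0)-A(0)\| + T\sup_{s\in[0,T]}\|A_m'(s)-B(s)\|.
\end{equation*}
The right-hand side is independent of $t$ and tends to $0$ as $m\to\infty$, yielding the desired $\sup_{t\in[0,T]}\|A_m(t)-A(t)\|\to 0$.

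The only technical point worth flagging is that the proof uses an $H$-valued integral, but this is unproblematic because the integrands $A_m'$ and $B$ are continuous on compact intervals, so the Riemann integral in $H$ is defined and satisfies the norm estimate $\|\int_a^b g(s)\,\mathrm{d}s\|\le(b-a)\sup_{[a,b]}\|g(s)\|$; alternatively one can bypass the integral entirely by applying the mean-value lemma stated just above to $A_m-A_n$ to get a Cauchy estimate and then to $A_m(t)-A_m(s)-(t-s)B(s)$ to identify the derivative, but the FTC route is cleaner.
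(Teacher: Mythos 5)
Your proof is correct, and it takes a genuinely different route from the paper's. The paper avoids vector-valued integration altogether: it relies solely on the mean-value inequality (the lemma quoted just before the statement), introducing the auxiliary functions $F(s)=A_n(a+s(x-a))-sA_n'(a)(x-a)$ and $G(s)=A(a+s(x-a))-A_n(a+s(x-a))$, estimating these via $\sup\|F'\|$ and $\sup\|G'\|$, and then passing to the pointwise limit in $n$ to identify $A'(a)=B(a)$ at each point; uniform convergence of $A_n$ to $A$ is extracted from a second mean-value application. Your fundamental-theorem-of-calculus route instead requires setting up the $H$-valued Riemann (or Bochner) integral of a continuous function on a compact interval --- a standard and unproblematic step, as you correctly flag, with the needed estimate $\bigl\|\int_a^b g(s)\,\d s\bigr\|\le (b-a)\sup_{[a,b]}\|g\|$ --- but once that is in place, the single identity $A(t)=A(0)+\int_0^t B(s)\,\d s$ delivers both the differentiability claim and the uniform bound $\|A_m(t)-A(t)\|\le\|A_m(0)-A(0)\|+T\sup_{s\in[0,T]}\|A_m'(s)-B(s)\|$ in one stroke. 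What each approach buys: the paper's argument is more elementary in its toolkit (no integral is ever constructed, only the mean-value inequality), which matches the self-contained style of its preliminaries; your argument is shorter and more transparent, and your final uniform estimate is in fact cleaner than the paper's corresponding step, which fixes a base point somewhat loosely (its displayed bound involves $\|A(a)-A_n(0)\|$, evidently a typo for $\|A(a)-A_n(a)\|$, and the concluding sentence conflates pointwise and uniform ingredients, though the argument is repairable by fixing $a=0$). Your parenthetical observation that one could instead apply the mean-value lemma to $A_m-A_n$ and to $A_m(t)-A_m(s)-(t-s)B(s)$ is essentially the paper's proof.
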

\begin{proof}
Fix $T > 0$, select $a \in [0,T]$ and $x\in [0,T] \setminus\{a\}$. Consider the auxiliary function
$$
F(s) = A_n(a + s(x-a)) - s A'_n(a)(x-a), \quad s \in [0,1].
$$
Its derivative is
$$
F'(s) = (x-a) A'_n(a+s(x-a)) - A'_n(a)(x-a).
$$
According to the mean-value theorem,
    $$
    \|F(1) - F(0)\| \le \sup\limits_{s\in [0,1]} \|F'(s)\| 
    = (x-a) \sup\limits_{s\in [0,1]} \|A'_n(a) - A'_n(a + s(x-a))\|.
    $$
Now apply the triangle inequality:
\begin{align*}
\|F(1) - F(0)\|\le& (x-a) \sup\limits_{s\in [0,1]}  
\|A'_n(a) - B(a)\|\\
&+ \sup\limits_{s\in [0,1]}
        \|B(a) - B(a + s(x-a))\|\\
&+\sup\limits_{s\in [0,1]}\|B(a + s(x-a)) - A'_n(a + s(x-a))\|.
\end{align*}
Now perform the pointwise limit as $n\to\infty$ and substitute the values $F(0)$ and $F(1)$:
    $$
    \|A(x) - A(a) - B(a)(x-a)\| \le (x-a) \sup\limits_{s\in [0,1]}   \|B(a) - B(a + s(x-a))\|.
    $$
    Since $B(s)$ is a uniform limit of continuous functions, it is continuous and
    $$
    \lim\limits_{x \to a} \sup\limits_{s\in [0,1]}   \|B(a) - B(a + s(x-a))\| = 0.
    $$
    Thus, 
    $$
    \lim\limits_{x \to a} \|A(x) - A(a) - B(a)(x-a)\| = 0
    $$
     and $A'(a) = B(a)$.

Now let us prove that $A_n(t)$ converges to $A(t)$ uniformly over the segments.

Consider the auxiliary function
     $$
     G(s) = A(a + s(x-a)) - A_n(a + s(x-a)).
     $$
     It is differentiable and
     $$
     G'(s) = (x-a) A'(a+s(x-a)) - (x-a) A'_n(a+s(x-a)).
     $$
     Using the mean-value theorem again, we get
     $$
     \|G(1) - G(0)\| \le \sup\limits_{s\in [0,1]} \|G'(s)\|.
     $$
     Therefore
\begin{align*}
\|A(x) - A_n(x)\| &\le \| A(x) - A_n(x) - (A(a) - A_n(0))\| + \|A(a) - A_n(0))\| \\
&=\|G(1) - G(0)\| + \|A(a) - A_n(0))\|\\
&=|x-a| \sup\limits_{s\in[0,1]} \|A'(a+s(x-a)) - (x-a) A'_n(a+s(x-a))\| + \|A(a) - A_n(0))\|.
\end{align*}
Since $A'_n$ converges to $A_n$ uniformly over the segments, the norm $\|A(x) - A_n(x)\|$ converges to zero uniformly over $x$.
\end{proof}

\subsection{Sesquilinear forms}
A sesquilinear bilinear form (for more details, see, for example, Chapter VI of \cite{kato72}) $t:H\times H\to\mathbb{C}$ over a complex Hilbert space $H$ is a map defined on the linear subspace $D=D(t)\subset H$, which is linear in the first variable and semi-linear in the second variable, that is, for any $x,y\in D$ the form $t(x,y)$ is defined, and for any $\alpha,\beta\in\mathbb{C}, x,y,z\in D$ one has
\begin{align*}
    t(\alpha x + \beta y, z) &= \alpha t(x,z) +\beta t(y,z), \\
    t(x, \alpha y + \beta z) &= \bar{\alpha} t(x,y) + \bar{\beta} t(x,z).
\end{align*}
If $D(t) = H$, then the form is bounded, i.e. $|t(x,y)|\le C\|x\|_H\|y\|_H$ for some $C> 0$. A form is called densely defined if $D$ is dense in $H$. We assume that all considered in this work forms are densely defined, unless explicitly stated otherwise. The form is called symmetric if $t(x,y) = \overline{t(y,x)}$ for all $x,y\in D$. We denote the quadratic form $t[x] = t(x,x)$ by the same symbol as the bilinear form. Since in complex case the bilinear form is uniquely restored by the quadratic one via polarization identity, this should not lead to a confusion.

Two forms are called equal if their have the same domain and equal values on this domain. The form $t_2$ is called an extension of $t_1$ if $D(t_1)\subset D(t_2)$, and the restriction of $t_2$ to $D(t_1)$ coincides with $t_1$. In this case, we write $t_1 \subset t_2$.

The form is called bounded from below if $t[x]\ge\gamma\|x\|_H$, which is written as $t\ge\gamma$. If $t \ge 0$, then the form is called positive-definite. If by the form $t + \gamma, \gamma \in \mathbb{R}$ we mean the form acting according to the rule $(t+\gamma)(x,y) = t(x,y) + \gamma(x,y)_H$, then $t\ge \gamma$ is equivalent to the positive definiteness of the form $t - \gamma$.

The sequence $\{x_n\}\subset H$ is called $t$-converging to $x\in H$ if $\{x_n\}\subset D(t)$, $t_n\to t$ and $t[x_n - x_m]\to 0$ when $n,m \to \infty$. The form $t$ is called closed if $x_n \xrightarrow{t}x$ entails $x\in D(t)$ and $t[x_n - x] \to 0 $.
A form is called closable if it has a closed extension. In this case, among its closed extensions there is a minimal  by inclusion form, which is called the closure of the form $t$ and denoted $\bar t$.
If the symmetric form $t$ is defined as $t(x,y) = (Ax,y)$, where $A$ is symmetric, and $D(t) = D(A)$, then $t$ is
closable (\cite{kato72}, chapter VI, vol. 1.27).

The representation theorem (\cite{kato72}, sec. 2.1 and sec. 2.6) guarantees that for any closed symmetric positively and densely defined form $t$ there exists a self-adjoint operator $T$ such that $D(T)\subset D(t)$, and for all $x \in D(T), y \in D(t)$ the condition $t(x,y) = (Tx,y)_H$ is satisfied. Moreover, $D(T)$ is an essential domain of $t$, that is, $t$ is the closure of its restriction to $D(T)$.

Let $T$ be a symmetric densely defined operator, and $T\ge\gamma$. Then the form $(t -\gamma)(x,y) =(Tx,y) -\gamma(x,y)_H$ is closed and positive-definite, and there exists a self-adjoint operator $T_\gamma$ representing the form $t -\gamma$. The operator $T_\gamma +\gamma$ is self-adjoint and does not depend on the choice of $\gamma$. It is called the Friedrichs extension of the operator $T$ and is denoted by $T_{\rm f}$.

We write $t_1 \le t_2$ if $D(t_2)\subset D(t_1)$, and $t_2 - t_1$ is positively defined.

If $t(x,y) =(Sx,Sy), D(t)=D(S)$ for some operator $S$, then $t$ is always non-negative and symmetric, and $t$ is closed if and only if the operator $S$ is closed (\cite{kato72}, chapter VI, examples 1.3 and 1.13).

\begin{theorem}\label{th-repr-operators-incl}
    Consider two closed densily defined positive defined symmetric forms $t_1$ and $t_2$ such that for some positive constant $C$ the inequality $t_1\le C t_2$ holds. Let $A_1$ and $A_2$  be the operators representing $t_1$ and $t_2$ respectively. Then $D(A_2) \subset D(A_1)$.
\end{theorem}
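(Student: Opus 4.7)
The plan is to apply the first representation theorem of Kato, which characterises membership in $D(A_i)$ as follows: $x \in D(A_i)$ if and only if $x \in D(t_i)$ and the antilinear functional $y \mapsto t_i(x,y)$ on $D(t_i)$ is continuous in the $H$-norm. Since the hypothesis $t_1 \le Ct_2$ gives the inclusion of form domains $D(t_2) \subset D(t_1)$, any $x \in D(A_2)$ already satisfies the form-domain requirement for $D(A_1)$, and it remains only to prove the $H$-continuity of $y \mapsto t_1(x, y)$ on $D(t_1)$.

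The next step is to exploit the Hilbert space structure on the form domains. By the closedness and positive-definiteness assumptions, each $(D(t_i), t_i)$ is a genuine Hilbert space. The form inequality means that the identity embedding $J: (D(t_2), t_2) \hookrightarrow (D(t_1), t_1)$ is bounded with $\|J\| \le \sqrt C$. Its Hilbert-space adjoint $J^*: (D(t_1), t_1) \to (D(t_2), t_2)$ is then bounded and, unwinding the definition of adjoint, satisfies the duality identity
\[
t_1(u, y) = t_2(u, J^* y), \qquad u \in D(t_2),\ y \in D(t_1).
\]
Specialising to $u = x \in D(A_2)$ and invoking the representation theorem for $t_2$ on the right, one obtains $t_1(x, y) = (A_2 x, J^* y)_H$, whence the pointwise bound $|t_1(x, y)| \le \|A_2 x\|_H \cdot \|J^* y\|_H$.

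The main obstacle I expect is the final estimate: controlling $\|J^* y\|_H$ uniformly by $\|y\|_H$, since a priori $J^*$ is only bounded in the form-norms, not in the ambient $H$-norm, and the chain $\|J^* y\|_H \lesssim \|J^* y\|_{t_2} \lesssim \|y\|_{t_1}$ only recovers the trivial Cauchy--Schwarz estimate. To overcome this I would try to identify $J^*$ with a concrete operator built from $A_1$ and $A_2$: testing the duality against elements of $D(A_1)$ suggests $J^* = A_2^{-1} A_1$ on $D(A_1)$, and then $A_2^{-1}$, being a bounded self-adjoint operator on $H$ by the positive-definiteness of $t_2$, should allow one to extend $J^*$ by the closed graph theorem to a bounded operator on $H$. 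Once the $H$-bound $\|J^* y\|_H \le C' \|y\|_H$ is secured, the chain $|t_1(x, y)| \le C' \|A_2 x\|_H \|y\|_H$ shows that $y \mapsto t_1(x, y)$ is $H$-continuous on $D(t_1)$, and the representation theorem for $t_1$ then yields $x \in D(A_1)$ with the quantitative estimate $\|A_1 x\|_H \le C' \|A_2 x\|_H$.
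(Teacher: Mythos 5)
Your reduction is sound as far as it goes: by the first representation theorem, for $x\in D(A_2)\subset D(t_2)\subset D(t_1)$ it indeed suffices to prove $H$-continuity of $y\mapsto t_1(x,y)$ on $D(t_1)$; the embedding $J$ with $\|J\|\le\sqrt C$, the duality $t_1(u,y)=t_2(u,J^*y)$, and the identification $J^*=A_2^{-1}A_1$ on $D(A_1)$ are all correct. But the step you flag as the "main obstacle" is a genuine, unclosable gap, and the proposed remedy is circular. The closed graph theorem cannot be invoked: $J^*$ is defined only on $D(t_1)$, and to extend it to a bounded operator on all of $H$ you would first need exactly the bound you are trying to prove. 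Worse, the estimate $\|A_2^{-1}A_1y\|_H\le C'\|y\|_H$ on $D(A_1)$ is \emph{equivalent} to the theorem's conclusion: if it holds, then for all $v\in D(A_2)$ and $y\in D(A_1)$ one gets $|(A_1y,v)_H|\le C'\|y\|_H\|A_2v\|_H$, whence $v\in D(A_1^*)=D(A_1)$ with $\|A_1v\|\le C'\|A_2v\|$; conversely $D(A_2)\subset D(A_1)$ gives the bound by the closed graph theorem. So the remaining step \emph{is} the theorem, and no progress has been made.

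Moreover, the gap cannot be repaired, because the statement is false at this level of generality: a form inequality $t_1\le Ct_2$ yields only the inclusion of \emph{form} domains, equivalently $D(A_2^{1/2})\subset D(A_1^{1/2})$ with $\|A_1^{1/2}A_2^{-1/2}\|\le\sqrt C$, not of operator domains. Concretely, on $H=L^2(0,1)$ take $t_1(u,v)=\int_0^1(u'\bar v'+u\bar v)\,dx$ with $D(t_1)=H^1(0,1)$ and $t_2$ the same expression with $D(t_2)=H^1_0(0,1)$. Both forms are closed, densely defined and positive definite, and $t_1\le t_2$ in the paper's sense (with $C=1$). Here $A_1$ is the Neumann realization of $-u''+u$ and $A_2$ the Dirichlet one, and $u(x)=x(1-x)$ lies in $D(A_2)=H^2\cap H^1_0$ but not in $D(A_1)$, since $u'(0)\ne 0$. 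For what it is worth, the paper's own proof breaks at the analogous spot: for $y=(E_2(b)-E_2(a))x$ it evaluates $t_1[y]$ as $\int_a^b\lambda\,d(E_1(\lambda)x,x)$, which tacitly assumes the spectral measures of $A_1$ and $A_2$ commute; this is unjustified in general. In the application actually made in Theorem \ref{th-sum-core} the two generators act on commuting (tensor-factor) structures, where the domain inclusion can be salvaged by a direct argument, but neither your route nor the paper's establishes the theorem as stated.
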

\begin{proof}
    Since $t_1$ and $t_2$ are symmetric, closed, and positive-definite, we can use the representation theorem, and the operators $A_1$ and $A_2$ are defined correctly and uniquely.
    In addition, the operators $A_1$ and $A_2$ are self-adjoint and positively defined.

    Using the spectral theorem, represent both operators as
    $$
    A_i = \int\limits_\mathbb{R} \lambda d E_i(\lambda),
    $$
    where $d E_i(\lambda)$ is a spectral measure (\cite{kato72} \S 6.5).

    The domain of each operator is the set of all $x\in H$ such that
    $$
    \int\limits_\mathbb{R} \lambda^2 d (E_i(\lambda) x, x) < \infty .
    $$
    Since the operators $A_i$ are positive-definite, then $E_i(0) = 0$, and for any $x\in D(A_i)$ the following operators are well defined
    $$
    A_i^{1/2} = \int\limits_\mathbb{R} \sqrt{\lambda} d E_i(\lambda) .
    $$
    Both operators $A_i^{1/2}$ are symmetric by definition. The domains of their closures coincide with the domains of the forms $t_i$. Indeed, $x \in D(t_i)$ if there exists a sequence $\{x_n\} \in (A_i)$ such that $x_k \xrightarrow{t} x$, that is
    \begin{equation}\label{t-conv}
        \|x_n - x\|_H \to 0,\quad t_i(x_n - x_m) \to 0, \quad n,m \to 0.
    \end{equation}
    On the other hand, $x \in D(\bar A_i^{1/2})$ if there is a sequence $\{x_n\}\inf(A_i)$ such that
    \begin{equation}\label{A-conv}
        \|x_n - x\|_H \to 0,\quad \|A^{1/2}_i(x_n - x_m)\|_H \to 0, \quad n,m \to 0.
    \end{equation}
    At the same time, it is true for $x_n, x_m \in D(A_i)$ that 
    $$
    t_i(x_n - x_m) 
    = (A_i(x_n - x_m), x_n - x_m)_H 
    = \|A^{1/2}_i(x_n - x_m)\|_H
    $$
    which gives the equivalence of the conditions \ref{t-conv} and \ref{A-conv}.

    If we consider an arbitrary element $x\in D(A_2)$ and an arbitrary segment $[a, b]$, then $y= E_2(b) x - E_2(a)x$ both lie in the domain  of $A_2$ and in the domain of the forms $t_i$, that is, in the domain
    of the operators $\bar A_i^{1/2}$. Additionally, the following equality holds:
    $$
    t_1[y] 
    = \int\limits_a^b \lambda d (E_1(\lambda) x, x) 
    \le C \int\limits_a^b \lambda d (E_2(\lambda) x, x)
    = C t_2[y].
    $$
    Arbitrariness of $a$ and $b$ implies that the measure $d\mu(x) = d C(E_2(\lambda)x,x) - d(E_1(\lambda)x, x)$ is positively defined. Therefore,
    $$
    \int\limits_\mathbb{R} \lambda^2 d (E_1(\lambda) x,x) 
    = C \int\limits_\mathbb{R} \lambda^2 d (E_2(\lambda) x,x) 
    - \int\limits_\mathbb{R} \lambda^2 d \mu(x) < + \infty
    $$
    and, by definition, $x\in D(A_1)$.
\end{proof}

\section{Approximation of the evolution of a finite-dimensional quantum oscillator with random walks}
\label{Sec:FiniteApproximation}

Before proceeding to the infinite-dimensional case, we present, for the convenience of the reader, the finite-dimensional results published in \cite{OSZ-20}. We show how evolution of a diffusion process and evolution of a quantum oscillator can be approximated by a composition of random shifts in the coordinate and momentum representations.

\begin{definition}\label{def-sol}
    We say that the semigroup $\mathfrak{U}(t): L_2(\mathbb{R}^n)\to L_2(\mathbb{R}^n)$ represents the solution of the partial differential equation
    $$
    \dfrac{\d}{\d t} u(t,x) = D u(t,x), \quad t \ge 0, x \in \mathbb{R}^n,
    $$
    where $D$ is some differential operator, with the initial condition
    $$
    u(0,x) = u_0(x)
    $$
    if the function
    $$
    u(t,x) = \mathfrak{U}(t) u_0(x)
    $$
    is its solution.
\end{definition}

\begin{definition}
    Denote by $S_{th}: L_2(\mathbb{R}^n)\to L_2(\mathbb{R}^n),\, t\ge 0, h\in\mathbb{R}^n$ the unitary group of shift of the function argument along vector $h$
    $$
    S_{th} f(x) = f(x - th)
    $$
    and denote by $\widehat S_{th}$ its Fourier transform, which we call a shift by the vector $h$ in the momentum representation:
    $$
    \widehat S_{th} g(y) = e^{it(h,y)} g(y).
    $$
\end{definition}

The generator of $S_{th}$ is the differentiation operator along the direction $h$, and the generator of $\widehat S_{th}$ is the operator of multiplication by the function $i(h,y)$. Both of these operators are unbounded, but according to Stone's theorem they are defined on dense subspaces and are the product of self-adjoint operators and a complex unit.

\begin{lemma}\label{lemma-finite-der-at-zero}
Let $h: \Omega\to \mathbb{R}^n$ be a random vector with zero expectation and variance $D$. Then the operator-valued functions $\mathbb{E}_h S_{\sqrt{t}h}$ and $\mathbb{E}_h\widehat S_{\sqrt{t}h}$ are differentiable at zero, and for any $f \in L_2(\mathbb{R}^n)$
\begin{align*}
\left.\left( \dfrac{\d}{\d t} \mathbb{E}_h S_{\sqrt{t}h} f(x) \right)\right|_{t=0}&=
\dfrac{1}{2} \Delta_D f(x)= \dfrac{1}{2} \sum\limits_{i,j=1}^n \dfrac{d^2}{dx_i d x_j} f(x),\\
\left.\left( \dfrac{\d}{\d t}  \mathbb{E}_h \widehat S_{\sqrt{t}h} f(x) \right)\right|_{t=0}&= - \dfrac{1}{2} (Dx,x) f(x)= - \dfrac{1}{2} \sum\limits_{i,j=1}^n d_{i,j} x_i x_j \cdot f(x).
\end{align*}
\end{lemma}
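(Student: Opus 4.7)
The proof naturally splits into two parallel computations, one for each operator family, and the strategy in both cases is the same: expand the random operator to second order in $\sqrt{t}h$, take expectation, and use $\mathbb{E} h=0$ to kill the first-order term while the variance $D$ produces the second-order coefficient. The factor $\sqrt{t}$ in the argument is chosen precisely so that the surviving quadratic contribution is of order $t$, yielding a nonzero derivative at $t=0$.

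For the coordinate representation, I would fix $f$ in an appropriate dense invariant subspace (for instance Schwartz functions, or $C^2_b$ with sufficient decay, where the right-hand side $\tfrac12\sum d_{ij}\partial_i\partial_j f$ makes sense in $L_2$), and use Taylor's formula with integral remainder
$$
f(x-\sqrt{t}\,h)=f(x)-\sqrt{t}\,(h,\nabla f(x))+t\sum_{i,j=1}^{n} h_i h_j \int_0^1 (1-s)\,\partial_i\partial_j f\!\left(x-s\sqrt{t}\,h\right)ds.
$$
Taking the expectation against $h$ and using $\mathbb{E} h=0$ kills the linear term; since $\mathbb{E} h_i h_j=d_{ij}$, passing to the limit $t\to 0$ inside the remainder integral (justified below) yields $\tfrac{1}{t}\bigl(\mathbb{E}_h S_{\sqrt{t}h}f(x)-f(x)\bigr)\to \tfrac12\sum d_{ij}\partial_i\partial_j f(x)$, which is exactly $\tfrac12\Delta_D f(x)$.

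For the momentum representation the computation is cleaner, because $\widehat{S}_{\sqrt{t}h}$ acts by multiplication: $\mathbb{E}_h \widehat{S}_{\sqrt{t}h}f(x)=\phi_h(\sqrt{t}\,x)f(x)$, where $\phi_h(y)=\mathbb{E}\,e^{i(h,y)}$ is the characteristic function of $h$. I would then invoke Lemma~\ref{lemma-momentum-charf} applied to orders one and two: finite second moments of $h$ give $\phi_h\in C^2$ with $\phi_h(0)=1$, $\partial_j\phi_h(0)=i\,\mathbb{E} h_j=0$, and Hessian $\partial_i\partial_j\phi_h(0)=-\mathbb{E} h_i h_j=-d_{ij}$. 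Hence $\phi_h(\sqrt{t}x)=1-\tfrac{t}{2}(Dx,x)+o(t)$ pointwise in $x$, and differentiation at $t=0$ followed by multiplication by $f(x)$ gives the stated formula.

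The one technical point that requires care, and which I expect to be the main obstacle, is the interchange of the $t\to 0$ limit with the $h$-expectation in the coordinate case: one needs a bound on the Taylor remainder that is integrable against the law of $h$ and uniform for small $t$. For $f\in\mathcal{S}(\mathbb{R}^n)$ or $f\in C^2_b$ this follows at once from $\mathbb{E}|h_i h_j|<\infty$ combined with a uniform bound on $\partial_i\partial_j f$, so dominated convergence applies; in the momentum case the analogous step is absorbed into Lemma~\ref{lemma-momentum-charf}. The statement of the lemma should therefore be read as identifying the derivatives at $t=0$ on a suitable common essential domain, which extends by closure to the generators of the semigroups constructed later in Section~\ref{Sec:Generators}.
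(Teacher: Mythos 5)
Your momentum-representation argument coincides with the paper's: write $\mathbb{E}_h\widehat S_{\sqrt{t}h}f(x)=\phi_h(\sqrt{t}\,x)f(x)$ and apply Lemma~\ref{lemma-momentum-charf} to get $\phi_h(\sqrt{t}\,x)=1-\frac{t}{2}(Dx,x)+o(t)$. Where you genuinely diverge is the coordinate case: the paper never expands $f(x-\sqrt{t}\,h)$ at all; it conjugates by the Fourier transform, noting that $\mathbb{E}_h S_{\sqrt{t}h}$ is unitarily equivalent to a multiplication family of the same type, so the coordinate derivative is obtained for free as $\mathcal{F}^{-1}\bigl(-\tfrac12(Dx,x)\bigr)\mathcal{F}=\tfrac12\Delta_D$. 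Your direct route --- Taylor's formula with integral remainder, $\mathbb{E}h=0$ killing the linear term, dominated convergence on a dense invariant domain --- is also correct, with one small repair: differentiability of the operator-valued map is in the strong ($L_2$) topology, so a sup-norm bound on $\partial_i\partial_j f$ is not by itself the right dominating object; you should pass the $L_2$ norm inside the expectation by Minkowski's integral inequality and dominate $\|h_i h_j\,(\partial_i\partial_j f(\cdot-s\sqrt{t}\,h)-\partial_i\partial_j f)\|_{L_2}\le 2|h_i h_j|\,\|\partial_i\partial_j f\|_{L_2}$, using strong continuity of translations on $L_2$ for the pointwise-in-$h$ limit. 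The trade-off between the two routes is instructive: the paper's Fourier argument is shorter and foregrounds the coordinate--momentum unitary equivalence whose failure in infinite dimension is the point of Section~\ref{Sec:Fourier}, whereas your expansion is exactly the scheme that survives that failure --- it is what the paper itself re-implements in Section~\ref{Sec:Taylor} (Lemmas~\ref{lemma-taylor} and~\ref{lemma-random-taylor}), where no global Fourier transform on $\mathcal{H}_{\mathcal{E}}$ exists. Your closing remark that the identities should be read on a suitable essential domain rather than for all $f\in L_2(\mathbb{R}^n)$ is correct, and is in fact more careful than the lemma's own wording.
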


\begin{proof}
    First consider averaging the shift in the momentum representation:
    $$
    \mathbb{E}_h \widehat S_{\sqrt{t}h} f(x) 
    = \int\limits_\Omega e^{i\sqrt{t}(h,x)} f(x) dP(h)
    = f(x) \phi_h(\sqrt{t} x),
    $$
    where $\phi_h(x)$ is the characteristic function of the random vector $h$.

    According to the assumption, the first two moments of the distribution $h$ are finite and equal to $0$ and $D$ respectively. By lemma \ref{lemma-momentum-charf} we can easily find the gradient and Hessian of the characteristic function at zero. Let us decompose the characteristic function of the random vector $h$ according to the Taylor formula:
    $$
    \phi_h(x) = 1 - \dfrac{1}{2}(Dx,x) + o(|x|^2), \quad x\to 0.
    $$
    For a fixed $x$ and $t\to 0$,  
    $$
    \phi_h(x\sqrt{t}) = 1 - \dfrac{1}{2} t (Dx,x) + o(t).
    $$
    Therefore, the derivative at the point $t=0$ of the operator-valued function of multiplication by $\phi_h(x\sqrt{t})$ is the operator of multiplication by $-\dfrac{1}{2}(Dx,x)$.

    The Fourier transform translates a shift in the momentum representation into a shift in the coordinate representation, and $h$ into a random vector $\widehat h$, whose expectation and variance equal to $0$ and $D$. Therefore, the function $\mathbb{E}_h S_{\sqrt{t}h}f(x)$ is unitary equivalent to the function $\mathbb{E}_{\widehat h}\widehat S_{\sqrt{t}h} \widehat f(x)$, and its derivative at zero equals
    $$ 
    \widehat{-\dfrac{1}{2} (Dx,x)} 
    = \dfrac{1}{2} \Delta_D.
    $$
\end{proof}

Closure of the operator of multiplication by $-(Dx,x)$ is the generator of the semigroup $\widehat{U}_D(t)$, acting according to the rule
$$
\widehat{U}_D(t) f(x) = \exp \left(-t (Dx,x)\right) f(x).
$$
Therefore, the same is true for the closure of the operator $\Delta_D$: it is generator of some semigroup $U_D(t)$. By applying the Fourier transform, we can find what kind of semigroup it is:
$$
U_D(t) f(x) 
= \widehat{\exp \left(-t (Dx,x)\right)} \star f(x) = \int\limits_{\mathbb{R}^n} f(x-h) d \mu_{tD}(h),
$$
where $\mu_{tD}(h)$ is the measure whose characteristic function has the form $\exp\left(-t(Dx,x)\right)$, that is, a Gaussian measure with zero expectation and variance $tD$. By the definition of the generator, for all functions $f$ from some dense subspace $L_2(\mathbb{R}^n)$ it is true that
$$
\left.\left(\dfrac{\d}{\d t} U_D(t) f(x) \right)\right|_{t=0} 
=  \Delta_D f(x).
$$
So the semigroup $U_D(t)$ represents the solution of the Laplace equation with the boundary condition $u_0= f(x)$ in the sense of the definition \ref{def-sol}. At the same time, due to the semigroup property
$$
\Av_m U_D(t) = (U_D(t/m))^m = U_D(t).
$$
Therefore $\Av_m U_D(t)$ also represents the solution of the Laplace equation. The following theorem shows that if we take an arbitrary probability measure with the same first and second moments instead of a Gaussian distribution, then the limit of such averaging at $m\to\infty$ in a certain topology converges to $U_D(t)$.

\begin{theorem}\label{th-finite-diffusion}
Let $h$ be a random $n$-dimensional vector with zero expectation and variance $D$. Then
    $$
    \mathbb{E}_h \Av_m S_{\sqrt{t}h} 
    = \mathbb{E}_{h_1, \ldots, h_m} S_{ \sqrt{t / m}\, h_1} \circ \ldots \circ S_{\sqrt{t / m}\, h_m}
    $$
    at $m\to\infty$ converges uniformly over segments in a strong topology to the semigroup $\mathfrak{U}(t)$, which represents the solution of the equation
\begin{align*}
u_t' &= \dfrac{1}{2} \Delta_D u,\\
\left. u \right|_{t=0} &= u_0(x).
\end{align*}
    That is, for all $f\in H, t > 0$ is fulfilled
    $$
    \lim\limits_{m\to\infty} \sup\limits_{t\in [0,T]} \| \mathfrak{U}(t) f - \mathbb{E}_h \Av_m S_{\sqrt{t}h} f\|_H = 0.
$$
\end{theorem}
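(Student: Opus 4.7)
The plan is to apply Chernoff's theorem to the operator-valued function $V(t) := \mathbb{E}_h S_{\sqrt{t}h}$. First, I would invoke Lemma \ref{expect-independent-unitary} together with the observation that $h_1,\dots,h_m$ are i.i.d.\ copies of $h$, to rewrite
\begin{equation*}
\mathbb{E}_{h_1,\dots,h_m}\, S_{\sqrt{t/m}\,h_1}\circ\dots\circ S_{\sqrt{t/m}\,h_m}
\;=\;\bigl(\mathbb{E}_h S_{\sqrt{t/m}\,h}\bigr)^m
\;=\;V(t/m)^m,
\end{equation*}
which is exactly the $m$-th Chernoff iteration of $V$.

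Next I would verify the four hypotheses of Chernoff's theorem for $V$. Condition (1), $V(0)=I$, is immediate since $S_0=I$. Condition (2) follows from $\|S_{\sqrt{t}h}\|=1$ for every realization of $h$: the Pettis-integral defining $V(t)$ therefore gives $\|V(t)\|\le 1$, so $\|V(t)^k\|\le 1$ with $M=1$, $w=0$. Condition (3) is supplied by Lemma \ref{lemma-finite-der-at-zero} applied on the Schwartz space $\mathcal{S}(\mathbb{R}^n)\subset L_2(\mathbb{R}^n)$, a dense subspace on which $\tfrac12\Delta_D f$ lies in $L_2$ and on which the lemma yields
\begin{equation*}
\lim_{t\to 0^+}\frac{1}{t}\bigl(V(t)f-f\bigr)=\tfrac12\Delta_D f.
\end{equation*}
For condition (4), I would appeal to the discussion immediately preceding the theorem: the closure of $\tfrac12\Delta_D$ is the generator of the Gaussian convolution semigroup
\begin{equation*}
U_D(t)f(x)=\int_{\mathbb{R}^n} f(x-h)\,d\mu_{tD}(h),
\end{equation*}
obtained by conjugating the multiplication semigroup by $\exp(-t(Dx,x))$ under the Fourier transform; this semigroup is precisely $\mathfrak{U}(t)$ in the statement. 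Hence the closure hypothesis of Chernoff (4) is met directly.

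Chernoff's theorem then gives $V(t/m)^m\to\mathfrak{U}(t)$ strongly, uniformly on $t\in[0,T]$ for every $T>0$, which is the claim. The main obstacle I anticipate is not the high-level argument but the careful checking of condition (3): Lemma \ref{lemma-finite-der-at-zero} is stated as an operator-valued identity and one must make sure the remainder in the Taylor expansion of the characteristic function $\phi_h(\sqrt{t}\,x)$ yields a genuine norm-limit in $L_2$ on the chosen domain. On $\mathcal{S}(\mathbb{R}^n)$ this is routine because $(Dx,x)f(x)$ and all its Fourier-equivalent derivatives belong to $L_2$, and dominated convergence controls the remainder uniformly in a neighbourhood of $0$; but picking a domain that is simultaneously dense, invariant enough for Chernoff, and on which the pointwise derivative is an $L_2$ limit is the one step that requires some care rather than being purely formal.
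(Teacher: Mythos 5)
Your proposal is correct and follows essentially the same route as the paper's proof: both define $V(t)=\mathbb{E}_h S_{\sqrt{t}h}$, use Lemma \ref{expect-independent-unitary} to identify $\mathbb{E}_h \Av_m S_{\sqrt{t}h}$ with the Chernoff iteration $(V(t/m))^m$, verify $V(0)=I$ and $\|V(t)\|\le 1$, invoke Lemma \ref{lemma-finite-der-at-zero} for the derivative at zero, and note that the closure of $\tfrac12\Delta_D$ generates the Gaussian convolution semigroup before applying Chernoff's theorem. Your explicit choice of the Schwartz space as the domain in condition (3) is a sensible sharpening of a point the paper leaves implicit, but it does not change the argument.
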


\begin{proof}
    Consider an operator-valued function
    $$
    V(t) = \mathbb{E}_h S_{\sqrt{t}h}.
    $$
    Then, by lemma \ref{expect-independent-unitary},
    $$
    \mathbb{E} \Av_m S_{\sqrt{t}\, h} 
    = \mathbb{E}_{h_1, \ldots, h_m} 
    (S_{\sqrt{t/m}\, h_1} \circ \ldots \circ S_{\sqrt{t/m}\, h_m})
    =(\mathbb{E}_{h} S_{\sqrt{t/m}\, h})^m 
    = (V(t/m))^m.
    $$
    When $t=0$, equality $S_{0 \cdot h} = I$ is fulfilled, hence $V(0) = I$. In addition, $\|S_{th}\| = 1$ due to unitarity, which means that the estimate of $\|V(t)\|\le 1$ is fair.

    Also, according to lemma \ref{lemma-finite-der-at-zero}, the derivative of $V(t)$ at zero is equal to $\frac{1}{2} \Delta_D$. Moreover, as we found out, this operator is densily defined, closable and generates a semigroup of convolution with a Gaussian measure with variance $\dfrac{1}{2}tD$.

    Using Chernov's theorem, we can prove the following result.
\end{proof}

\begin{theorem}\label{th-qosc-conseq-thifts}
    Let $h$ and $a$ be unbiased random vectors with variances $D_x$ and $D_p$, respectively. Then
    $$
    \mathbb{E}_{h, a} \Av_m (S_{\sqrt{t}h} \widehat{S}_{\sqrt{t}a})
    $$
    at $m\to\infty$ converges uniformly along the segments to the semigroup that represents the solution of the equation
    $$
    u'(t) = \dfrac{1}{2} \Delta_{D_x} u - \dfrac{1}{2} (D_p x, x) u.
    $$
\end{theorem}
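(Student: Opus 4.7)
The plan is to mirror the proof of Theorem \ref{th-finite-diffusion} and reduce the statement to a single application of Chernoff's theorem applied to the operator-valued function
\[
V(t) := \mathbb{E}_{h,a}\bigl(S_{\sqrt{t}h}\widehat{S}_{\sqrt{t}a}\bigr), \qquad t\ge 0.
\]
First I would use Lemma \ref{expect-independent-unitary} together with the independence of the sample pairs $(h_k,a_k)$ to obtain
\[
\mathbb{E}_{h,a}\,\Av_m\bigl(S_{\sqrt{t}h}\widehat{S}_{\sqrt{t}a}\bigr) = \bigl(V(t/m)\bigr)^m,
\]
and, assuming $h$ and $a$ are independent inside each factor, to factorise $V(t)=V_x(t)V_p(t)$ with
\[
V_x(t) = \mathbb{E}_h S_{\sqrt{t}h}, \qquad V_p(t) = \mathbb{E}_a \widehat{S}_{\sqrt{t}a}.
\]
Both $V_x(t)$ and $V_p(t)$ are averages of unitary operators, hence contractions; therefore $\|V(t)\|\le 1$ and $V(0)=I$, so Chernoff's hypotheses (1) and (2) hold with $M=1$, $w=0$.

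Next I would identify the strong derivative of $V(t)$ at $t=0$ on a common invariant dense subspace $D\subset L_2(\mathbb{R}^n)$, for concreteness the Schwartz space $D=\mathcal{S}(\mathbb{R}^n)$. Writing
\[
\frac{V(t)f-f}{t} = V_x(t)\,\frac{V_p(t)f-f}{t} + \frac{V_x(t)f - f}{t},
\]
strong continuity of $V_x$ at $0$ (which follows from strong continuity of the shift group and dominated convergence) together with Lemma \ref{lemma-finite-der-at-zero} applied to each factor separately yield, for every $f\in D$,
\[
\lim_{t\to 0^+}\frac{V(t)f-f}{t} = \tfrac{1}{2}\Delta_{D_x} f - \tfrac{1}{2}(D_p x,x) f.
\]
This identifies the candidate generator $A := \tfrac{1}{2}\Delta_{D_x} - \tfrac{1}{2}(D_p x,x)$ on the domain $D=\mathcal{S}(\mathbb{R}^n)$, which verifies hypothesis (3).

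The main obstacle lies in hypothesis (4) of Chernoff's theorem: one must show that the closure $\bar A$ generates a strongly continuous semigroup. Since $D_x,D_p\ge 0$, the operator $-A$ is symmetric and non-negative on $\mathcal{S}(\mathbb{R}^n)$ and coincides with an imaginary-time quantum harmonic oscillator with quadratic potential; essential self-adjointness of such polynomial Schr\"odinger operators on $\mathcal{S}(\mathbb{R}^n)$ is classical. Therefore $\bar A$ is self-adjoint and non-positive, and hence generates a strongly continuous contraction semigroup $\mathfrak{U}(t)$. Chernoff's theorem then delivers
\[
\lim_{m\to\infty}\sup_{t\in[0,T]}\bigl\|\bigl(V(t/m)\bigr)^m f - \mathfrak{U}(t)f\bigr\|_{L_2(\mathbb{R}^n)} = 0
\]
for every $f\in L_2(\mathbb{R}^n)$ and every $T>0$. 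By construction, $\mathfrak{U}(t)$ solves the stated evolution equation in the sense of Definition \ref{def-sol}, completing the argument. The remaining care is bookkeeping: checking that $\mathcal{S}(\mathbb{R}^n)$ is left invariant by $V_p(t)$ (which reduces to smoothness of the characteristic function $\phi_a$ guaranteed by finiteness of the second moments of $a$ via Lemma \ref{lemma-momentum-charf}) and that the difference quotients above converge in $L_2$-norm, not only pointwise, on $D$.
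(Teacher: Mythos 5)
Your proposal is correct and follows the same Chernoff skeleton as the paper: the factorization $V(t)=V_x(t)V_p(t)$, the reduction $\mathbb{E}_{h,a}\Av_m(S_{\sqrt{t}h}\widehat{S}_{\sqrt{t}a})=(V(t/m))^m$ via Lemma \ref{expect-independent-unitary}, and the identification $V'(0)=\tfrac{1}{2}\Delta_{D_x}-\tfrac{1}{2}(D_p x,x)$ via Lemma \ref{lemma-finite-der-at-zero} all match the paper's proof. Where you genuinely diverge is hypothesis (4) of Chernoff's theorem. The paper proves self-adjointness of the closure by its own sesquilinear-form machinery: it introduces the three forms $t_1,t_2,t_3$ on $\mathcal{D}(\mathbb{R}^n)$ with $t_1\le t_3$ and $t_2\le t_3$, applies Theorem \ref{th-repr-operators-incl} to place the domain of the Friedrichs extension inside $D(\Delta_{D_x})\cap D((D_p x,x))$, and concludes that the Friedrichs extension coincides with the closure. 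You instead cite the classical essential self-adjointness of Schr\"odinger operators with non-negative quadratic potential on $\mathcal{S}(\mathbb{R}^n)$. In finite dimensions your route is shorter and rests on a standard result; the paper's route is deliberately self-contained because the form-comparison argument is precisely what survives the passage to the infinite-dimensional space $\mathcal{H}_\mathcal{E}$ (Theorem \ref{th-sum-core}), where there is no Lebesgue measure and no classical Schr\"odinger-operator theory to invoke.

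Two caveats on your version. First, the standard citations (essential self-adjointness of $-\Delta+x^2$ and its relatives, Faris--Lavine) assume an elliptic kinetic term, while here $D_x$ is merely a covariance matrix --- possibly degenerate --- and $D_x$, $D_p$ need not commute; so ``classical'' requires either a result on general non-negative quadratic Hamiltonians (e.g.\ Nelson's commutator or analytic-vector arguments with the Hermite basis) or simply the paper's form argument, which is insensitive to degeneracy. Second, your closing concern about invariance of $\mathcal{S}(\mathbb{R}^n)$ under $V_p(t)$ is both unnecessary and, as justified, incorrect: Chernoff's theorem as quoted in the paper requires no invariance of the subspace $D$, and finiteness of second moments yields only $\phi_a\in C^2$, not smoothness, so multiplication by $\phi_a(\sqrt{t}\,x)$ need not preserve $\mathcal{S}(\mathbb{R}^n)$. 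What hypothesis (3) actually needs is $L_2$-convergence of the difference quotients on $D$, which follows from the bound $|\phi_a(\sqrt{t}\,x)-1|\le \tfrac{t}{2}(D_p x,x)$ (a consequence of $|e^{is}-1-is|\le s^2/2$ and $\mathbb{E}a=0$) together with dominated convergence and strong continuity of $V_x(t)$ at zero --- all of which your argument already supplies.
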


\begin{proof}
Consider the following strongly continuous families of operators
\begin{align*}
V_1(t) &= \mathbb{E}_h S_{\sqrt{t}h},\\
V_2(t) &= \mathbb{E}_a \widehat{S}_{\sqrt{t}a},\\
V(t) &= \mathbb{E}_{h,a} S_{\sqrt{t}h} \widehat{S}_{\sqrt{t}a} = V_1(t) V_2(t).
\end{align*}
Obviously, $V(0) = I$ and $\|V(t)\|\le 1$.

By lemma \ref{expect-independent-unitary} we have
$$
\mathbb{E}_{h,a} \Av_m ( S_{\sqrt{t}h} \widehat{S}_{\sqrt{t}a}) = (V(t/m))^m.
$$

As shown in lemma \ref{lemma-finite-der-at-zero}, $V_2'(0)$ is a closable operator $-\dfrac{1}{2}(D_p x,x)$, the closure of which is the generator of the semigroup of multiplication by $\exp\left(-\dfrac{1}{2}(D_p x,x) \right)$. Applying the Fourier transform, we also obtain that the operator $\dfrac{1}{2}\Delta_{D_x}= V_2'(0)$ is closable, and its closure is a generator of the convolution semigroup with a Gaussian measure.
$$
V'(0) = V_1'(0) V_2(0) + V_1(0) V_2'(0) = \dfrac{1}{2}  \Delta_{D_x} - \dfrac{1}{2}  (D_p x,x).
$$

The space of test functions $\mathcal{D}(\mathbb{R}^n)$ can be chosen as a common essential domain for all operators. In order to use Chernov's theorem, we only need to show that the closure of the operator $\dfrac{1}{2}\Delta_{D_x} -\dfrac{1}{2} (D_p x,x)$ is a generator of some strongly continuous semigroup. To do this, we show that this operator is self-adjoint. Consider three sesquilinear forms
\begin{align*}
t_1(f,g) &= \left(\left(I - \dfrac{1}{2} \Delta_{D_x}\right) f, g\right), \quad D(t_1) = \mathcal{D}(\mathbb{R}^n),\\
t_2(f,g) &= \left(\left(I + \dfrac{1}{2} (D_p x, x)\right) f, g\right), \quad D(t_1) = \mathcal{D}(\mathbb{R}^n),\\
t_3(f,g) &= \left(\left(I - \dfrac{1}{2} \Delta_{D_x} + \dfrac{1}{2} (D_p x, x)\right) f, g\right), \quad D(t_2) = \mathcal{D}(\mathbb{R}^n).
\end{align*}
Since $t_1\le t_3$ and $t_2\le t_3$, by the theorem \ref{th-repr-operators-incl} we get that the Friedrichs extension of the operator $\dfrac{1}{2}\Delta_{D_x} - \dfrac{1}{2} (D_p x,x)$ is defined at the intersection of the closure domains of the operators $\Delta_{D_x}$ and $(D_p x,x)$. Therefore, it coincides with the closure of the operator $\dfrac{1}{2}\Delta_{D_x} - \dfrac{1}{2} (D_p x,x)$, which guarantees the self-conjugacy of the latter.

Let us take an arbitrary function $f$ from the domain of the Friedrichs extension. As we obtained, $\Delta_{D_x} f$ and $-(D_p x, x)f$ are correctly defined. Consider the function $f_t =U_{tD_x} \widehat{U}_{tD_p} f\in\mathcal{D}(\mathbb{R}^n)$. It is evident that for any $k$ one has
$$
\|x_k (f - f_t)\|_{L_2(\mathbb{R}^n)} \to 0, \quad t \to \infty,
$$
and
$$
\|\partial_k (f - f_t)\|_{L_2(\mathbb{R}^n)} \to 0, \quad t \to \infty,
$$
from that it follows that $\left(\dfrac{1}{2}  \Delta_{D_x} - \dfrac{1}{2}  (D_p x,x)\right) f_t(x)$ converges to $\left(\dfrac{1}{2}  \Delta_{D_x} - \dfrac{1}{2}  (D_p x,x)\right) f_t(x)$. 
\end{proof}

\begin{theorem}
Let $h$ and $a$ be unbiased random vectors with variances $D_x$ and $D_p$ accordingly, and $p\in (0,1)$. Let us introduce the operator $T_{p,h,a}$, which with probability $p$ equals to a random shift in the coordinate representation $S_{h}$, and with probability $1-p$ to a random shift in the momentum representation $\widehat{S}_{a}$. Then the sequence
$$
\mathbb{E}_{h, a} \Av_m (T_{p,\sqrt{t}h,\sqrt{t}a})
$$
at $m\to\infty$ converges uniformly along the segments to the semigroup that represents the solution of the equation
$$
u'(t) = \dfrac{p}{2} \Delta_{D_x} u - \dfrac{1-p}{2} (D_P x, x).
$$
\end{theorem}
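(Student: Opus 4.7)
The plan is to reuse the argument of Theorem \ref{th-qosc-conseq-thifts}, replacing the product $V_1(t)V_2(t)$ by a convex combination. I would define
$$
V(t) = \mathbb{E}_{h,a}\, T_{p,\sqrt{t}h,\sqrt{t}a} = p\,\mathbb{E}_h S_{\sqrt{t}h} + (1-p)\,\mathbb{E}_a \widehat{S}_{\sqrt{t}a},
$$
which is strongly continuous, satisfies $V(0) = I$, and is a contraction as a convex combination of contractions. Since the random operators composed inside $\Av_m$ are mutually independent, Lemma \ref{expect-independent-unitary} gives $\mathbb{E}_{h,a}\,\Av_m(T_{p,\sqrt{t}h,\sqrt{t}a}) = (V(t/m))^m$, so the Chernoff iteration framework applies. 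Using Lemma \ref{lemma-finite-der-at-zero} together with linearity of expectation, the strong derivative at zero on the common domain $\mathcal{D}(\mathbb{R}^n)$ is
$$
V'(0) = \frac{p}{2}\Delta_{D_x} - \frac{1-p}{2}(D_p x, x),
$$
an operator I denote $A$.

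The main obstacle, exactly as in Theorem \ref{th-qosc-conseq-thifts}, is verifying hypothesis (4) of Chernoff's theorem, namely that $\overline{A}$ generates a strongly continuous semigroup, which I would obtain via essential self-adjointness of $A$. I would reproduce the form-theoretic argument verbatim, introducing on $\mathcal{D}(\mathbb{R}^n)$ the three densely defined, closable, symmetric, positive-definite sesquilinear forms
\begin{align*}
t_1(f,g) &= \Bigl(\bigl(I - \tfrac{p}{2}\Delta_{D_x}\bigr)f,\, g\Bigr),\\
t_2(f,g) &= \Bigl(\bigl(I + \tfrac{1-p}{2}(D_p x, x)\bigr)f,\, g\Bigr),\\
t_3(f,g) &= \Bigl(\bigl(I - \tfrac{p}{2}\Delta_{D_x} + \tfrac{1-p}{2}(D_p x, x)\bigr)f,\, g\Bigr).
\end{align*}
The strict positivity of $p$ and $1-p$ is what makes $t_1$ and $t_2$ individually positive-definite, and it yields $t_1 \le t_3$ and $t_2 \le t_3$. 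Theorem \ref{th-repr-operators-incl} then forces the domain of the operator representing $t_3$, i.e.\ the Friedrichs extension of $A$, into the intersection of the closure domains of $\Delta_{D_x}$ and of multiplication by $(D_p x, x)$, so this Friedrichs extension must coincide with $\overline{A}$; hence $\overline{A}$ is self-adjoint.

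With all four Chernoff hypotheses in place, the theorem delivers uniform-on-segments strong convergence of $(V(t/m))^m$ to the semigroup $\mathfrak{U}(t)$ generated by $\overline{A}$, and by Definition \ref{def-sol} this $\mathfrak{U}(t)$ represents the solution of the stated evolution equation. The only item to watch, beyond direct transcription of the earlier proof, is the identification of $\mathfrak{U}(t)u_0$ as a solution for $u_0$ in the Friedrichs domain; I would handle this as in Theorem \ref{th-qosc-conseq-thifts}, approximating $u_0$ by the Schwartz functions $u_0^{(s)} = U_{sD_x}\widehat{U}_{sD_p} u_0$ and passing to the limit $s \to 0^+$, using the already-established self-adjointness to commute the limit with $A$.
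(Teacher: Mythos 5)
Your proposal is correct and follows essentially the same route as the paper: the paper's own proof simply defines $V(t) = p\,\mathbb{E} S_{\sqrt{t}h} + (1-p)\,\mathbb{E}\widehat{S}_{\sqrt{t}a}$ via linearity of expectation, notes $\mathbb{E}_{h,a}\Av_m(T_{p,\sqrt{t}h,\sqrt{t}a}) = (V(t/m))^m$, and invokes Chernoff's theorem, deferring the self-adjointness argument to the proofs of Theorems \ref{th-finite-diffusion} and \ref{th-qosc-conseq-thifts}. Your explicit transcription of the sesquilinear-form argument with the weights $p$ and $1-p$, and the closing approximation by $U_{sD_x}\widehat{U}_{sD_p}u_0$, is exactly the adaptation the paper leaves implicit.
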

\begin{proof}
The proof is similar to the proofs of the theorems \ref{th-finite-diffusion} and \ref{th-qosc-conseq-thifts}. Let us introduce an operator-valued function
$$
V(t) = \mathbb{E} T_{p,\sqrt{t} h,\sqrt{t} a}.
$$
Due to the linearity of the expectation, one has
\begin{align*}
V(t) &= p \cdot \mathbb{E} S_{\sqrt{t}h} + (1-p) \cdot \mathbb{E} \widehat{S}_{\sqrt{t} a},\\
V'(0) &= \dfrac{p}{2} \Delta_{D_x} u - \dfrac{1-p}{2} (D_P x, x).
\end{align*}
As before,
$$
\mathbb{E}_{h, a} \Av_m (T_{p,\sqrt{t}h,\sqrt{t}a}) = (V(t/m))^m
$$
and we can use the Chernov's theorem.
\end{proof}

\section{Square integrable functions on the Hilbert space and shifts in the coordinate and momentum representations}\label{Sec:Shifts}

In this section we define a finite additive translation invariant measure $\lambda_\mathcal{E}$ on a real separable Hilbert space $E$ with an orthonormal basis $\mathcal{E}$. On the space of square integrable functions with respect to $\lambda_\mathcal{E}$, we consider shifts in the coordinate and momentum representations. We study properties of the uniraty groups of shifts along vector $th, t\in \mathbb{R}, h\in E$: we find criteria for their strong continuity and show how they can be approximated by shifts along finite vectors.

Let $D$ be a non-negative nuclear operator in a real separable Hilbert space $E$, having a basis of eigenvectors ${\cal E}=\{e_k\}$ and the corresponding set of eigenvalues $\{d_k\}$.

\begin{definition}
    Recall (\cite{S2016, SITO}) that a block with edges, collinear to vectors of the basis $\cal E$ is a set of the form
    $$
    \Pi _{a,b}=\{ x\in E:\ (e_j,x)\in [a_j,b_j) \ \forall \ j\in {\mathbb N}\},
    $$
    where $a,b\in l_{\infty }$ and $a_j<b_j \ \forall \ j\in {\mathbb N}$. 
    
    A block $\Pi _{a,b}$ is called absolutely measurable if either $\Pi_{a,b}=\emptyset$, or the condition is hold
    $$
    \sum\limits_{j=1}^{\infty }\max \{ 0,\ln(b_j-a_j)\}<+\infty.
    $$
\end{definition}

On the set of all absolutely continuous blocks with edges, collinear to vectors of the basis $\cal E$, a non-negative function of the set $\lambda_{\cal E}$ is given, defined by equality
    $$
    \lambda _{\cal E}(\Pi _{a,b})=
    \begin{cases}
        \prod\limits_{j=1}^{\infty }(b_j-a_j), & \text{if } \Pi _{a,b}\neq \emptyset, \\ 
        0, & \text{otherwise}.
    \end{cases}
    $$
The function $\lambda_{\cal E}$ has a single extension $\lambda_{\cal E}$ to a finitely additive measure on the minimal ring $r_{\cal E},$ containing the set of all absolutely measurable blocks, which is uniquely determined to a measure on the complection of ${\cal R}_{\cal E}$  of the ring $r_{\cal E}$ over measure $\lambda_{\cal E}$.

Consider the set $S(E,{\cal R}_{\cal E},{\mathbb C})$ of linear combinations of indicator functions of sets from the ring ${\cal R}_{\cal E}$. Every such a function has the form
$$
f(x) = \sum\limits_{k=1}^m c_k\chi _{B_k},
$$
where $m\in\mathbb{N}, c_k\in {\mathbb C}$, and measurable sets $B_k\in {\cal R}_{\cal E}$ without loss of generality can be considered as pairwise disjoint. Equip $S(E,{\cal R}_{\cal E},{\mathbb C})$ with the sesquilinear form $p_2$, which, if $B_k$ are chosen to be pairwise disjoint, is equal to
\begin{equation}\label{nor}
    p_2 \left(
        \sum\limits_{k=1}^m c_k\chi _{B_k}
    \right)
    =\left(
        \sum\limits_{k=1}^m|c_k|^2\lambda _{\cal E}({B_k}) \right)^{1/2}.
\end{equation}

\begin{definition}
    Equipped with norm (\ref{nor}) quotient space
    $$
    S_2(E,{\cal R}_{\cal E},\lambda _{\cal E},{\mathbb C})=S(E,{\cal R}_{\cal E},{\mathbb C}) / {\cal N}_2,
    $$
    where 
    $$
    {\cal N}_2=\{ u\in S(E,{\cal R}_{\cal E},{\mathbb C}):\ p_2(u)=0\}
    $$
    is a pre-Hilbert space, whose completion is the Hilbert space of square integrable functions ${\cal H}_{\cal E}$.
\end{definition}

Let $E_N={\rm span}(e_1,\ldots,e_N)$ and $E^N=(E_N)^{\bot}$, and let ${\cal E}^{N}=\{ e_{N+1},\ldots\}$ be an orthonormal basis in the space $E^N$ of the eigenvectors of the operator $D$.

Let $\lambda _{E_N}$ be the Lebesgue measure on the space $E_N$, and $\lambda_{{\cal E}^N}$ be a shift-invariant measure on the space $E^N$, constructed using ONB ${\cal E}^N$ in the same way as the measure  $\lambda _{\cal E}$. Let us denote by $P_N$ the projection operator to the first $N$ coordinates.

According to theorem 4 of \cite{BusS}, the following equality holds
$$
{\cal H}_{\cal E}=L_2(E_N)\otimes {\cal H}_{{\cal E}^N}.
$$
Since the unitary Fourier transform $\mathcal{F}_{E_N}$ is defined on $L_2(E_N)$,  it has the unique extension to a unitary operator $\mathcal{F}_n =\mathcal{F}_{E_N}\times I$ on ${\cal H}_{\cal E}$.

Recall that by $B({\cal H}_{\cal E})$ we denote the Banach space of bounded linear operators on the Hilbert space ${\cal H}_{\cal E}$ endowed with the operator norm.

\begin{definition}
    For each vector $h\in E$, we define a unitary shift operator $S_h \in B({\cal H}_{\cal E})$ as
    $$
    S_h f(x) = f(x-h),\ f \in {\cal H}_{\cal E}.
    $$
\end{definition}

For a fixed vector $h$, the parametric family of shift operators $S_{th},\ t\in\mathbb{R}$ forms a unitary group.

Let us introduce the space $L_1({\mathcal E})$ as a subspace of $E$ consisting of vectors whose sequence of coordinates in the basis $\mathcal E$ belongs $l_1$. Let us endow this space with the norm
$$
\|h\|_{L_1(\mathcal{E})} = \sum\limits_{k=1}^\infty |h_k| = \sum\limits_{k=1}^\infty |(h, e_k)|.
$$
\begin{theorem}\label{lemma-shift-continuous}[section 4 of \cite{S2016}]
    The group $S_{th}$ is strongly continuous if and only if $h\in L_1({\mathcal E})$.
\end{theorem}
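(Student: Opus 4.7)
The plan is to reduce strong continuity of the unitary group $t\mapsto S_{th}$ to an explicit computation on indicators of absolutely measurable blocks, since their linear span is dense in $\mathcal{H}_{\mathcal{E}}$ by construction and each $S_{th}$ is an isometry. For $\Pi=\Pi_{a,b}$ an absolutely measurable block, shift invariance gives $S_h\chi_{\Pi}=\chi_{\Pi_{a+h,b+h}}$, hence
\begin{equation*}
\|S_h\chi_{\Pi}-\chi_{\Pi}\|_{\mathcal{H}_{\mathcal{E}}}^2
=\lambda_{\mathcal{E}}(\Pi\triangle\Pi_{a+h,b+h})
=2\lambda_{\mathcal{E}}(\Pi)-2\lambda_{\mathcal{E}}(\Pi\cap\Pi_{a+h,b+h}),
\end{equation*}
and for $h$ small the intersection is itself a block whose $j$-th edge has length $\max(0,(b_j-a_j)-|h_j|)$. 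So the whole question reduces to comparing $\prod_{j}(b_j-a_j)$ with $\prod_{j}\max(0,(b_j-a_j)-|th_j|)$ as $t\to 0$.

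For sufficiency, assume $h\in L_1(\mathcal{E})$. If $\lambda_{\mathcal{E}}(\Pi)=0$ the assertion is trivial, so assume $\lambda_{\mathcal{E}}(\Pi)>0$; then $\prod_j(b_j-a_j)$ converges to a positive number, which forces $\sum_j|\ln(b_j-a_j)|<\infty$ and in particular $b_j-a_j\to 1$. Choose $N$ so that $b_j-a_j\ge 1/2$ for all $j\ge N$. Using the Weierstrass inequality,
\begin{equation*}
\prod_{j\ge N}\left(1-\frac{|th_j|}{b_j-a_j}\right)\;\ge\; 1-\sum_{j\ge N}\frac{|th_j|}{b_j-a_j}\;\ge\; 1-2|t|\sum_{j\ge N}|h_j|,
\end{equation*}
which tends to $1$ because $h\in L_1(\mathcal{E})$; the remaining $N-1$ factors converge to $b_j-a_j$ by elementary continuity. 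Hence $\lambda_{\mathcal{E}}(\Pi\cap\Pi_{a+th,b+th})\to\lambda_{\mathcal{E}}(\Pi)$, giving strong continuity of $t\mapsto S_{th}\chi_{\Pi}$ at $t=0$. A standard $\varepsilon/3$ density argument, combined with the isometry identity $\|S_{sh}(S_{th}-I)f\|=\|(S_{th}-I)f\|$ that transfers continuity at $0$ to every $s$, finishes this direction.

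For necessity, suppose $h\notin L_1(\mathcal{E})$, so $\sum_j|h_j|=+\infty$. Take $\Pi=\{x\in E:\,(x,e_j)\in[0,1)\text{ for all }j\}$; this block is absolutely measurable with $\lambda_{\mathcal{E}}(\Pi)=1$. Since $h\in E$ forces $h_j\to 0$, for every fixed $t\ne 0$ all but finitely many $|th_j|$ lie in $[0,1)$, and on that tail
\begin{equation*}
\sum_{j\text{ large}}\ln(1-|th_j|)\;\le\;-|t|\sum_{j\text{ large}}|h_j|\;=\;-\infty,
\end{equation*}
so $\lambda_{\mathcal{E}}(\Pi\cap\Pi_{th,1+th})=0$ and $\|S_{th}\chi_{\Pi}-\chi_{\Pi}\|^2=2$ for every $t\ne 0$, ruling out strong continuity at $0$. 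The main technical obstacle is the bookkeeping in the sufficiency half: one must exploit $\lambda_{\mathcal{E}}(\Pi)>0$ to squeeze the edge lengths $b_j-a_j$ away from $0$ on a tail, so that summability of $\{h_j\}$ actually upgrades to summability of the relative perturbations $|h_j|/(b_j-a_j)$; without absolute measurability this step could fail, and the cleanness of the product expansion depends entirely on it.
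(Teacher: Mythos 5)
Your proof is correct and follows essentially the same route as the paper's: reduce to indicator functions of absolutely measurable blocks by density and uniform boundedness of the unitaries, compute $\|S_{th}\chi_{\Pi}-\chi_{\Pi}\|^2 = 2\lambda_{\mathcal{E}}(\Pi)-2\lambda_{\mathcal{E}}(\Pi\cap(\Pi+th))$, and analyze the infinite product of shifted edge lengths, with a divergent logarithmic series killing the intersection measure when $h\notin L_1(\mathcal{E})$ and a convergent tail giving continuity when $h\in L_1(\mathcal{E})$. The only cosmetic differences are that your sufficiency half estimates the product directly via the Weierstrass inequality (incidentally cleaner than the paper's coordinate-wise telescoping of squared norms, which as written conflates $\|\sum_k x_k\|^2$ with $\sum_k\|x_k\|^2$) and your necessity half fixes the unit cube rather than an arbitrary block of positive measure; neither changes the substance.
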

\begin{proof}
    Let $\forall k \in \mathbb{N}: 1 + \delta_k > 0$.

    Direct calculations show that 
    $P = \sum\limits_{k=1}^\infty\ln(1 + \delta_k)$
    absolutely converges to a positive number if and only if the series 
    $S = \sum\limits_{k=1}^\infty \delta_k$ 
    absolutely converges.

    As a result, a block with sides $[a_k,b_k)$ is measurable and has a positive measure if and only if the series $\sum\limits_{k=1}^\infty (b_k-a_k-1)$ converges absolutely.

    Since the operator $S_{th}$ is bounded for any $t$ and $h$, if it can be shown that $S_{th}$ is strongly continuous on some dense subspace ${\cal H}_{\cal E}$, then it is strong continuous on all ${\cal H}_{\cal E}$. Since the characteristic functions of measurable blocks generate such a dense subspace, it is enough to check strong continuity only for them. On the other hand, if for some measurable block $Q$ the function $S_{th} \chi_Q$ is not continuous in $t$, then this means that there is no strong continuity of $S_{th}$.

    Consider an arbitrary measurable block $Q$ with sides $[a_k,b_k)$.
    $$
    \|S_{th} \chi_Q - \chi_Q\|_{{\mathcal H}_\mathcal{E}}^2
    = \|S_{th} \chi_Q \|_{{\mathcal H}_\mathcal{E}}^2 + \|\chi_Q\|_{{\mathcal H}_\mathcal{E}}^2 
        - 2 (S_{th} \chi_Q , \chi_Q )_{{\mathcal H}_\mathcal{E}}
    = 2 \lambda_\mathcal{E}(Q) - 2 \lambda_\mathcal{E}((Q + th) \cap Q)
    $$
    Since the block is measurable, then $b_k - a_k > 0$ and $\lim\limits_{k\to \infty} b_k - a_k = 1$, which means $\min\limits_k (b_k - a_k)$ is positive and is achieved at some $K\in\mathbb{N}$. By limiting $t$ to a sufficiently small neighborhood of zero, we can guarantee that all intersections of $[a_k + t h_k, b_k + t h_k] \cap [a_k, b_k]$ are non-empty. In this case, this intersection is a half-interval of $[a_k + t |h_k|, b_k)$ with a positive $h_k$ and a half-interval $[a_k, b_k - t |h_k|)$ if negative. Anyway,
    $$
    \lambda_\mathcal{E}((Q + th) \cap Q) = \prod\limits_{k=1}^\infty (b_k - a_k - t |h_k|).
    $$
    Since the block $Q$ has a positive measure, $\prod\limits_{k=1}^\infty (a_k - b_k)$ must take a positive value, which means that the series $\sum\limits_{k=1}^\infty b_k - a_k - 1$ converges absolutely. If $h\notin L_1(\mathcal{E})$, then $\sum\limits_{k=1}^\infty b_k - a_k - t|h_k| - 1$ diverges for any $t$, and the series $\sum\limits_{k=1}^\infty \ln(b_k - a_k - t|h_k|)$ does not converge absolutely, which means that the block $(Q + th) \cap Q$ has zero measure, $\|S_{th} \chi_Q - \chi_Q\|_{{\mathcal H}_\mathcal{E}}^2 =2\lambda_\mathcal{E}(Q)$, and $S_{th}$ is discontinuous in a strong topology.

    Now assume $h\in L_1(\mathcal{E})$. Then, due to the unitarity of the shifts, the following estimate is valid (recall that the operator $P_k:E\to E$ denotes the projection onto the first $k$ coordinates, and $P_0$ is set to zero):
\begin{align*}
    \|S_{th} f(x) - f(x)\|^2_{\mathcal{H}_\mathcal{E}}
    &\le \sum\limits_{k=1}^\infty 
    \|S_{P_{k} th} \chi_Q(x) - S_{P_{k-1} th} \chi_Q(x)\|^2_{\mathcal{H}_\mathcal{E}}
    = \sum\limits_{k=1}^\infty
    \|S_{th_k} \chi_Q(x) - \chi_Q(x)\|^2_{\mathcal{H}_\mathcal{E}} \\
 &= \sum\limits_{k=1}^\infty \left[
        2 \lambda_\mathcal{E}(Q) - 2 \lambda_\mathcal{E}((Q + th_k) \cap Q) 
    \right]
    \le 2 \lambda_\mathcal{E}(Q)
    \sum\limits_{k=1}^\infty 
        \dfrac{t|h_k|}{b_k - a_k}.
\end{align*}
    Since $\min\limits_k \{b_k - a_k\} > 0$, the expression on the right side is finite for any $t$ and tends to zero for $t\to 0$.
\end{proof}

\begin{theorem}\label{theorem-shift-appr}
    The sequence of operators $S_{P_n h}$ converges to $S_h$ in a strong operator topology if and only if $h\in L_1(\mathcal{E})$. Moreover, if $h\in L_1(\mathcal{E})$, then for all $T>0$ and $f\in\mathcal{H}_\mathcal{E}$

    $$\lim\limits_{m\to\infty} \sup\limits_{t \in [0,T]} \|S_{P_m th} f(x) - S_{th}f(x)\| = 0.$$
\end{theorem}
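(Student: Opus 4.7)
The plan is to reduce both claims to estimates of $\|S_g \chi_Q - \chi_Q\|$ on indicators of absolutely measurable blocks, by first exploiting the identity
$$\|S_{P_m h} f - S_h f\|_{\mathcal{H}_\mathcal{E}} = \|S_{h - P_m h} f - f\|_{\mathcal{H}_\mathcal{E}},$$
which follows from unitarity of $S_h$ and the group property $S_a S_b = S_{a+b}$. Setting $g_m = h - P_m h$, the question becomes whether $\|S_{g_m} f - f\| \to 0$, and in the uniform part whether $\sup_{t \in [0,T]} \|S_{t g_m} f - f\| \to 0$. Both statements will then propagate from block indicators to all of $\mathcal{H}_\mathcal{E}$ via density and the uniform bound $\|S_g\|_{B(\mathcal{H}_\mathcal{E})} = 1$.

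For the necessity direction I would argue by contrapositive. If $h \notin L_1(\mathcal{E})$, then $h \in E$ forces $h_k \to 0$, so there exists $n_0$ with $|h_k| < 1$ for $k > n_0$. Taking the unit block $Q = \{x : (x, e_k) \in [0,1) \text{ for all } k\}$, which is absolutely measurable with $\lambda_\mathcal{E}(Q) = 1$, the intersection-volume computation from the proof of Theorem \ref{lemma-shift-continuous} gives, for every $m \ge n_0$,
$$\|S_{g_m} \chi_Q - \chi_Q\|^2 = 2 - 2\prod_{k > m}(1 - |h_k|).$$
Divergence of $\sum_{k > m} |h_k|$ forces the infinite product to vanish, so the left-hand side equals $2$ for all large $m$, contradicting strong convergence on $\chi_Q$.

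For sufficiency and the uniform statement, I assume $h \in L_1(\mathcal{E})$, so that $\sum_{k > m} |h_k| \to 0$. Via the Weierstrass product inequality (or the telescoping argument already used in the proof of Theorem \ref{lemma-shift-continuous}), applied to $t g_m$ whose only nonzero coordinates are $t h_k$ for $k > m$, one obtains
$$\sup_{t \in [0,T]} \|S_{t g_m}\chi_Q - \chi_Q\|^2 \le 2\lambda_\mathcal{E}(Q)\, T \sum_{k > m} \frac{|h_k|}{b_k - a_k}$$
for all $m$ large enough that $T|h_k| < b_k - a_k$ for $k > m$; this eventually holds because $\inf_k(b_k - a_k) > 0$ whenever $\lambda_\mathcal{E}(Q) > 0$. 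The right-hand side is independent of $t$ and vanishes as $m \to \infty$. By linearity the same is true on any finite combination of block indicators, and I would close the argument with a standard $\varepsilon/3$-approximation to extend the uniform convergence to arbitrary $f \in \mathcal{H}_\mathcal{E}$. The delicate point, and the place where I would expect the main technical friction, is that the simple-function approximation of $f$ must be chosen once before the limits in $m$ and $t$ are taken, so that the resulting bound is uniform in both. The non-separability of $\mathcal{H}_\mathcal{E}$ rules out any diagonal or countable-dense argument, but the uniform unitarity $\|S_{t g_m}\|_{B(\mathcal{H}_\mathcal{E})} = 1$ makes the elementary $\varepsilon/3$-splitting work.
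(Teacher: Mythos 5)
Your proposal is correct and takes essentially the same route as the paper's proof: both reduce $\|S_{P_m h}f - S_h f\|$ via unitarity and the group property to $\|S_{h-P_m h}\chi_Q - \chi_Q\|$ on indicators of measurable blocks, identify failure of convergence for $h\notin L_1(\mathcal{E})$ through the vanishing of the intersection measure $\prod_{k>m}(b_k-a_k-|h_k|)$, and obtain the uniform-in-$t$ statement from the estimate $2\lambda_\mathcal{E}(Q)\sum_{k>m} T|h_k|/(b_k-a_k)$ inherited from the proof of Theorem \ref{lemma-shift-continuous}. Your only deviations---fixing the unit block in the necessity direction and spelling out the $\varepsilon/3$ density extension that the paper leaves implicit---are cosmetic refinements of the same argument.
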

\begin{proof}
    Suppose that $h\in E$. 
    As in the proof of the previous lemma, we take an arbitrary block $Q$ of positive measure with sides $[a_k, b_k)$.
    $$
    \|S_h \chi_Q - S_{P_m h} \chi_Q\|^2 _{\mathcal{H}_\mathcal{E}}
    = \|(S_h - S_{P_m h})  \chi_Q -  \chi_Q\|^2 _{\mathcal{H}_\mathcal{E}}
    = 2 \lambda_\mathcal{E}(Q) - 2 \lambda_\mathcal{E}((Q + (h - P_m h)) \cap Q).
    $$
    Let us choose a sufficiently large $m$ so that the intersection is non-empty, and write down its measure
    $$
    \lambda_\mathcal{E}((Q + (h - P_m h)) \cap Q) = \prod\limits_{k=1}^m (a_k - b_k)\cdot \prod\limits_{k=m+1}^\infty (a_k - b_k - |h_k|).
    $$
    Due to the measurability of the block $Q$, the series $\sum\limits_{k=m+1}^\infty b_k - a_k - 1$ converges absolutely.

    If $h\notin L_1(\mathcal{E})$, then the series $\sum\limits_{k=m+1}^\infty (b_k - a_k -|h_k| - 1)$ tends to $-\infty$, which means that the intersection measure $\lambda_\mathcal{E}((Q + (h - P_m h)) \cap Q)$ tends to zero. Then $\lambda_\mathcal{E}((Q+(h - P_m h)) \cap Q) = 0 $, and $\|S_h \chi_Q - S_{P_m h} \chi_Q\|_{\mathcal{H}_\mathcal{E}} = 2\lambda_\mathcal{E}(Q)$ does not tend to zero, which means that $S_{P_n h} f(x)$ does not converge to $S_{h}f(x)$.

    If $h\in L_1(\mathcal{E})$, then, as shown in the proof of the previous theorem,
    $$
    \|S_{th} \chi_Q - \chi_Q\|^2 _{\mathcal{H}_\mathcal{E}} 
    \le
    2 \lambda_\mathcal{E}(Q)
    \sum\limits_{k=1}^\infty 
        \dfrac{t|h_k|}{b_k - a_k}.
    $$
    Therefore,
    $$\sup\limits_{t \in [0,T]} \|S_{P_m th} \chi_Q - S_{th} \chi_Q\|^2 _{\mathcal{H}_\mathcal{E}}
    \le 
    2 \lambda_\mathcal{E}(Q)
    \sum\limits_{k=m+1}^\infty 
        \dfrac{T|h_k|}{b_k - a_k} \to 0, \quad m\to \infty.
    $$
\end{proof}

If all the coordinates of the vector $h\in L_1(\mathcal{E})$ starting from the index $m+1$ are equal to zero, then the Fourier transform $\mathcal{F}_m$ along the first $m$ coordinate directions sets its unitary equivalence with the shift operator in the momentum representation along the first $m$ coordinates:
$$
\widehat{S}_{th} f(x) = e^{it(h,x)} f(x).
$$
For any natural $n < m$, the unitary transformation $\mathcal{F}_m$ maps the difference $\widehat{S}_{P_m h} - \widehat{S}_{P_n h}$ into $S_{P_m h} - S_{P_n h}$. Therefore the sequence $\widehat{S}_{P_n h}$ is fundamental for $h \in L_1(\mathcal{E})$. This allows us to determine the shift in the momentum representation by an arbitary vector from $L_1(\mathcal{E})$.

\begin{definition}
    For any $h\in L_1(\mathcal{E})$, we define the operator
    $$
    \widehat{S}_{h} f(x) = e^{i(h,x)} f(x)
    $$
    as the limit of the sequence $\widehat{S}_{P_n h} f(x)$ at $n\to\infty$.
\end{definition}

\begin{theorem}
For any $h\in L_1(\mathcal{E})$ the semigroup $\widehat{S}_{th}$ is unitary and strongly continuous. In addition, for any $T>0$ and $f\in\mathcal{H}_\mathcal{E}$ one has
$$
\lim\limits_{m\to\infty} \sup\limits_{t \in [0,T]} \| \widehat{S}_{th} f(x) - \widehat{S}_{t P_m h} f(x)\| _{\mathcal{H}_\mathcal{E}} = 0.
$$
\end{theorem}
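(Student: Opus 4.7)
The plan is to establish the result in two stages: first prove the uniform convergence statement on $[0,T]$, and then deduce from it unitarity, the group law, and strong continuity. The technical heart of the argument is the following estimate, which I would prove for every measurable block $Q=\prod_j[a_j,b_j)$ (with $a,b\in \ell^\infty$ and $a_j<b_j$), every $v\in L_1(\mathcal{E})$, and every $T>0$:
\begin{equation}\label{plan-eq-key}
\sup_{t\in[0,T]}\|(\widehat{S}_{tv}-I)\chi_Q\|_{\mathcal{H}_\mathcal{E}}^{2}
\le 2TC_Q\,\lambda_{\mathcal{E}}(Q)\,\|v\|_{L_1(\mathcal{E})},
\end{equation}
where $C_Q=\sup_j\max(|a_j|,|b_j|)<\infty$ because the edges of a block lie in $\ell^\infty$ by definition.

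To prove \eqref{plan-eq-key} for $v$ of finite support (where $\widehat{S}_{tv}$ is literal multiplication by $e^{it(v,x)}$) I would exploit the product-measure structure of $\lambda_{\mathcal{E}}$ on the block $Q$: since $\lambda_{\mathcal{E}}$ restricted to $Q$ behaves as the product of one-dimensional Lebesgue measures on the intervals $[a_j,b_j)$ and $e^{it(v,x)}=\prod_j e^{itv_jx_j}$ factors coordinate-wise,
$$(\widehat{S}_{tv}\chi_Q,\chi_Q)_{\mathcal{H}_\mathcal{E}} = \lambda_\mathcal{E}(Q)\prod_{j=1}^{\infty}\phi_j(t),\qquad \phi_j(t)=\frac{1}{b_j-a_j}\int_{a_j}^{b_j}e^{itv_jx}\,dx.$$
Each factor satisfies $|\phi_j(t)|\le 1$ (it is an average of unimodular values) and, from $|e^{ia}-1|\le|a|$, $|\phi_j(t)-1|\le t|v_j|C_Q$. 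A telescoping bound using $|\phi_j|\le 1$ then yields $|1-\prod_j\phi_j(t)|\le \sum_j|1-\phi_j(t)|\le tC_Q\|v\|_{L_1(\mathcal{E})}$, and the identity $\|(\widehat{S}_{tv}-I)\chi_Q\|^2=2\lambda_\mathcal{E}(Q)(1-\operatorname{Re}\prod_j\phi_j(t))$ gives \eqref{plan-eq-key}. The general $v\in L_1(\mathcal{E})$ case follows by passing to the limit using the definition of $\widehat{S}_{tv}$ as the strong limit of $\widehat{S}_{tP_nv}$, and linearity plus density of simple functions in $\mathcal{H}_\mathcal{E}$ promotes \eqref{plan-eq-key} to: for every $f\in\mathcal{H}_\mathcal{E}$, $\sup_{t\in[0,T]}\|(\widehat{S}_{tv}-I)f\|\to 0$ as $\|v\|_{L_1(\mathcal{E})}\to 0$.

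From this key estimate the rest is routine passage to the limit. Passing to the strong limit as $k\to\infty$ in the identity $\widehat{S}_{tP_kh}=\widehat{S}_{tP_mh}\widehat{S}_{t(P_k-P_m)h}$ (valid as an equality of commuting multiplication operators) yields the factorization $\widehat{S}_{th}=\widehat{S}_{tP_mh}\widehat{S}_{t(h-P_mh)}$ for every $m$; unitarity of $\widehat{S}_{tP_mh}$ gives $\|(\widehat{S}_{th}-\widehat{S}_{tP_mh})f\|=\|(\widehat{S}_{t(h-P_mh)}-I)f\|$, which tends to zero uniformly in $t\in[0,T]$ because $\|h-P_mh\|_{L_1(\mathcal{E})}\to 0$, establishing the claimed uniform convergence. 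Strong continuity at $t=0$ follows by the triangle inequality: choose $m$ so that $\sup_{t\in[0,T]}\|(\widehat{S}_{t(h-P_mh)}-I)f\|<\varepsilon/2$, then use strong continuity of $\widehat{S}_{tP_mh}=\mathcal{F}_m^{-1}S_{tP_mh}\mathcal{F}_m$, which holds because $P_mh$ has finite support so Theorem~\ref{lemma-shift-continuous} applies to $S_{tP_mh}$. Continuity at other $t$ follows from the group law, itself obtained by strong-limit passage in $\widehat{S}_{(t+s)P_mh}=\widehat{S}_{tP_mh}\widehat{S}_{sP_mh}$. Unitarity of $\widehat{S}_{th}$ comes from the same limit procedure applied to $\widehat{S}_{tP_mh}\widehat{S}_{-tP_mh}=I$, using the already-established strong convergence of both $\widehat{S}_{\pm tP_mh}$ to conclude $\widehat{S}_{th}\widehat{S}_{-th}=I$ and symmetrically.

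The main obstacle I expect is justifying the product-measure factorization $\int_Q e^{it(v,x)}\,d\lambda_{\mathcal{E}}(x)=\prod_{j\in\operatorname{supp}v}\int_{a_j}^{b_j}e^{itv_jx}\,dx\cdot\prod_{j\notin\operatorname{supp}v}(b_j-a_j)$ for the only finitely additive measure $\lambda_{\mathcal{E}}$. For $v$ of finite support the integrand depends on finitely many coordinates and one would reduce the computation to a Fubini-type statement on the block, which is the constructed behavior of $\lambda_{\mathcal{E}}$ in \cite{S2016,SITO}. A secondary subtlety is the passage to the infinite product: absolute convergence of $\sum_j|1-\phi_j(t)|$ uniform in $t\in[0,T]$ follows from the $L_1$ hypothesis on $v$ together with the boundedness $C_Q<\infty$, but verifying this cleanly across the limit defining $\widehat{S}_{tv}$ for $v$ with infinite support requires care.
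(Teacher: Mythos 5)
Your proposal is correct, and it reaches the key uniform estimate by a genuinely different route than the paper. The paper's proof works with the sup-norm of the multiplier: it writes $\|\widehat S_{th}\chi_Q-\widehat S_{tP_mh}\chi_Q\|_{\mathcal H_{\mathcal E}}\le\|\chi_Q\|\,\sup_{x\in Q,\,t\in[0,T]}|1-e^{it(x,h-P_mh)}|$ and then uses the observation that a nonempty measurable block satisfies $\max_{x\in Q}|x_k|\le 2$ for all large $k$, so the exponent is uniformly small once $\sum_{k>m}|h_k|$ is small. You instead compute the quadratic quantity $(\widehat S_{tv}\chi_Q,\chi_Q)_{\mathcal H_{\mathcal E}}=\lambda_{\mathcal E}(Q)\prod_j\phi_j(t)$ through the factorization of $\lambda_{\mathcal E}$ on a block and telescope $|1-\prod_j\phi_j|\le\sum_j|1-\phi_j|$; this is exactly the technique the paper applies to the coordinate shifts in Theorem~\ref{lemma-shift-continuous} (where $\|S_{th}\chi_Q-\chi_Q\|^2=2\lambda_{\mathcal E}(Q)-2\lambda_{\mathcal E}((Q+th)\cap Q)$ is estimated term by term), transported to the momentum side. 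Your route has one extra cost — the Fubini-type identity $(\widehat S_{tv}\chi_Q,\chi_Q)=\prod_j\int_{a_j}^{b_j}e^{itv_jx}\,dx$ for finitely supported $v$ — but this is legitimate inside the paper's framework, since ${\cal H}_{\cal E}=L_2(E_N)\otimes{\cal H}_{{\cal E}^N}$ (Theorem~4 of \cite{BusS}) and $\chi_Q$ splits as a tensor product accordingly. What it buys is an explicit modulus of continuity, $\sup_{t\in[0,T]}\|(\widehat S_{tv}-I)\chi_Q\|^2\le 2TC_Q\lambda_{\mathcal E}(Q)\|v\|_{L_1(\mathcal E)}$ with $C_Q$ finite because block edges lie in $\ell^\infty$, which is insensitive to sign cancellations; note that the paper's displayed bound $\bigl|1-\exp\bigl(2iT\sum_{k>m}h_k\bigr)\bigr|$ is, as printed, vulnerable to precisely such cancellations (the intended reading is $|1-e^{i\theta}|\le|\theta|\le 2T\sum_{k>m}|h_k|$), so your version is if anything cleaner. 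Your explicit factorization $\widehat S_{th}=\widehat S_{tP_mh}\widehat S_{t(h-P_mh)}$, obtained by a strong-limit passage and the unitarity of $\widehat S_{tP_mh}$, reduces the uniform convergence to smallness of $\widehat S_{tv}-I$ at $v=h-P_mh$ — a reduction the paper performs only implicitly — while your treatment of unitarity, the group law, and strong continuity (strong limits of unitaries; uniform-on-segments limits of the strongly continuous groups $\widehat S_{tP_mh}$, whose continuity follows via $\mathcal F_m$ from Theorem~\ref{lemma-shift-continuous}) coincides with the paper's.
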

\begin{proof}
The unitarity is checked directly: operator $\widehat{S}_{h}^{-1} =\widehat{S}_{-h}$ is obtained from $\widehat{S}_{h}$ by complex conjugation.

The strong continuity of $\widehat{S}_{th}$ follows directly from the segment-uniform convergence of the semigroups $\widehat{S}_{t P_m} h$ in a strong operator topology, since any semigroup $\widehat{S}_{t P_m h}$ is strongly continuous.

Let us prove the latter property. As in the case of the theorem \ref{theorem-shift-appr}, it is enough to prove it only for the characteristic functions of blocks of a positive measure. Consider such a block $Q$ with sides $[a_k, b_k)$.
\begin{align*}
    \sup\limits_{t \in [0,T]} 
     \|\widehat{S}_{th}(t) f - \widehat{S}_{P_m th}(t) f\|
     _{\mathcal{H}_\mathcal{E}}
    &= \sup\limits_{t \in [0,T]} 
    \| f(x) - e^{it(x,h-P_m h)} f(x)\| _{\mathcal{H}_\mathcal{E}}\\
    &\le 
    \|f(x)\| 
    \sup\limits_{x \in Q, t \in [0,T]} |1 - e^{it(x,h - P_m h)}|.
\end{align*}
Since the block $Q$ is measurable, the sequence of its sides  $[a_k,b_k)$ has the property that $\lim\limits_{k\to\infty} (b_k - a_k) = 1$. In addition, in order for $Q$ to be non-empty, it is necessary that the sequence $\inf\limits_{x\in [a_k,b_k)} [a_k, b_k)$ converges to zero. Therefore, there exists such a $N_1 \ge N$ that for all coordinates $k \ge N_1$, $\max\limits_{x\in Q} |x_k| \le 2$. In this case, for a sufficiently large $m$
$$
\sup\limits_{x \in Q, t \in [0,T]} 
|1 - e^{it(x,h - P_m h)}| 
\le 
\left|1 - \exp\left(2iT \sum\limits_{k=m+1}^\infty h_k \right) \right|.
$$

Since $h\in L_1(\mathcal{E})$, this difference tends to zero at $m\to\infty$.
\end{proof}

\section{Averaging of random shifts}
\label{Sec:Averaging}

In this section we introduce semigroups of averaging operators of shift $S_{th}$ and $\widehat{S}_{th}$ over random $h$. We find the conditions when these semigroups are strongly continuous and when they can be approximated with averaging over shifts along finite vectors.

Let $D$ be a nuclear positive-definite self-adjoint operator whose eigenvectors coincide with the basis vectors ${\mathcal E}$. Consider on $E$ a one-parameter family of Gaussian measures $\nu_{tD}, \ t\ge 0$, such that for every non-negative $t$ the Fourier transform of the measure $\nu_{tD}$ is equal to
$$
\widehat{\nu}_t(\xi) = e^{-\frac{1}{2} t D \xi^2}.
$$
In other words, consider a family of Gaussian measures with zero expectation and variance $tD$ (according to Theorem 2.1\cite{Kuo75}, the nuclearity of the operator $D$ is equivalent to the finiteness of the second moment of the measure $\nu_{tD}$). The constructed one-parameter family of Gaussian measures forms a semigroup with respect to the convolution operation~\cite{BogGM},

\begin{equation}\label{gauss-semigroup}
\nu_{tD} \ast \nu_{sD} = \nu_{(t+s)D}.
\end{equation}

Recall~\cite{BogGM,path_integrals} that the Gaussian measure $\nu$ on the space $E$ is a finite additive function of a set on the minimal $\sigma$-algebra $C(E)$ containing cylindrical subsets of $E$ and such that its projection  on any finite-dimensional subspace of $E$ is a Gaussian measure. In this case, according to theorems 2.1 and 2.3 in~\cite{Kuo75},  the set of finitely additive functions has a unique countably additive extension to a measure on the $\sigma$-algebra ${\mathcal B}(E)$ of Borel subsets of $E$ if and only if it has a nuclear covariance operator $D$.

Using defined one-parameter family of Gaussian measures $\nu_{tD}$, we define one-parameter families of transformations of the space ${\mathcal H}_\mathcal{E}$ as a result of averaging the transformation of random shift operators $S_h$ and $\widehat{S}_h$ with respect to $h\in E$ distributed as $\nu_{tD}$.

\begin{definition}\label{def-smoothed-shift}
    Let $D$ be a nuclear positive-definite self-adjoint operator whose eigenvectors coincide with the basis vectors ${\mathcal E}$. We  define
    $$
    {\mathcal U}_D(t) f(x) 
    = \int\limits_E S_h f(x) d \nu_{tD}(h) 
    $$
    and, in the case when $D^{1/2}$ is also nuclear,
    $$
    \widehat{{\mathcal U}}_D(t) f(x) 
    = \int\limits_E \widehat{S}_h f(x) d \nu_{tD}(h),
    $$
    where the integral is understood in the sense of Pettis.
\end{definition}

The requirement of the nuclearity of $D^{1/2}$ is motivated by the fact that the shift in the momentum representation is defined only for vectors from $L_1(\mathcal{E})$, and the nuclearity of $D^{1/2}$ is equivalent to the property $\nu_D(L_1(\mathcal{E})) = 1$ (\cite{sonis}). However, as we see later, $\widehat{{\mathcal U}}_D(t)f(x)$ can also be defined in the case when $D^{1/2}$ is not nuclear, as the limit of averaging in finite-dimensional subspaces.

By virtue of the equality~(\ref{gauss-semigroup}), both defined families are semigroups. Also, for any $t$, the operators ${\mathcal U}_D(t)$ and $\widehat{{\mathcal U}}_D(t)$ are self-adjoint.

For fixed $f$ and $g$, the numeric functions $\phi(h) = (S_h f(x), g(x))$ and $\psi(h) = (\widehat{S}_h f(x), g(x))$ are continuous and bounded, and therefore measurable with respect to the $\sigma$-algebra $C(E)$ generated by cylindrical sets, which guarantees the correctness of the definition. At the same time, both of these functions do not exceed by the norm $\|f(x)\|_{\mathcal{H}_\mathcal{E}} \|v(x)\|_{\mathcal{H}_\mathcal{E}}$, therefore ${\mathcal U}_D(t)$ and $\widehat{{\mathcal U}}_D(t)$ are bounded linear operators $\mathcal{H}_\mathcal{E}\to\mathcal{H}_\mathcal{E}$ with norm less or equal to one.

Note also that by virtue of its definition, the operator $\mathcal{U}_D(t)= \mathcal{U}_{tD}$ commutes with any shift operator $S_r$, since
\begin{align*}
({ \mathcal U}_D(t) S_r u(x), v(x))
    &= \int\limits_E ( S_h S_r u(x), v(x)) d \nu_{tD}(h)
    = \int\limits_E (S_h u(x), S_{-r} v(x)) d \nu_{tD}(h) \\
    &= ({\mathcal U}_D(t) u(x), S_{-r} v(x))
    = ({S_r \mathcal U}_D(t) u(x), v(x)).
\end{align*}
Similarly, $\widehat{\mathcal{U}}_{tD}$ commutes with an arbitrary operator $\widehat{S}_r$. 

Consider the operator $D_N = D P_N$, in which the first $N$ eigenvalues coincide with $D$, and the rest are zero. The restriction of $D_N$ to $E_N$ is the matrix $n\times n$, which we denote by $D_{E_N}$. Then both operators ${\mathcal U}_{tD_N}$ and $\widehat{{\mathcal U}}_{tD_N}$ act on $E=E_N\times E^N$ as $U_{t D_{E_N}}\times I$ and $\widehat{U}_{t D_{E_N}} \times I$ respectively.
Here $U_{t D_{E_N}}$ denotes convolution with $n$-dimensional Gaussian measure with zero expectation and variance $D_{E_N}$, and $\widehat{U}_{t D_{E_N}}$ represents multiplication by $\exp(-t(D_{E_N} x, x))$.

So we get that
\begin{align*}
    {\mathcal U}_{tD_N} f(x) 
    &=  f(x) \star (\nu_{tD_N} \otimes \delta(0)),\\
    \widehat{{\mathcal U}}_{tD_N} f(x) 
    &= \exp(-(D_N x, x)) f(x).
\end{align*}
    
Let us first consider the properties of the coordinate-wise smoothing semigroup.

\begin{theorem}\label{th-coord-smooth-appr}
Let $D$ be a nuclear positive-definite self-adjoint operator whose eigenvectors coincide with the basis vectors ${\mathcal E}$. If $D^{1/2}$ is nuclear, then for any $T>0$ and $f\in\mathcal{H}_\mathcal{E}$ one has
$$
\lim\limits_{m\to\infty}
\sup\limits_{t \in [0,T]}
\| {\mathcal U}_{tD} f(x) - {\mathcal U}_{tD_m} f(x)\| _{\mathcal{H}_\mathcal{E}} = 0.
$$
\end{theorem}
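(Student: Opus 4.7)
The plan is to exploit the Gaussian convolution identity $\nu_{tD} = \nu_{tD_m} * \nu_{tR_m}$, where $R_m := D - D_m$. This identity follows from the semigroup property (\ref{gauss-semigroup}) because $D$ and $D_m$ are simultaneously diagonal in $\mathcal{E}$, so their covariance operators simply add. Starting from the Pettis-integral definition of $\mathcal{U}_{tD_m}\mathcal{U}_{tR_m} f$, using self-adjointness of $\mathcal{U}_{tD_m}$, and applying Fubini for the countably additive Gaussian measures involved, one obtains the operator factorization $\mathcal{U}_{tD} = \mathcal{U}_{tD_m}\,\mathcal{U}_{tR_m}$. Since $\|\mathcal{U}_{tD_m}\|_{B(\mathcal{H}_\mathcal{E})} \le 1$, it then suffices to prove
$$
\sup_{t \in [0,T]} \|\mathcal{U}_{tR_m} f - f\|_{\mathcal{H}_\mathcal{E}} \to 0 \quad (m \to \infty), \qquad \forall f \in \mathcal{H}_\mathcal{E}.
$$

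A standard $3\varepsilon$-argument, exploiting the uniform contraction $\|\mathcal{U}_{tR_m}\| \le 1$, reduces the problem to the case $f = \chi_Q$ for an absolutely measurable block $Q = \Pi_{a,b}$ of positive $\lambda_\mathcal{E}$-measure. By self-adjointness and contractivity, $\|\mathcal{U}_{tR_m}\chi_Q - \chi_Q\|^2 \le 2\lambda_\mathcal{E}(Q) - 2(\mathcal{U}_{tR_m}\chi_Q,\chi_Q)$, and unfolding the Pettis integral gives $(\mathcal{U}_{tR_m}\chi_Q, \chi_Q) = \int_E \lambda_\mathcal{E}(Q \cap (Q+h))\, d\nu_{tR_m}(h)$. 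Since $R_m$ annihilates $E_m$, the measure $\nu_{tR_m}$ is concentrated on $E^m$, i.e.\ on vectors with $h_1 = \cdots = h_m = 0$, and for such $h$ one has $\lambda_\mathcal{E}(Q \cap (Q+h)) = \prod_{k \le m}(b_k - a_k)\cdot\prod_{k > m}\max(b_k - a_k - |h_k|,\,0)$.

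The main step is to control this quantity uniformly in $t \in [0,T]$. Absolute measurability of $Q$ (which forces $b_k - a_k \to 1$) implies $\delta := \inf_k (b_k - a_k) > 0$. Applying the elementary inequality $1 - \prod_k y_k \le \sum_k (1 - y_k)$ valid for $y_k \in [0,1]$ with $y_k = \max(1 - |h_k|/(b_k - a_k),\,0)$ yields, for every $h$, the bound $\lambda_\mathcal{E}(Q) - \lambda_\mathcal{E}(Q \cap (Q+h)) \le \delta^{-1}\lambda_\mathcal{E}(Q)\sum_{k > m}|h_k|$. Integrating against $\nu_{tR_m}$ and using $\mathbb{E}_{\nu_{tR_m}}|h_k| = \sqrt{2td_k/\pi}$ for $k > m$ gives
$$
\lambda_\mathcal{E}(Q) - (\mathcal{U}_{tR_m}\chi_Q, \chi_Q) \le \delta^{-1}\sqrt{2T/\pi}\,\lambda_\mathcal{E}(Q)\sum_{k > m}\sqrt{d_k},
$$
uniformly for $t \in [0,T]$. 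The main obstacle is precisely the infinite product defining the measure of $Q \cap (Q+h)$: estimating it uniformly in the infinitely many coordinates of $h$ forces the linearization $1 - \prod y_k \le \sum (1 - y_k)$, which converts the product into a sum of first moments. At this point the stronger hypothesis that $D^{1/2}$ (and not just $D$) is nuclear is exactly what delivers the needed summability $\sum_{k > m}\sqrt{d_k} \to 0$, completing the argument.
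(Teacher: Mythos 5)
Your proof is correct, and it takes a genuinely different route from the paper's. The paper compares the two averages pathwise under a single Gaussian measure: since ${\mathcal U}_{tD_m}f=\int_E S_{P_m h}f\,d\nu_{tD}(h)$, it bounds $\|S_{h}\chi_Q-S_{P_m h}\chi_Q\|^2=\|S_{h-P_mh}\chi_Q-\chi_Q\|^2\le C(Q)\|h-P_mh\|_{L_1(\mathcal{E})}$ using the Lipschitz estimate from the proof of Theorem \ref{lemma-shift-continuous}, then splits $L_1(\mathcal{E})$ into the good set $X_m=\{h:\|h-P_mh\|_{L_1(\mathcal{E})}<\varepsilon\}$ and its complement, invoking Sonis' theorem ($D^{1/2}$ nuclear $\Leftrightarrow \nu_D(L_1(\mathcal{E}))=1$) so that $\nu_D(L_1(\mathcal{E})\setminus X_m)\to 0$. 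You instead factorize ${\mathcal U}_{tD}={\mathcal U}_{tD_m}{\mathcal U}_{tR_m}$ through the convolution identity and estimate the overlap $({\mathcal U}_{tR_m}\chi_Q,\chi_Q)$ directly, linearizing the infinite product and integrating first absolute moments. Your route buys an explicit uniform rate, $\sup_{t\in[0,T]}\|{\mathcal U}_{tD}\chi_Q-{\mathcal U}_{tD_m}\chi_Q\|^2\le 2\delta^{-1}\sqrt{2T/\pi}\,\lambda_{\mathcal{E}}(Q)\sum_{k>m}\sqrt{d_k}$, and replaces the soft measure-theoretic input (Sonis) by the elementary identity $\mathbb{E}|h_k|=\sqrt{2td_k/\pi}$ --- the two uses of nuclearity of $D^{1/2}$ are, of course, the same fact $\sum_k\sqrt{d_k}<\infty$ in different clothing; what the paper's pathwise coupling buys is reusability, since the identical $X_m$-splitting is recycled verbatim for the momentum semigroup in Theorem \ref{th-two-mom-smooth}, whereas your factorization exploits the specific convolution structure of ${\mathcal U}$. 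Three small points to make your write-up airtight: (i) the operator identity ${\mathcal U}_{tD}={\mathcal U}_{tD_m}{\mathcal U}_{tR_m}$ does need the Fubini step you flag, legitimate because $(h,h')\mapsto(S_{h+h'}f,g)$ is bounded and measurable and both Gaussian measures are countably additive (the paper tacitly uses the same argument for the semigroup property); (ii) $\delta=\inf_k(b_k-a_k)>0$ follows from \emph{positivity of the measure} of $Q$ (which forces $b_k-a_k\to 1$), not from absolute measurability alone, which only gives $\limsup_k(b_k-a_k)\le 1$; (iii) the product formula for $\lambda_{\mathcal{E}}(Q\cap(Q+h))$ must be read with the convention that a product diverging to zero yields measure zero --- since the partial products are nonincreasing, the inequality $1-\prod_k y_k\le\sum_k(1-y_k)$ survives the limit, which is exactly what your estimate requires.
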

\begin{proof}
Since norms of all operators ${\mathcal U}_{tD}$ and ${\mathcal U}_{tD_N}$ do not exceed one, it is enough to show the required convergence for all characteristic functions $f= \chi_Q$ of beams of finite measure, since they form a dense subset in $\mathcal{H}_\mathcal{E}$.

We use the fact that $D^{1/2}$ is nuclear if and only if $\nu_D(L_1(\mathcal{E})) = 1$ (\cite{sonis}).

Fix $\varepsilon > 0$ and consider in $(L_1(\mathcal{E}))$ a family of subsets
$$
X_m = 
\{ h \in L_1(\mathcal{E}) \mid \|h - P_m h\|_{L_1(\mathcal{E})} < \varepsilon
    \}.
    $$
    Because for all
    $h \in L_1( \mathcal{E})$
    the value
    $\sum\limits_{k=m+1}^\infty |h_k|$
    tends to zero at
    $m\to\infty$,
    vector $h$ lies in all $X_m$ starting from some $m$.
    Thus,
    $L_1(\mathcal{E}) = \bigcup\limits_{m=1}^\infty X_m$,
    and $\lim\limits_{m\to\infty} \nu_D(X_m) = 1$.

    For any measurable block $Q$ with sides $[a_k,b_k)$,
    $$
    \|S_{h} \chi_Q - \chi_Q\|^2 \le 2 \lambda_\mathcal{E}(Q)
    \sum\limits_{k=1}^\infty 
        \dfrac{|h_k|}{b_k - a_k} = C(Q) \|h\|_{L_1(\mathcal{E})}.
$$
Thus,
$$
    \left| \int\limits_{X_m} |S_{th} \chi_Q - S_{P_m th} \chi_Q|^2  d \nu_D(h) \right|
    \le 
    \int\limits_{X_m} 
    \|S_{t(h - P_m h)} \chi_Q - \chi_Q\|^2_{\mathcal{H}_\mathcal{E}} d \nu_D(h)
    \le 
$$
$$
    \le
    \int\limits_{X_m} C(Q) t \|h - P_m h\|_{L_1(\mathcal{E})} d \mu_D(h)
    \le C(Q) t \varepsilon \nu_D(X_m)
    \le C(Q) t \varepsilon.
$$
On the other hand,
    $$
    \left|\ \int\limits_{L_1(\mathcal{E}) \setminus X_m} |S_{th} \chi_Q - S_{P_m th} \chi_Q|^2  d \nu_D(h) \right|
    \le 2 \lambda_\mathcal{E}(Q) \nu_D(L_1(\mathcal{E}) \setminus X_m) \to 0, \quad m \to \infty.
    $$
    Thus, we have obtained that for any given $\varepsilon > 0$ the difference
    $\|{\mathcal U}_{tD}\chi_Q - {\mathcal U}_{tD_N} \chi_Q\|_{\mathcal{H}_\mathcal{E}}$ does not exceed $C(Q) t\varepsilon+ o(m)$, which implies the necessary convergence.
\end{proof}

\begin{corollary}\label{cor-coord-smooth-cont}
    Let $D$ be a nuclear positive-definite self-adjoint operator whose eigenvectors coincide with the basis vectors ${\mathcal E}$. If $D^{1/2}$ is nuclear, then ${\mathcal U}_{tD}$ is strongly continuous.
\end{corollary}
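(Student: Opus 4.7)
The plan is to combine Theorem \ref{th-coord-smooth-appr} with the strong continuity of each finite-dimensional approximant ${\mathcal U}_{tD_m}$ and the elementary fact that a uniform-on-segments limit of strongly continuous operator-valued functions is strongly continuous.

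First I would establish strong continuity of each ${\mathcal U}_{tD_m}$. As was noted just after Definition \ref{def-smoothed-shift}, on the tensor decomposition ${\mathcal H}_{\mathcal E} = L_2(E_m) \otimes {\mathcal H}_{{\mathcal E}^m}$ the operator ${\mathcal U}_{tD_m}$ factorises as $U_{tD_{E_m}} \otimes I$, where $U_{tD_{E_m}}$ is convolution with the $m$-dimensional Gaussian measure of covariance $t D_{E_m}$. Since the latter is the standard heat semigroup on $L_2(E_m)$, which is strongly continuous, its tensor product with the identity is strongly continuous on ${\mathcal H}_{\mathcal E}$.

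Next, I would fix $f \in {\mathcal H}_{\mathcal E}$ and $T > 0$ and invoke Theorem \ref{th-coord-smooth-appr}, which yields
$$
\lim_{m \to \infty} \sup_{t \in [0,T]} \bigl\| {\mathcal U}_{tD} f - {\mathcal U}_{tD_m} f \bigr\|_{{\mathcal H}_{\mathcal E}} = 0.
$$
For every $m$ the map $t \mapsto {\mathcal U}_{tD_m} f$ is continuous on $[0,T]$ by the previous paragraph, and these continuous functions converge uniformly on $[0,T]$. Hence their limit $t \mapsto {\mathcal U}_{tD} f$ is continuous on $[0,T]$ as well. Since $T > 0$ and $f$ were arbitrary, ${\mathcal U}_{tD}$ is strongly continuous on ${\mathbb R}_+$; continuity at $t = 0$ is included because ${\mathcal U}_{0 D_m} = I$ for every $m$ and the uniform estimate covers a neighbourhood of zero.

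I do not anticipate a serious obstacle: the heavy lifting has already been done in Theorem \ref{th-coord-smooth-appr}, where the nuclearity of $D^{1/2}$ was used via $\nu_D(L_1({\mathcal E})) = 1$. The only point that requires mild care is the reduction to characteristic functions of measurable blocks when verifying joint continuity of the finite-dimensional pieces, but this is subsumed in the standard strong continuity of the finite-dimensional Gaussian convolution semigroup.
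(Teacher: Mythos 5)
Your proposal is correct and is exactly the argument the paper intends: the corollary is stated without proof precisely because it follows from Theorem \ref{th-coord-smooth-appr} in the way you describe, namely uniform-on-segments convergence of the strongly continuous finite-dimensional semigroups ${\mathcal U}_{tD_m}=U_{tD_{E_m}}\otimes I$ forces strong continuity of the limit ${\mathcal U}_{tD}$ (the same pattern the paper uses explicitly for $\widehat{S}_{th}$ and for Corollary \ref{cor-mom-smooth-cont}). Your added care about continuity at $t=0$ and the density argument for the tensor factor is consistent with, and slightly more explicit than, the paper's treatment.
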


    Note that the requirement for the nuclearity of $D^{1/2}$ in the theorem \ref{th-coord-smooth-appr} and the corollary \ref{cor-coord-smooth-cont} can not be relaxed.

\begin{theorem}\label{th-coord-smooth-trivial}
    Let $D$ be a nuclear positive-definite self-adjoint operator whose eigenvectors coincide with the basis vectors ${\mathcal E}$. If $D^{1/2}$ not be nuclear, then for any $T>0$ and $f\in\mathcal{H}_\mathcal{E}$ one has $\mathcal{U}_{tD} f = 0$. At the same time, for any $f\in\mathcal{H}_\mathcal{E}$, the sequence of functions $\mathcal{U}_{D_n} f$ converges to zero, that is, to $\mathcal{U}_D(t) f$.
\end{theorem}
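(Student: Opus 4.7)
The plan is to establish the two claims in turn. By density of linear combinations of block indicators $\chi_Q$ in $\mathcal{H}_\mathcal{E}$ and the uniform bound $\|\mathcal{U}_{tD}\|\le 1$, the first claim $\mathcal{U}_{tD}=0$ reduces to showing $(\mathcal{U}_{tD}\chi_Q,\chi_{Q'})_{\mathcal{H}_\mathcal{E}}=0$ for any two absolutely measurable blocks $Q=\prod_k[a_k,b_k)$ and $Q'=\prod_k[a_k',b_k')$ of positive measure. By the Pettis-integral definition this pairing equals $\int_E \lambda_\mathcal{E}((Q+h)\cap Q')\,d\nu_{tD}(h)$. Writing $c_k=a_k'-a_k$ and $\delta_k=h_k-c_k$, a case analysis of the edge lengths $L_k(h)=\min(b_k+h_k,b_k')-\max(a_k+h_k,a_k')$ of the intersection block gives the two-sided bound $\bigl||L_k(h)-1|-|\delta_k|\bigr|\le|b_k-a_k-1|+|b_k'-a_k'-1|$, whose right-hand side is absolutely summable by positivity of $\lambda_\mathcal{E}(Q)$ and $\lambda_\mathcal{E}(Q')$. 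Hence $\lambda_\mathcal{E}((Q+h)\cap Q')>0$ if and only if $h-c\in L_1(\mathcal{E})$, and the first claim reduces to the key lemma that $\nu_{tD}(L_1(\mathcal{E})+c)=0$ for every $c\in E$.

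For this key lemma, I observe that the event $\{h:\sum_k|h_k-c_k|<\infty\}$ is a tail event in the independent Gaussian coordinates $h_k\sim N(0,td_k)$, so by Kolmogorov's zero-one law it has probability $0$ or $1$. By symmetry of $\nu_{tD}$ under $h\mapsto -h$, the same is true for $\{h:\sum_k|h_k+c_k|<\infty\}$. If both probabilities were $1$, then from $2|h_k|\le|h_k-c_k|+|h_k+c_k|$ it would follow that $\sum_k|h_k|<\infty$ almost surely. However, $\sum_k\mathrm{Var}|h_k|\le t\,\mathrm{tr}\,D<\infty$ by nuclearity of $D$, so by Kolmogorov's theorem on sums of independent random variables the centered series $\sum_k(|h_k|-\mathbb{E}|h_k|)$ converges almost surely; combined with $\sum_k\mathbb{E}|h_k|=\sqrt{2t/\pi}\sum_k\sqrt{d_k}=\infty$ by non-nuclearity of $D^{1/2}$, this forces $\sum_k|h_k|=\infty$ almost surely, a contradiction. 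Hence both probabilities are $0$, and the key lemma follows.

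For the second claim, the tensor factorization $\mathcal{U}_{tD_n}=U_{tD_{E_n}}\otimes I$ on $\mathcal{H}_\mathcal{E}=L_2(E_n)\otimes\mathcal{H}_{\mathcal{E}^n}$ gives, for a block $Q=Q_n\times Q^n$, the identity $\|\mathcal{U}_{tD_n}\chi_Q\|_{\mathcal{H}_\mathcal{E}}^2=\lambda_\mathcal{E}(Q)\prod_{k=1}^n r_k(t)$, where $r_k(t)=\|U_{td_k}^{(1)}\chi_{[a_k,b_k]}\|_{L_2(\mathbb{R})}^2/(b_k-a_k)$ is the one-dimensional Gaussian smoothing ratio. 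A routine one-dimensional computation using the antiderivative $\int\Phi(y)\,dy=y\Phi(y)+\phi(y)$ yields the asymptotic $1-r_k(t)\sim 2\sqrt{td_k}/(\sqrt{\pi}(b_k-a_k))$ as $d_k\to 0$; combined with $b_k-a_k\to 1$ and $\sum_k\sqrt{d_k}=\infty$, this forces $\prod_{k=1}^n r_k(t)\to 0$, so $\mathcal{U}_{tD_n}\chi_Q\to 0$ in $\mathcal{H}_\mathcal{E}$. Density of block indicators and the uniform bound $\|\mathcal{U}_{tD_n}\|\le 1$ then extend the convergence to arbitrary $f\in\mathcal{H}_\mathcal{E}$.

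The main obstacle is the key lemma: the symmetry-plus-Kolmogorov argument is what makes the nuclearity of $D^{1/2}$ a sharp dividing line, matching the positive result in Theorem \ref{th-coord-smooth-appr}. The remaining ingredients---the geometric reduction of the inner product to membership in the coset $L_1(\mathcal{E})+c$, and the one-dimensional Gaussian smoothing asymptotics---are essentially routine, although the block-intersection case analysis requires some care because the intersection can collapse to zero measure for several non-exclusive reasons.
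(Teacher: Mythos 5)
Your proof is correct, but for the main claim it takes a genuinely different route from the paper's. The paper evaluates $(\mathcal{U}_{tD}\chi_Q,\chi_W)$ head-on as an infinite product of one-dimensional Gaussian convolution integrals: it first reduces to the worst case of concentric blocks (each factor is maximized when the centers of $[r_k,s_k)$ and $[a_k,b_k)$ coincide, with $\delta_k$ the half-difference of lengths), then computes each factor explicitly in terms of the error function, extracting the asymptotics $\alpha_i=b_i-a_i+2\delta_i-\sqrt{2d_i/\pi}+o(d_i^\infty)$, so that non-nuclearity of $D^{1/2}$, i.e.\ $\sum_i\sqrt{d_i}=\infty$, forces the infinite product to diverge to $0$. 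You instead reduce the pairing to the measure-theoretic statement $\nu_{tD}(L_1(\mathcal{E})+c)=0$ and prove it softly: the coset membership is a tail event of the independent coordinates, Kolmogorov's zero-one law plus the symmetry $h\mapsto -h$ and the bound $2|h_k|\le|h_k-c_k|+|h_k+c_k|$ reduce everything to $\nu_{tD}(L_1(\mathcal{E}))=0$, which follows from the two-series theorem applied to $\sum_k(|h_k|-\mathbb{E}|h_k|)$ since $\sum_k\mathrm{Var}|h_k|\le t\,\mathrm{Tr}\,D<\infty$ while $\sum_k\mathbb{E}|h_k|=\infty$. This avoids the explicit Gaussian computation, makes the sharpness of the $D^{1/2}$-nuclearity threshold transparent, and in effect extends to cosets the Sonis dichotomy ($D^{1/2}$ nuclear iff $\nu_D(L_1(\mathcal{E}))=1$) that the paper invokes only for the positive direction in Theorem \ref{th-coord-smooth-appr}; what the paper's computation buys in exchange is a quantitative per-factor decay estimate, which it reuses verbatim for the second claim via $\|\mathcal{U}_{D_n}f\|^2=(\mathcal{U}_{2D_n}f,f)$ --- there your tensor-factorization argument with the one-dimensional smoothing ratio $r_k(t)$ is the same computation in substance. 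Two small points to tighten: in your geometric reduction the stated equivalence should be a one-sided implication (positive measure of $(Q+h)\cap Q'$ implies $h-c\in L_1(\mathcal{E})$; the converse can fail when some edge of the intersection is empty), and only that direction is needed; and you should record that $h\mapsto\lambda_{\mathcal{E}}\bigl((Q+h)\cap Q'\bigr)$ is Borel (a pointwise limit of continuous partial products), so that the null-coset bound may legitimately be integrated against the countably additive measure $\nu_{tD}$.
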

\begin{proof}
    Consider two measurable blocks of positive measure $Q$ and $W$ with sides $[a_k, b_k)$ and $[r_k, s_k)$, respectively, and consider the scalar product
    $$
    (\mathcal{U}_D \chi_Q, \chi_W) _{\mathcal{H}_\mathcal{E}}
    = \int\limits_E (S_h \chi_Q, \chi_W) _{\mathcal{H}_\mathcal{E}} d \nu_D(h)
    = \prod_{k=1}^\infty \int\limits_{r_k}^{s_k} \chi_{[a_k, b_k]}(x) \star \rho_{d_j}(x) dx,
    $$
    where $\rho_{d_j}(x)$ is the density function of a normal random variable with zero expectation and variance $d_j$.

    Each of these integrals reaches the largest value with a fixed difference $s_k-r_k$ if the center of the half-interval is $[r_k, s_k)$ coincides with the center of the half-interval $[a_k, b_k)$. In this case, we denote $[r_k, s_k) = [a_k - \delta_k, b_k + \delta_k)$. It follows from the measurability of $W$ that the series of $\delta_k$ absolutely converges. We consider all $\delta_k$ as positive, because in this case $(\mathcal{U}_D\chi_Q, \chi_W)$ will be the maximum.
\begin{align*}
    (\mathcal{U}_D \chi_Q, \chi_W) _{\mathcal{H}_\mathcal{E}} 
    &= \int\limits_E (S_h \chi_Q, \chi_W) _{\mathcal{H}_\mathcal{E}} d \nu_D(h) \\
    &= \prod_{k=1}^\infty \int\limits_{\mathbb{R}} \mu_J([b_k + x, a_k + x) \cap [b_k + \delta_k, a_k - \delta_k)) \rho_{d_j}(x) dx\\
    & \le \prod_{k=1}^\infty \int\limits_{\mathbb{R}} \max(0, b_k - a_k + 2\delta_k - |x|) \rho_{d_j}(x) dx \\
    & = \prod_{k=1}^\infty 2 \cdot \int\limits_0^{b_k - a_k + 2 \delta_k}  (b_k - a_k + 2\delta_k - x) \rho_{d_j}(x) dx.
 \end{align*}
    Substitute $x=\sqrt{d_k} y$ in each integral to switch to the standard Gaussian value.
    $$
    (\mathcal{U}_D \chi_Q, \chi_W) _{\mathcal{H}_\mathcal{E}}
    = \prod_{k=1}^\infty 2 \cdot \int\limits_0^{(b_k - a_k + 2 \delta_k)/\sqrt{d_k}}  (b_k - a_k + 2\delta_k - \sqrt{d_k}y) \rho(y) dy.
    $$
    Omitting the technical calculations, we get
     $$
     (\mathcal{U}_D \chi_Q, \chi_W) = \prod\limits_{i=1}^\infty \left[ 
    (b_i - a_i + 2\delta_i) {\rm erf} \left( \dfrac{b_i - a_i + 2\delta_i}{ \sqrt{2 d_i}} \right)
    + 2 \sqrt{d_i}  \rho \left( \dfrac{b_i - a_i + 2\delta_i }{\sqrt{d_i}} \right) - \sqrt{d_i} \sqrt{\dfrac{2}{\pi}}
    \right].
    $$
    When $i \to \infty$ value ${\rm erf} \left( \dfrac{b_i - a_i + 2\delta_i}{ \sqrt{2 d_i}} \right) - 1$ and $\rho \left( \dfrac{b_i - a_i + 2\delta_i }{\sqrt{d_i}} \right)$ converges to zero faster than any polynomial over $d_i$, and asymptotically 
\begin{align*}
\alpha_i &= (b_i - a_i + 2\delta_i) {\rm erf} \left( \dfrac{b_i - a_i + 2\delta_i}{ \sqrt{2 d_i}} \right)
    + 2 \sqrt{d_i}  \rho \left( \dfrac{b_i - a_i + 2\delta_i }{\sqrt{d_i}} \right) -  \sqrt{d_i} \sqrt{\dfrac{2}{\pi}}\\
&= b_i - a_i + 2 \delta_i - \sqrt{d_i}  \sqrt{\dfrac{2}{\pi}} + o(d_i^\infty),\quad i \to \infty.
\end{align*}
    It follows from the measurability of $Q$ and $W$ that $(b_i - a_i - 1)$ and $2\delta_i$ lie in $L_1(\mathcal{E})$, while $\{\sqrt{d_i}\}$  does not lie in $L_1(\mathcal{E})$ according to the assumption of the theorem. Therefore, the sum of the series of $\alpha_i$ diverges to $-\infty$, and $(\mathcal{U}_D\chi_Q, \chi_W) _{\mathcal{H}_\mathcal{E}}$ is zero for any $Q$ and $W$.

    As a consequence, $(\mathcal{U}_D f, g) _{\mathcal{H}_\mathcal{E}} = 0$ for any simple $f$ and $g$, and since the latter are dense in $\mathcal{H}_\mathcal{E}$, then $(\mathcal{U}_D f, g) _{\mathcal{H}_\mathcal{E}} = 0$ for any square integrable functions. Therefore, $\mathcal{U}_D f = 0$ as an element of $\mathcal{H}_\mathcal{E}$.

    Based on the same considerations and considering that the norm of the averaging operator does not exceed one, in order to prove that the sequence $\mathcal{U}_{D_n} f$ converges to zero, it is enough to do this for $f=\chi_Q$, where $Q$ is a measurable bar of positive measure.
    $$
    \|\mathcal{U}_{D_n} f\|^2 _{\mathcal{H}_\mathcal{E}} = (\mathcal{U}_{D_n} f, \mathcal{U}_{D_n} f) _{\mathcal{H}_\mathcal{E}} = (\mathcal{U}_{2D_n} f, f) _{\mathcal{H}_\mathcal{E}}.
    $$
    Let us use the calculations performed above:
    $$
    (\mathcal{U}_{2D_n} f, f) _{\mathcal{H}_\mathcal{E}}
    = \prod\limits_{k=1}^n \left(b_i - a_i -  \sqrt{\dfrac{4 d_i}{\pi}} + o(d_i^\infty) \right)\cdot \prod\limits_{k=n+1}^\infty  (b_i - a_i).
    $$
    Since the sequence $\{\sqrt{d_k}\}$ diverges, then at $n\to\infty$ this expression tends to zero, which finishes the proof.
\end{proof}

Let us now consider momentum averaging. Namely, we show that the limit of finite-dimensional approximations $\widehat{\mathcal{U}}_{D_n}$ is correctly defined and is continuous for nuclear $D$, and if $D^{1/2}$ is also nuclear, then this limit is the averaging of shifts in the momentum representation of vectors. From this we can immediately conclude that the semigroups $\mathcal{U}_{D}$ and $\widehat{\mathcal{U}}_{D}$ are not unitarily equivalent, since for non-nuclear $D^{1/2}$ the first one is absolutely discontinuous in a strong operator topology, and the second one is continuous.

\begin{theorem}\label{th-mom-smooth-appr}
    Let $D$ be a nuclear positive-definite self-adjoint operator whose eigenvectors coincide with the basis vectors ${\mathcal E}$. Then the sequence $\widehat{\mathcal{U}}_{D_n}$ converges in the strong operator topology of the space to some operator $A_{D}$. Moreover, for any $T>0$ and $f\in\mathcal{H}_\mathcal{E}$ one has
    $$
    \lim\limits_{m\to\infty}
    \sup\limits_{t \in [0,T]}
    \|A_{tD} f(x) - \widehat{\mathcal{U}}_{tD_m} f(x)\| _{\mathcal{H}_\mathcal{E}} = 0.
    $$
\end{theorem}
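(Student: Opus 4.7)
The plan is to prove that $\widehat{\mathcal{U}}_{tD_n}\chi_Q$ is Cauchy in $\mathcal{H}_\mathcal{E}$ uniformly in $t \in [0,T]$ for the indicator $\chi_Q$ of an arbitrary non-empty measurable block $Q$, and then extend to all of $\mathcal{H}_\mathcal{E}$ by linearity, density of simple functions, and the contractivity $\|\widehat{\mathcal{U}}_{tD_n}\|\le 1$. The limit will define the desired operator $A_{tD}$.

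Fix a non-empty measurable block $Q = \prod_k [a_k, b_k)$ and set $M_k = \max(|a_k|, |b_k|)$. Starting from the pointwise identity $\widehat{\mathcal{U}}_{tD_j}\chi_Q(x) = e^{-t(D_j x, x)}\chi_Q(x)$, I would use the factorization
$$
e^{-t(D_m x, x)} - e^{-t(D_n x, x)} = -e^{-t(D_n x, x)}\bigl(1 - e^{-t\sum_{k=n+1}^m d_k x_k^2}\bigr)
$$
for $m > n$, together with the elementary inequality $(1-e^{-u})^2 \le 1 - e^{-2u}$ for $u \ge 0$ and the bound $e^{-2t(D_n x,x)} \le 1$, to obtain
$$
\|\widehat{\mathcal{U}}_{tD_m}\chi_Q - \widehat{\mathcal{U}}_{tD_n}\chi_Q\|^2 \le \int_Q \bigl(1 - e^{-2t\sum_{k=n+1}^m d_k x_k^2}\bigr)\, d\lambda_\mathcal{E}.
$$
Applying $1 - e^{-u} \le u$ and the explicit evaluation $\int_Q x_k^2\, d\lambda_\mathcal{E} = \lambda_\mathcal{E}(Q)(a_k^2 + a_k b_k + b_k^2)/3 \le M_k^2\, \lambda_\mathcal{E}(Q)$, the estimate reduces to
$$
\sup_{t\in[0,T]} \|\widehat{\mathcal{U}}_{tD_m}\chi_Q - \widehat{\mathcal{U}}_{tD_n}\chi_Q\|^2 \le 2T \lambda_\mathcal{E}(Q) \sum_{k=n+1}^m d_k M_k^2.
$$

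It remains to verify $\sum_k d_k M_k^2 < \infty$. Since $Q$ is non-empty there exists $y \in E$ with $y_k \in [a_k, b_k)$ for every $k$, so $\delta_k := \mathrm{dist}(0, [a_k, b_k))$ satisfies $\sum_k \delta_k^2 \le \|y\|_E^2 < \infty$. The measurability condition $\sum_k \max(0, \ln(b_k - a_k)) < \infty$ forces $(b_k - a_k)$ to be bounded by some constant $C$. Combining $M_k \le \delta_k + (b_k - a_k)$, hence $M_k^2 \le 2\delta_k^2 + 2C^2$, with the nuclearity of $D$ (which gives $\sum_k d_k < \infty$), I conclude $\sum_k d_k M_k^2 < \infty$. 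Thus the tail $\sum_{k>n} d_k M_k^2 \to 0$, proving the uniform-in-$t$ Cauchy property on indicators of non-empty measurable blocks. By linearity this extends to simple functions, and then by a standard $\varepsilon/3$-argument exploiting the contractivity to all of $\mathcal{H}_\mathcal{E}$; the resulting limit is $A_{tD}$.

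The main obstacle is the finiteness $\sum_k d_k M_k^2 < \infty$. This is the place where nuclearity of $D$ (but not of $D^{1/2}$) is used essentially, and where one must bring to bear the geometric information that $Q$ contains an element of the Hilbert space $E$ to control $\delta_k$. Note that measurability alone of $Q$ does not control the centers of the edges $[a_k,b_k)$; it is the non-emptiness of $Q$ as a subset of $E$ that forces $\sum \delta_k^2 < \infty$, and these two facts together with $\sum d_k < \infty$ are what allow the tail bound. Once this quantitative bound is in place, the remaining density and contraction steps are routine, and uniformity in $t\in[0,T]$ is automatic because the right-hand side of the key estimate is linear in $t$.
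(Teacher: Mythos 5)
Your proposal is correct and follows essentially the same route as the paper: both prove the uniform-in-$t$ Cauchy property of $\widehat{\mathcal{U}}_{tD_n}\chi_Q$ on indicators of measurable blocks by combining the nuclear tail $\sum_{k>n} d_k \to 0$ with the geometric fact that a non-empty measurable block has uniformly controlled coordinates, and then extend by density of block indicators and contractivity. The only difference is technical: the paper bounds the multiplier in sup-norm over $Q$ (using $\max_{x\in Q}|x_k|\le 2$ for large $k$), whereas you integrate and use second moments via the quantitative estimate $M_k \le \delta_k + (b_k-a_k)$ with $\sum_k \delta_k^2 \le \|y\|_E^2$ --- an equivalent, if slightly more explicit, encoding of the same two ingredients.
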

\begin{proof}
    Again, it is enough to prove the theorem for arbitrary $f$ from some dense subspace of square integrable functions.
    
    Fix $T>0$, $\varepsilon > 0$ and some $f =\chi_Q$ for a measurable bar $Q$ of positive measure. For all $n<m$
\begin{align*}
    \sup\limits_{t \in [0,T]} \|\widehat{\mathcal{U}}_{tD_n} f - \widehat{\mathcal{U}}_{tD_m} f\| _{\mathcal{H}_\mathcal{E}}
    &= \sup\limits_{t \in [0,T]} 
    \left\| \exp\left( -\dfrac{t}{2} \sum\limits_{k=1}^m d_k x_k^2\right) f(x) - \exp\left( -\dfrac{t}{2} \sum\limits_{k=1}^n d_k x_k^2\right) f(x)  \right\| _{\mathcal{H}_\mathcal{E}}\\
    &\le \|f\| _{\mathcal{H}_\mathcal{E}} \cdot \max\limits_{x\in Q}\left| 1 - \exp\left( -\dfrac{T}{2} \sum\limits_{k=n+1}^m d_k x_k^2\right)\right|.
\end{align*}
    Since the block $Q$ is measurable, the sequence of its sides $[a_k,b_k]$ has the property that $\lim\limits_{k\to\infty} (b_k - a_k) = 1$. In addition, in order for $Q$ to be nonempty, it is necessary that the sequence $\min\limits_{x\in [a_k,b_k]} [a_k, b_k]$ converges to zero. Therefore, there exists such a $N_1$ that for all coordinates $k\ge N_1$, $\max\limits_{x\in Q} |x_k|\le 2$.

    Further, since the operator $D$ is nuclear, there exists $N_2$ such that for all $N_2 \le i <j$, one has $\sum\limits_{k=i+1}^j d_k\le\varepsilon$.

    Therefore, if $n\ge\max(N_1, N_2)$, then
    $$
    \max\limits_{x\in Q}\left|1 - \exp\left( -\dfrac{T}{2} \sum\limits_{k=n+1}^m d_k x_k^2\right)\right|
    \le \left|1 - \exp\left( -2 T \sum\limits_{k=n+1}^m d_k \right)\right|
    \le \left|1 - \exp\left( -2 T \varepsilon \right)\right|.
    $$
    This implies that the sequence $\widehat{\mathcal{U}}_{tD_n}$ is fundamental in the semi-norm
    $\phi_{T,f} = \sup\limits_{t\in [0,t]}\|f\|_{\mathcal{H}_\mathcal{E}}$. 
    If we denote the limit $\widehat{\mathcal{U}}_{tD_n}$ at $n\to\infty$ for $A_{tD}$, then $\widehat{\mathcal{U}}_{tD_n}$ converge to $A_{tD}$ with respect to $\phi_{T,f}$.
\end{proof}

\begin{theorem}\label{th-two-mom-smooth}
    Let $D$ be a nuclear positive-definite self-adjoint operator whose eigenvectors coincide with the basis vectors ${\mathcal E}$. If $D^{1/2}$ is nuclear, then the limit 
    $A_{D} = \lim\limits_{n\to\infty} \widehat{\mathcal{U}}_{D_n}$
    exists and coincides with $\widehat{\mathcal{U}}_{D}$.
\end{theorem}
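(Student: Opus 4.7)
My plan is to reduce the identification $A_{tD} = \widehat{\mathcal{U}}_{tD}$ to a dominated convergence argument applied to the Pettis integral defining $\widehat{\mathcal{U}}_{tD}$, and then use uniqueness of limits against what Theorem~\ref{th-mom-smooth-appr} already gives.

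The key algebraic observation is that $\nu_{tD_n}$ is the pushforward of $\nu_{tD}$ under the orthogonal projection $P_n:E\to E_n$. This follows from the fact that the basis $\mathcal{E}$ diagonalises $D$, so the characteristic functional of the pushforward equals $\xi\mapsto\exp(-\tfrac{t}{2}(P_n D P_n\xi,\xi))=\exp(-\tfrac{t}{2}(D_n\xi,\xi))$. Consequently, for any $f\in\mathcal{H}_\mathcal{E}$,
\begin{equation*}
\widehat{\mathcal{U}}_{tD_n} f = \int_E \widehat{S}_h f\, d\nu_{tD_n}(h) = \int_E \widehat{S}_{P_n h}f\, d\nu_{tD}(h),
\end{equation*}
the second equality being change of variables under the pushforward. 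Note that $P_n h$ has finitely many nonzero coordinates, so $\widehat{S}_{P_n h}$ is defined for every $h\in E$ without any integrability hypothesis on $h$.

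Next I would invoke the hypothesis that $D^{1/2}$ is nuclear, which by the cited result of \cite{sonis} gives $\nu_{tD}(L_1(\mathcal{E}))=1$. For $h\in L_1(\mathcal{E})$ the definition of $\widehat{S}_h$ as the limit of $\widehat{S}_{P_n h}$ yields
\begin{equation*}
\|\widehat{S}_{P_n h}f-\widehat{S}_h f\|_{\mathcal{H}_\mathcal{E}}\xrightarrow[n\to\infty]{}0
\quad\text{for }\nu_{tD}\text{-a.e. }h.
\end{equation*}
Then, for arbitrary $g\in\mathcal{H}_\mathcal{E}$, the scalar integrands $h\mapsto(\widehat{S}_{P_n h}f,g)$ converge pointwise $\nu_{tD}$-a.e.\ to $h\mapsto(\widehat{S}_h f,g)$ and are uniformly bounded in modulus by $\|f\|\,\|g\|$ since every $\widehat{S}_h$ and every $\widehat{S}_{P_n h}$ is unitary. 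Lebesgue's dominated convergence theorem therefore yields
\begin{equation*}
(\widehat{\mathcal{U}}_{tD_n}f,g)\longrightarrow\int_E(\widehat{S}_h f,g)\,d\nu_{tD}(h)=(\widehat{\mathcal{U}}_{tD}f,g).
\end{equation*}

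Finally, Theorem~\ref{th-mom-smooth-appr} tells us that $\widehat{\mathcal{U}}_{tD_n}f\to A_{tD}f$ strongly in $\mathcal{H}_\mathcal{E}$, hence also weakly, so $(A_{tD}f,g)=(\widehat{\mathcal{U}}_{tD}f,g)$ for every $g$, which forces $A_{tD}f=\widehat{\mathcal{U}}_{tD}f$. The only step requiring any real care is the pushforward identity, which is transparent once one uses the diagonal action of $D$ in the basis $\mathcal{E}$; everything else is dominated convergence and uniqueness of limits in the weak topology.
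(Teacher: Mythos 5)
Your proof is correct, but it takes a genuinely different route from the paper's. The paper re-runs the quantitative machinery of Theorem~\ref{th-coord-smooth-appr}: it reduces to characteristic functions $\chi_Q$ of measurable blocks (by density and the uniform norm bound $\le 1$), splits $L_1(\mathcal{E})$ into $X_m=\{h:\|h-P_mh\|_{L_1(\mathcal{E})}<\varepsilon\}$ and its complement, applies the Lipschitz-type estimate $\|\widehat{S}_{h}\chi_Q-\chi_Q\|^2\le C(Q)\|h\|_{L_1(\mathcal{E})}$ on $X_m$ and the trivial bound $2\lambda_\mathcal{E}(Q)\nu_D(L_1(\mathcal{E})\setminus X_m)$ on the complement, yielding directly a norm bound $C(Q)t\varepsilon+o(m)$ that is uniform in $t$ on segments. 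You instead make the pushforward identity $(P_n)_*\nu_{tD}=\nu_{tD_n}$ explicit (the paper uses it only implicitly when it integrates $\widehat{S}_{P_m th}\chi_Q$ against $\nu_D$), and then argue softly: a.e.\ pointwise convergence $\widehat{S}_{P_nh}f\to\widehat{S}_hf$ is definitional on $L_1(\mathcal{E})$, which has full measure by the same Sonis citation the paper uses, and dominated convergence with the constant majorant $\|f\|\,\|g\|$ identifies the weak limit; strong convergence to $A_{tD}$ from Theorem~\ref{th-mom-smooth-appr} plus uniqueness of weak limits finishes. What each approach buys: the paper's argument is self-contained and quantitative, proving the uniform-in-$t$ convergence displayed after the theorem directly; yours is shorter, avoids the reduction to blocks, and notably sidesteps the paper's questionable step $\|\widehat{S}_{h}\chi_Q-\chi_Q\|=\|S_{h}\chi_Q-\chi_Q\|$ (these two norms are in general \emph{not} equal --- compare $\int_Q|e^{i(h,x)}-1|^2\,d\lambda_\mathcal{E}$ with $2\lambda_\mathcal{E}(Q)-2\lambda_\mathcal{E}((Q+h)\cap Q)$ already in one variable --- though the inequality the paper actually needs does hold by a direct estimate). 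The price of your route is that it identifies the limit only pointwise in $t$, but since Theorem~\ref{th-mom-smooth-appr} already supplies convergence uniform on segments, the full statement is recovered; a minor point worth making explicit is that the measurability of $h\mapsto(\widehat{S}_hf,g)$ on $L_1(\mathcal{E})$ needed for dominated convergence is itself furnished by your argument, as this function is the a.e.\ limit of the measurable functions $h\mapsto(\widehat{S}_{P_nh}f,g)$, with $\nu_{tD}$ taken in its countably additive Borel extension.
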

\begin{proof}
    The proof is similar to the proof of the theorem~\ref{th-coord-smooth-appr}. It is enough to prove that for any measurable block $Q$ one has 
    $$ \lim\limits_{m\to\infty} 
     \widehat{\mathcal{U}}_{D_m} \chi_Q 
     = \widehat{\mathcal{U}}_{D} \chi_Q.
     $$
    Define the sets $X_m$:
    $$
    X_m = 
    \{ h \in L_1(\mathcal{E}) \mid \|h - P_m h\|_{L_1(\mathcal{E})} < \varepsilon
    \}.
    $$
    They are nested in each other by construction, and moreover, $L_1(\mathcal{E}) = \bigcup\limits_{m=1}^\infty X_m$.

    For any measurable bar $Q$ with sides $[a_k,b_k)$ the following inequality holds
    $$
    \|\widehat{S}_{h} \chi_Q - \chi_Q\|^2
     = \|S_{h} \chi_Q - \chi_Q\|^2
     \le  C(Q) \|h\|_{L_1(\mathcal{E})}.
     $$
     Thus, we can use the inequalities from the proof of theorem~\ref{th-coord-smooth-appr}:
     $$
    \left|\ \int\limits_{X_m} |\widehat{S}_{th} \chi_Q - \widehat{S}_{P_m th} \chi_Q|^2  d \nu_D(h) \right|
    \le 
    \int\limits_{X_m} C(Q) t \|h - P_m h\|_{L_1(\mathcal{E})} d \mu_D(h)
    \le C(Q) t \varepsilon \nu_D(X_m)
    \le C(Q) t \varepsilon,
    $$

    $$\left|\ \int\limits_{L_1(\mathcal{E}) \setminus X_m} |\widehat{S}_{th} \chi_Q - S_{P_m th} \chi_Q|^2  d \nu_D(h) \right|
    \le 2 \lambda_\mathcal{E}(Q) \nu_D(L_1(\mathcal{E}) \setminus X_m) \to 0, \quad m \to \infty.
    $$
    Thus, we have obtained that for any given $\varepsilon > 0$ the difference is $\|{\widehat{\mathcal U}}_{tD}\chi_Q - {\widehat{\mathcal U}}_{tD_m}\chi_Q\|_{\mathcal{H}_\mathcal{E}}$ does not exceed $C(Q) t \varepsilon +o(m)$.
\end{proof}

Thus, if $D^{1/2}$ is nuclear, then
    $$
    \lim\limits_{m\to\infty}
    \sup\limits_{t \in [0,T]}
    \|\widehat{\mathcal{U}}_{tD} f(x) - \widehat{\mathcal{U}}_{tD_m} f(x)\| _{\mathcal{H}_\mathcal{E}} = 0.
    $$
Using the result of the theorem~\ref{th-mom-smooth-appr} , we can extend the definition~\ref{def-smoothed-shift}.

\begin{definition}\label{def-smoothed-shift-2}
    Let $D$ be a nuclear positive-definite self-adjoint operator whose eigenvectors coincide with the basis vectors ${\mathcal E}$. If the operator $D^{1/2}$ is nuclear, then we define the operator $\widehat{\mathcal{U}}_{D}$ as the limit of the operators $\lim\limits_{n\to\infty}\widehat{\mathcal{U}}_{D_n}$ in the strong operator topology. The theorem \ref{th-two-mom-smooth} guarantees consistency of this definition with definition~\ref{def-smoothed-shift}.
\end{definition}

The analogue of \ref{th-mom-smooth-appr} immediately follows:

\begin{corollary}\label{cor-mom-smooth-cont}
    Let $D$ be a nuclear positive-definite self-adjoint operator whose eigenvectors coincide with the basis vectors ${\mathcal E}$. If the operator $D^{1/2}$ is nuclear, then the semigroup $\widehat{\mathcal{U}}_{tD}$ is strongly continuous.
\end{corollary}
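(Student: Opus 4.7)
The plan is to deduce strong continuity of $\widehat{\mathcal{U}}_{tD}$ from the uniform-on-segments strong convergence $\widehat{\mathcal{U}}_{tD_n}\to\widehat{\mathcal{U}}_{tD}$ established in Theorem~\ref{th-mom-smooth-appr} together with Theorem~\ref{th-two-mom-smooth} (which identifies the limit $A_D$ with $\widehat{\mathcal{U}}_D$ under the additional assumption that $D^{1/2}$ is nuclear), combined with the strong continuity in $t$ of each finite-dimensional approximant $\widehat{\mathcal{U}}_{tD_n}$. A standard ``uniform limit of continuous maps is continuous'' argument then delivers the conclusion.

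First, I would verify that for each fixed $n$ the one-parameter family $t\mapsto\widehat{\mathcal{U}}_{tD_n}$ is strongly continuous on $\mathcal{H}_\mathcal{E}$. Indeed, this operator acts as multiplication by $\exp(-\tfrac{t}{2}\sum_{k=1}^{n}d_k x_k^2)$, which is bounded by $1$ and converges pointwise in $x$ to $\exp(-\tfrac{t_0}{2}\sum_{k=1}^{n}d_k x_k^2)$ as $t\to t_0$. For any characteristic function $\chi_Q$ of a measurable block $Q$ of positive $\lambda_\mathcal{E}$-measure, the squared integrand $|\widehat{\mathcal{U}}_{tD_n}\chi_Q - \widehat{\mathcal{U}}_{t_0D_n}\chi_Q|^2$ is dominated by $4\chi_Q$, and only finitely many coordinates $x_1,\dots,x_n$ enter the exponent, so the dominated convergence theorem applied inside the tensor decomposition $\mathcal{H}_\mathcal{E}=L_2(E_n)\otimes\mathcal{H}_{\mathcal{E}^n}$ gives convergence in $\mathcal{H}_\mathcal{E}$-norm. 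Density of simple functions combined with the uniform bound $\|\widehat{\mathcal{U}}_{tD_n}\|\le 1$ extends continuity to all $f\in\mathcal{H}_\mathcal{E}$.

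Second, fix $t_0\ge 0$, $f\in\mathcal{H}_\mathcal{E}$ and $\varepsilon>0$, and choose $T>t_0$. By Theorem~\ref{th-mom-smooth-appr} and Definition~\ref{def-smoothed-shift-2}, there exists $n$ with
$$
\sup_{t\in[0,T]}\bigl\|\widehat{\mathcal{U}}_{tD}f-\widehat{\mathcal{U}}_{tD_n}f\bigr\|_{\mathcal{H}_\mathcal{E}}<\tfrac{\varepsilon}{3}.
$$
The triangle inequality then yields, for $t\in[0,T]$,
$$
\bigl\|\widehat{\mathcal{U}}_{tD}f-\widehat{\mathcal{U}}_{t_0D}f\bigr\|_{\mathcal{H}_\mathcal{E}}\le\tfrac{2\varepsilon}{3}+\bigl\|\widehat{\mathcal{U}}_{tD_n}f-\widehat{\mathcal{U}}_{t_0D_n}f\bigr\|_{\mathcal{H}_\mathcal{E}},
$$
and by the first step the last term is less than $\tfrac{\varepsilon}{3}$ for all $t$ in a suitable neighborhood of $t_0$ inside $[0,T]$. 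Hence $\widehat{\mathcal{U}}_{tD}f\to\widehat{\mathcal{U}}_{t_0D}f$ as $t\to t_0$, which is strong continuity.

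I do not expect a genuine obstacle: the heavy lifting has already been done in Theorems~\ref{th-mom-smooth-appr} and~\ref{th-two-mom-smooth}. The only point requiring a little care is justifying the strong continuity of the finite-dimensional multiplication operator $\widehat{\mathcal{U}}_{tD_n}$ on the non-separable space $\mathcal{H}_\mathcal{E}$; this is handled by the tensor-product decomposition $\mathcal{H}_\mathcal{E}=L_2(E_n)\otimes\mathcal{H}_{\mathcal{E}^n}$ from \cite{BusS}, which reduces the question to ordinary dominated convergence on $L_2(E_n)$, together with density and the uniform contractivity bound $\|\widehat{\mathcal{U}}_{tD_n}\|\le 1$.
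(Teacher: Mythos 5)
Your proposal is correct and follows essentially the same route as the paper, which states the corollary as an immediate consequence of the segment-uniform strong convergence $\widehat{\mathcal{U}}_{tD_n}\to\widehat{\mathcal{U}}_{tD}$ (Theorems \ref{th-mom-smooth-appr} and \ref{th-two-mom-smooth}) together with the strong continuity of each finite-dimensional approximant. You merely make explicit the standard $\varepsilon/3$ argument and the dominated-convergence verification inside the tensor decomposition $\mathcal{H}_\mathcal{E}=L_2(E_n)\otimes\mathcal{H}_{\mathcal{E}^n}$, which the paper leaves implicit.
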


\section{Semigroup generators}
\label{Sec:Generators}

In this section, we study properties of generators of semigroups $\mathcal{U}_{tD}$ and $\widehat{\mathcal{U}}_{tD}$ introduced in previous section. We prove that they are self-adjoint and negative-definite. 
We also find under some condition a common core  for the generators of the semigroups $\mathcal{U}_{tD_1}$ and $\mathcal{U}_{tD_2}$, $\widehat{\mathcal{U}}_{tD_1}$ and $\widehat{\mathcal{U}}_{tD_2}$, and finally, $\mathcal{U}_{tD_1}$ and $\widehat{\mathcal{U}}_{tD_2}$ with $D_1 \ne D_2$. And, most importantly, we prove that closure of positive linear combinations of generators of averaging semigroup in the coordinate and momentum representations are self-adjoint and under some conditions generate strongly continuous semigroups. This allows us to define a Hamiltonian of an infinite-dimensional quantum oscillator.

\begin{definition}\label{def-smooth-functions}
    Let $D$ be a nuclear positive-definite self-adjoint operator whose eigenvectors coincide with the basis vectors ${\mathcal E}$. Denote by $\widehat{C}_{D}^\infty$ the set ${\rm span}\{\widehat{\mathcal{U}}_{t D} {\cal H}_{\cal E},\, t>0\}$. If additionally $D^{1/2}$ is nuclear, then we denote by $C_{D}^\infty$ the set ${\rm span}\{\mathcal{U}_{t D}{\cal H}_{\cal E},\, t>0\}$. Both of these subspaces are linear subspaces in ${\cal H}_{\cal E}$. We call $C_D^\infty$ as the space of smooth functions, because (as we see later) its elements have mixed derivatives of any order along the coordinate directions.
\end{definition}

\begin{lemma}\label{lemma-hat-c-dense}
    Let $D$ be a nuclear positive-definite self-adjoint operator whose eigenvectors coincide with the basis vectors ${\mathcal E}$.
    Then the linear subspace $\widehat{C}_{D}^\infty$ is dense in ${\cal H}_{\cal E}$.
\end{lemma}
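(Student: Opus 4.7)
The plan is to show that $\widehat{\mathcal U}_{tD}f\to f$ strongly in $\mathcal{H}_{\mathcal E}$ as $t\to 0^+$ for every $f\in\mathcal{H}_{\mathcal E}$; density of $\widehat{C}_D^\infty$ then follows at once, since each vector $\widehat{\mathcal U}_{tD}f$ with $t>0$ lies in $\widehat{C}_D^\infty$ by definition, so every $f$ is realised as the strong limit of such vectors. In effect, the lemma is equivalent to strong continuity of the semigroup $\widehat{\mathcal U}_{tD}$ at the origin, and that is what I shall establish.

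The main observation is that for each fixed $m$ the truncated semigroup $\widehat{\mathcal U}_{tD_m}$ is strongly continuous at $t=0$ on all of $\mathcal{H}_{\mathcal E}$. Indeed, with respect to the factorization $\mathcal{H}_{\mathcal E}=L_2(E_m)\otimes\mathcal{H}_{{\mathcal E}^m}$ it acts as $\widehat{U}_{tD_{E_m}}\otimes I$, where on the finite-dimensional factor it is multiplication by the uniformly bounded function $\exp(-\tfrac{t}{2}(D_{E_m}x,x))$ that tends pointwise to $1$ as $t\to 0^+$. Ordinary dominated convergence on $L_2(E_m)$ handles basic tensors $u\otimes v$, and contractivity of $\widehat{\mathcal U}_{tD_m}$ together with density of finite sums of such tensors in $\mathcal{H}_{\mathcal E}$ extends strong continuity to the whole space.

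The argument then closes by a standard two-layer $\varepsilon/2$ estimate. Fix $f\in\mathcal{H}_{\mathcal E}$ and $\varepsilon>0$; Theorem~\ref{th-mom-smooth-appr} applied with $T=1$ supplies $M$ with $\sup_{t\in[0,1]}\|\widehat{\mathcal U}_{tD}f-\widehat{\mathcal U}_{tD_M}f\|_{\mathcal{H}_{\mathcal E}}<\varepsilon/2$, and the strong continuity of $\widehat{\mathcal U}_{tD_M}$ just verified supplies $\delta\in(0,1)$ with $\|\widehat{\mathcal U}_{tD_M}f-f\|_{\mathcal{H}_{\mathcal E}}<\varepsilon/2$ for $t\in(0,\delta)$. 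The triangle inequality then yields $\|\widehat{\mathcal U}_{tD}f-f\|_{\mathcal{H}_{\mathcal E}}<\varepsilon$ on $(0,\delta)$, as required.

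I do not anticipate a serious obstacle: nuclearity of $D$ enters only through the already-proved uniform strong approximation in Theorem~\ref{th-mom-smooth-appr}, and the finite-dimensional strong continuity is classical. A direct alternative would test density on characteristic functions $\chi_Q$ of measurable blocks and reduce $\|\widehat{\mathcal U}_{tD}\chi_Q-\chi_Q\|^2\to 0$ to the infinite-product limit $\prod_{k\ge 1}(b_k-a_k)^{-1}\int_{a_k}^{b_k}\exp(-\tfrac{t}{2}d_k x^2)\,dx\to 1$; the summability $\sum_k d_k\max(|a_k|,|b_k|)^2<\infty$ needed to control this product does follow from $D$ nuclear together with the summability of the squared distances $\mathrm{dist}(0,[a_k,b_k))^2$ forced by non-emptiness of $Q$, but the truncation argument above bypasses this combinatorial bookkeeping entirely.
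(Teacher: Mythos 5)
Your proposal is correct and follows essentially the same route as the paper: the paper's proof simply invokes strong continuity of $\widehat{\mathcal{U}}_{tD}$ at $t=0$ (Corollary~\ref{cor-mom-smooth-cont}) and notes that $\widehat{\mathcal{U}}_{tD}f\in\widehat{C}_D^\infty$ for $t>0$, and that corollary is itself obtained exactly as you do --- from the uniform-on-segments finite-dimensional approximation of Theorem~\ref{th-mom-smooth-appr} together with strong continuity of each truncated semigroup $\widehat{\mathcal{U}}_{tD_m}$. Your unpacked version has the minor virtue of making explicit that only nuclearity of $D$ (not of $D^{1/2}$) is used, which matches the lemma's hypothesis more precisely than the paper's citation of Corollary~\ref{cor-mom-smooth-cont}, whose statement assumes $D^{1/2}$ nuclear.
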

 
\begin{proof}
    The statement follows from the strong continuity of the semigroup established in the corollary~\ref{cor-mom-smooth-cont}. For any function $f \in \mathcal{H}_\mathcal{E}$ and any $t>0$, the function $\widehat{\mathcal{U}}_{tD} f$ belongs to $\widehat{C}_D^\infty$, and $\lim\limits_{t\to 0} \|\widehat{\mathcal{U}}_{tD} f - f\|_{\mathcal{H}_\mathcal{E}} = 0$.
\end{proof}

\begin{lemma}\label{lemma-c-dense}
    Let $D$ be a nuclear positive-definite self-adjoint operator whose eigenvectors coincide with the basis vectors ${\mathcal E}$. If $D^{1/2}$ is nuclear, then $C_{D}^\infty$ is tight in ${\cal H}_{\cal E}$.
\end{lemma}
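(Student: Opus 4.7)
The plan is to mirror the argument used for Lemma \ref{lemma-hat-c-dense} essentially verbatim, simply replacing the momentum-representation semigroup $\widehat{\mathcal{U}}_{tD}$ with the coordinate-representation semigroup $\mathcal{U}_{tD}$. The hypothesis that $D^{1/2}$ is nuclear is precisely the condition that makes this replacement legitimate: it is the extra assumption needed in the coordinate case (Corollary \ref{cor-coord-smooth-cont}), whereas in the momentum case no such extra requirement was needed.

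More concretely, I would first invoke Corollary \ref{cor-coord-smooth-cont}, which under the nuclearity of $D^{1/2}$ guarantees that the semigroup $\mathcal{U}_{tD}$ is strongly continuous on $\mathcal{H}_\mathcal{E}$. Strong continuity at $t=0$ means exactly that for every $f \in \mathcal{H}_\mathcal{E}$,
\[
\lim_{t \to 0^+} \|\mathcal{U}_{tD} f - f\|_{\mathcal{H}_\mathcal{E}} = 0.
\]
Since by Definition \ref{def-smooth-functions} each vector $\mathcal{U}_{tD} f$ lies in $C_D^\infty$ for every $t > 0$, this exhibits $f$ as a norm limit of elements of $C_D^\infty$. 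As $f$ was arbitrary, $C_D^\infty$ is dense in $\mathcal{H}_\mathcal{E}$.

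There is essentially no obstacle: the only non-routine ingredient, namely that the averaging semigroup in the coordinate representation is strongly continuous when $D^{1/2}$ is nuclear, has already been proved as Theorem \ref{th-coord-smooth-appr} and recorded in Corollary \ref{cor-coord-smooth-cont}. The proof is therefore a one-line deduction of the same form as Lemma \ref{lemma-hat-c-dense}; the purpose of stating it separately is to record both density facts under their respective nuclearity hypotheses for later use in Section \ref{Sec:Generators} when identifying common cores of the generators.
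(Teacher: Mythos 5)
Your proof is correct and follows exactly the paper's own argument: the paper likewise proves Lemma \ref{lemma-c-dense} by mirroring Lemma \ref{lemma-hat-c-dense}, invoking the strong continuity of $\mathcal{U}_{tD}$ from Corollary \ref{cor-coord-smooth-cont} (which is where the nuclearity of $D^{1/2}$ enters) so that each $f\in\mathcal{H}_\mathcal{E}$ is the norm limit of $\mathcal{U}_{tD}f\in C_D^\infty$ as $t\to 0^+$. Nothing is missing.
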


\begin{proof}
    The proof is similar to lemma \ref{lemma-hat-c-dense}. The continuity of the semigroup $\mathcal{U}_{tD}$ is established as a consequence of corollary \ref{cor-coord-smooth-cont}.
\end{proof} 

\begin{lemma}\label{lemma-C-subset}
    Let $D$ and $B$ be nuclear positive-definite self-adjoint operators whose eigenvectors coincide with the basis vectors ${\mathcal E}$. Let also $D\prec C B$ for some $C>0$, that is, the inequality $d_k\le C b_k$ is satisfied for the corresponding eigenvalues. Then $\widehat{C}_B^\infty\subseteq\widehat{C}_D^\infty$, and $\widehat{C}_B^\infty$ is invariant with respect to the action of $\widehat{\mathcal{U}}_{tD}$. If additionally $D^{1/2}$ is nuclear, then $C_B^\infty\subseteq C_D^\infty$, and $C_B^\infty$ is invariant with respect to the action of $\mathcal{U}_{tD}$.
\end{lemma}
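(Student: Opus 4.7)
The plan is to exploit a pseudo-semigroup factorization across different covariance operators. For any pair of commuting nonnegative nuclear self-adjoint operators $H_1,H_2$ sharing the eigenbasis $\mathcal E$,
$$\widehat{\mathcal U}_{H_1+H_2}=\widehat{\mathcal U}_{H_1}\widehat{\mathcal U}_{H_2}=\widehat{\mathcal U}_{H_2}\widehat{\mathcal U}_{H_1},$$
and, when in addition their square roots are nuclear,
$$\mathcal U_{H_1+H_2}=\mathcal U_{H_1}\mathcal U_{H_2}=\mathcal U_{H_2}\mathcal U_{H_1}.$$
The first identity is obvious on finite-dimensional truncations, because each $\widehat{\mathcal U}_{H_{i,N}}$ is multiplication by $\exp\bigl(-\tfrac12(H_{i,N}x,x)\bigr)$ and the exponentials factor additively in the argument; it lifts to the full space using Theorems \ref{th-mom-smooth-appr} and \ref{th-two-mom-smooth}, uniform contractivity $\|\widehat{\mathcal U}_H\|\le 1$, and a standard triangle-inequality argument. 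The analogue for $\mathcal U$ follows from the convolution identity $\nu_{H_1}\ast\nu_{H_2}=\nu_{H_1+H_2}$ for centered Gaussian measures together with Theorem \ref{th-coord-smooth-appr}.

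Granted the factorization, the inclusions are a one-line substitution. Set $s=t/C$ and $H:=tB-sD$. The eigenvalues $tb_k-sd_k\ge tb_k-(t/C)(Cb_k)=0$, so $H$ is nonnegative and nuclear (as a difference of nuclear operators with the common eigenbasis $\mathcal E$). If additionally $D^{1/2}$ and $B^{1/2}$ (the latter implicit in the very definition of $C_B^\infty$) are nuclear, then $\sqrt{tb_k-sd_k}\le\sqrt{t}\sqrt{b_k}$ is summable, so $H^{1/2}$ is nuclear too. For any $g\in\mathcal H_{\mathcal E}$ and $t>0$,
$$\widehat{\mathcal U}_{tB}g=\widehat{\mathcal U}_{sD}\widehat{\mathcal U}_{H}g\in\widehat{\mathcal U}_{sD}\mathcal H_{\mathcal E}\subset\widehat{C}_D^\infty,$$
because $\widehat{\mathcal U}_H\in B(\mathcal H_{\mathcal E})$ is a contraction. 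Passing to linear combinations yields $\widehat{C}_B^\infty\subseteq\widehat{C}_D^\infty$; the inclusion $C_B^\infty\subseteq C_D^\infty$ follows identically from $\mathcal U_{tB}=\mathcal U_{sD}\mathcal U_H$.

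Invariance reduces to commutativity. For any $\tau>0$ and $f=\widehat{\mathcal U}_{tB}g\in\widehat{\mathcal U}_{tB}\mathcal H_{\mathcal E}$, the factorization specializes to $\widehat{\mathcal U}_{\tau D}\widehat{\mathcal U}_{tB}=\widehat{\mathcal U}_{tB}\widehat{\mathcal U}_{\tau D}$, so
$$\widehat{\mathcal U}_{\tau D}f=\widehat{\mathcal U}_{tB}\bigl(\widehat{\mathcal U}_{\tau D}g\bigr)\in\widehat{\mathcal U}_{tB}\mathcal H_{\mathcal E}\subset\widehat{C}_B^\infty,$$
and linearity closes the argument. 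Invariance of $C_B^\infty$ under $\mathcal U_{\tau D}$ is analogous, using commutativity of convolutions with centered Gaussians.

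The main technical point is the cross-covariance factorization $\widehat{\mathcal U}_{H_1+H_2}=\widehat{\mathcal U}_{H_1}\widehat{\mathcal U}_{H_2}$ (and its coordinate counterpart) in the infinite-dimensional regime, where the operators are defined only as strong-operator limits of their finite-dimensional truncations. Once this is dispatched by a routine $\varepsilon/3$-type argument on top of the approximation theorems of Section \ref{Sec:Averaging}, the rest of the lemma is the single substitution presented above.
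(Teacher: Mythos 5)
Your proof is correct and takes essentially the same route as the paper: both rest on the decomposition $tB=\tfrac{t}{C}D+\tfrac{t}{C}(CB-D)$ (your $H=tB-sD$ with $s=t/C$) together with the factorization $\mathcal U_{H_1+H_2}=\mathcal U_{H_1}\mathcal U_{H_2}$ (resp.\ $\widehat{\mathcal U}_{H_1+H_2}=\widehat{\mathcal U}_{H_1}\widehat{\mathcal U}_{H_2}$) to obtain the inclusion, with invariance following from the resulting commutativity $\mathcal U_{sD}\mathcal U_{tB}=\mathcal U_{sD+tB}=\mathcal U_{tB}\mathcal U_{sD}$. Your additional verifications --- the nuclearity of $H^{1/2}$ and the truncation-limit justification of the cross-covariance factorization --- only make explicit what the paper leaves implicit via the Gaussian convolution identity $\nu_{H_1}\ast\nu_{H_2}=\nu_{H_1+H_2}$.
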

\begin{proof}
    Note that the operator $CB - D$ is also nuclear, positive definite, and its eigenvectors coincide with the basis vectors ${\mathcal E}$.
    
    Let $f\in C_B^\infty$. Then $f =\mathcal{U}_{tB}u$ for some $t > 0$ and $u\in\mathcal{H}_\mathcal{E}$, and
    $$
    f = \mathcal{U}_{(t/C) CB} u = \mathcal{U}_{(t/C) D} \mathcal{U}_{(t/C) (CB -D)} u \in C_D^\infty.
    $$
    So,
    $$
    \mathcal{U}_{sD} f = \mathcal{U}_{sD} \mathcal{U}_{tB} u = \mathcal{U}_{sD + tB} u = \mathcal{U}_{tB}  \mathcal{U}_{sD} u \in C_B^\infty.
    $$
    The inclusion $\widehat{C}_B^\infty\subseteq\widehat{C}_D^\infty$ and the invariance of $\widehat{C}_B^\infty$ with respect to $\widehat{\mathcal{U}}_{tD}$ are proved similarly.
\end{proof}

\begin{lemma}\label{lemma-x-exp}
     Let $D$ and $B$ be nuclear positive-definite self-adjoint operators whose eigenvectors coincide with the basis vectors ${\mathcal E}$. 
    Then the multiplication operator of a function from  $\widehat{C}_D^\infty$ by $x_j$ is well-defined and
    $$
    \| x_j\ \widehat{\mathcal{U}}_{tD} f\|_{{\cal H}_{\cal E}}\leq \frac{1}{\sqrt {etd_j}} \|f\|_{{\cal H}_{\cal E}}.
    $$
\end{lemma}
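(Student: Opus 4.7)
The plan is to establish the bound first for the finite-dimensional approximations $\widehat{\mathcal{U}}_{tD_n}$ by a direct pointwise optimisation of the multiplication symbol, and then pass to the limit $n\to\infty$ using the strong convergence in Theorem \ref{th-mom-smooth-appr}.

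For any $n \ge j$, the operator $\widehat{\mathcal{U}}_{tD_n}$ is multiplication by $\exp(-\tfrac{t}{2}(D_n x,x))$, so
$$x_j \widehat{\mathcal{U}}_{tD_n} f(x) = x_j \exp\Bigl(-\tfrac{t}{2}\sum_{k=1}^n d_k x_k^2\Bigr) f(x).$$
Since each factor $\exp(-\tfrac{t d_k}{2} x_k^2)$ with $k\ne j$ is bounded by $1$, the symbol is pointwise dominated by $|x_j|\exp(-\tfrac{t d_j}{2} x_j^2)$. A one-variable optimisation of $y \mapsto y e^{-td_j y^2/2}$ (critical point at $y^2 = 1/(td_j)$) yields the maximum value $(etd_j)^{-1/2}$. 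This gives the pointwise estimate
$$\bigl|x_j \widehat{\mathcal{U}}_{tD_n} f(x)\bigr| \le (etd_j)^{-1/2}\,|f(x)|,$$
and hence $\|x_j \widehat{\mathcal{U}}_{tD_n} f\|_{\mathcal{H}_{\mathcal{E}}} \le (etd_j)^{-1/2}\|f\|_{\mathcal{H}_{\mathcal{E}}}$ uniformly in $n$.

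To conclude, I would pass to the limit $n\to\infty$. The symbols $x_j \exp(-\tfrac{t}{2}(D_n x,x))$ converge pointwise monotonically to $x_j \exp(-\tfrac{t}{2}(Dx,x))$ and are uniformly dominated by $(etd_j)^{-1/2}$. Combined with the strong convergence $\widehat{\mathcal{U}}_{tD_n} f \to \widehat{\mathcal{U}}_{tD} f$ of Theorem \ref{th-mom-smooth-appr}, a dominated-convergence argument (carried out first for $f = \chi_Q$ with $Q$ an absolutely measurable block, then extended to arbitrary $f \in \mathcal{H}_{\mathcal{E}}$ by the uniform operator bound) shows that $x_j \widehat{\mathcal{U}}_{tD_n} f$ converges in $\mathcal{H}_{\mathcal{E}}$. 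This defines the product $x_j \widehat{\mathcal{U}}_{tD} f$ unambiguously, and the bound survives the passage to the limit.

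The main subtle point is giving a consistent meaning to $x_j \widehat{\mathcal{U}}_{tD} f$ in the regime where $D^{1/2}$ is not nuclear, since there $\widehat{\mathcal{U}}_{tD}$ exists only as a strong limit of multiplication operators and multiplication by the unbounded coordinate $x_j$ is not a bounded operation on $\mathcal{H}_{\mathcal{E}}$. The uniform finite-dimensional estimate obtained in the calculus step is precisely what gives the product a meaning in the limit and simultaneously transfers the quantitative bound $(etd_j)^{-1/2}$.
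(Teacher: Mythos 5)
Your proof is correct, and the core computation---the one-variable optimisation of $y\mapsto y\,e^{-td_j y^2/2}$, with critical point $y^2=1/(td_j)$ and maximum $(etd_j)^{-1/2}$---is exactly the paper's. Where you genuinely diverge is in how the product $x_j\,\widehat{\mathcal{U}}_{tD}f$ is given meaning and how the bound is transferred to the infinite-dimensional operator. The paper does this in one step, with no limiting procedure: it uses the tensor factorisation $\mathcal{H}_{\mathcal{E}}=L_2(\mathbb{R})\otimes\mathcal{H}_{e_1,\ldots,e_{j-1},e_{j+1},\ldots}$ to write $x_j\,\widehat{\mathcal{U}}_{tD}=\bigl(x_j e^{-\frac{1}{2}td_j x_j^2}\bigr)\otimes\exp\bigl(-\frac{t}{2}\sum_{k\neq j}d_k x_k^2\bigr)$, where the first factor is a bounded one-dimensional multiplication operator of norm $(etd_j)^{-1/2}$ and the second is itself a momentum-smoothing operator, hence a contraction, so the estimate is immediate. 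Your route---a uniform bound on the cylinder symbols $x_j e^{-\frac{t}{2}(D_n x,x)}$ followed by the strong convergence of Theorem \ref{th-mom-smooth-appr}---reaches the same conclusion and has the virtue of working verbatim in the regime where $\widehat{\mathcal{U}}_{tD}$ exists only as the strong limit of the $\widehat{\mathcal{U}}_{tD_n}$ (non-nuclear $D^{1/2}$), which you rightly identify as the delicate point; the paper's factorisation quietly relies on the same identification of the tail factor, so your version makes explicit what the paper leaves implicit. One caution on your limit step: since $\lambda_{\mathcal{E}}$ is only finitely additive, you cannot literally invoke Lebesgue dominated convergence; the correct mechanism for $f=\chi_Q$ is the uniform estimate of the symbol difference on the block $Q$ (for all large $k$ one has $\sup_{x\in Q}|x_k|\le 2$, so $\sum_{n<k\le m}d_k x_k^2\le 4\sum_{n<k\le m}d_k\to 0$ by nuclearity of $D$, while the prefactor stays below $(etd_j)^{-1/2}$), exactly as in the paper's proof of Theorem \ref{th-mom-smooth-appr}; this makes $x_j\widehat{\mathcal{U}}_{tD_n}\chi_Q$ Cauchy in norm and replaces your dominated-convergence appeal without loss, after which density and the uniform operator bound extend the convergence to all $f\in\mathcal{H}_{\mathcal{E}}$ just as you propose.
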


\begin{proof}
    Consider on the space $L_2(\mathbb{R})$ the multiplication operator by a function \\
    $x\exp \left(-\dfrac{1}{2} t d_j x^2\right)$, whose norm in $L_2(\mathbb{R})$ is equal to
    $$
    \sup\limits_{x\in {\mathbb R}} 
    \left| x \exp \left(-\dfrac{1}{2} t d_j x^2\right) \right| 
    = \dfrac{1}{\sqrt{etd_j}}.
    $$ 
    Since
    $$
    {\cal H}_{\cal E}=L_2(\mathbb{R})\otimes {\cal H}_{e_1,\ldots.,e_{j-1},e_{j+1},\ldots},
    $$
    the operator $f\mapsto x_j \ \widehat{\mathcal{U}}_{tD} f$ is represented as a tensor product of operators
    $$ 
    x_j \ \widehat{\mathcal{U}}_{tD}
    =\left( x_j\exp \left(-\dfrac{1}{2} t d_j x^2\right)\right)
    \otimes \exp \left(-\dfrac{t}{2}\sum\limits_{k\neq j}  d_k x_k^2)\right),
    $$
    where the second operator is a smoothing in the momentum representation, and therefore its norm does not exceed one. Therefore, $\| x_j \ \widehat{\mathcal{U}}_{tD}\|_{B({\cal H}_{\cal E})}\leq \frac{1}{\sqrt {etd_j}}$. 
\end{proof}

\begin{corollary}\label{cor-xl-exp}
    Let $D$ be a nuclear positive-definite self-adjoint operator whose eigenvectors coincide with the basis vectors ${\mathcal E}$.
    For any $l\in\mathbb N$, the product of the function $\widehat{\mathcal{U}}_{tD} f\in\widehat{C}_D^\infty$ by $x_{j_1}\ldots x_{j_l}$ is well defined, and the following estimate is correct
    $$ 
    \left\|x_{j_1} \ldots x_{j_l} \ \widehat{\mathcal{U}}_{tD} f \right\|_{\mathcal{H}_\mathcal{E}}
    \le \left( \dfrac{l}{et} \right)^{l/2} \dfrac{1}{\sqrt{d_1 \ldots d_l}} \|f\| _{\mathcal{H}_\mathcal{E}}.
    $$
    In particular,
    $$ 
    \left\|x_j^l \ \widehat{\mathcal{U}}_{tD} f \right\|_{\mathcal{H}_\mathcal{E}}
    \le \left( \dfrac{l}{etd_j} \right)^{l/2}  \|f\| _{\mathcal{H}_\mathcal{E}}.
    $$
\end{corollary}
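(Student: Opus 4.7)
The strategy is to split the smoothing into $l$ equal pieces via the semigroup property $\widehat{\mathcal{U}}_{tD} = (\widehat{\mathcal{U}}_{(t/l)D})^l$, then distribute the $l$ coordinate multiplications, one per piece, and apply Lemma \ref{lemma-x-exp} to each of the $l$ factors separately.

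The key observation that makes this work is that $\widehat{\mathcal{U}}_{sD}$ is itself a multiplication operator (by $\exp(-\tfrac{s}{2}(Dx,x))$) and therefore commutes pointwise with multiplication by any coordinate functional $x_j$. Hence, writing $s = t/l$, for $f \in \mathcal{H}_\mathcal{E}$ (understanding products of multiplication operators as pointwise multiplication wherever defined), one has the algebraic identity
\begin{equation*}
x_{j_1}\cdots x_{j_l}\,\widehat{\mathcal{U}}_{tD}f
 \;=\; \bigl(x_{j_1}\widehat{\mathcal{U}}_{sD}\bigr)\bigl(x_{j_2}\widehat{\mathcal{U}}_{sD}\bigr)\cdots \bigl(x_{j_l}\widehat{\mathcal{U}}_{sD}\bigr) f.
\end{equation*}
By Lemma \ref{lemma-x-exp}, each factor $x_{j_k}\widehat{\mathcal{U}}_{sD}$ is a bounded operator on $\mathcal{H}_\mathcal{E}$ with operator norm at most $1/\sqrt{e s\, d_{j_k}} = \sqrt{l/(etd_{j_k})}$. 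Thus the compositions are well-defined on all of $\mathcal{H}_\mathcal{E}$, and a step-by-step application of Lemma \ref{lemma-x-exp} yields
\begin{equation*}
\bigl\|x_{j_1}\cdots x_{j_l}\,\widehat{\mathcal{U}}_{tD}f\bigr\|_{\mathcal{H}_\mathcal{E}}
\;\le\; \prod_{k=1}^{l} \sqrt{\frac{l}{etd_{j_k}}}\; \|f\|_{\mathcal{H}_\mathcal{E}}
\;=\; \left(\frac{l}{et}\right)^{l/2} \frac{1}{\sqrt{d_{j_1}\cdots d_{j_l}}}\,\|f\|_{\mathcal{H}_\mathcal{E}},
\end{equation*}
which is the claimed estimate. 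The special case $j_1=\dots=j_l=j$ gives the second inequality.

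There is essentially no obstacle here beyond justifying the reordering: the only subtlety is that after applying $x_{j_k}$ the result leaves $\widehat{C}_D^\infty$, but Lemma \ref{lemma-x-exp} was already proved as an operator norm bound on all of $\mathcal{H}_\mathcal{E}$ (via the tensor decomposition $\mathcal{H}_\mathcal{E} = L_2(\mathbb{R})\otimes \mathcal{H}_{e_1,\dots,\widehat{e_j},\dots}$), so iterating it is permissible. No separate argument is needed for the existence of the pointwise products, because the two-sided estimate in Lemma \ref{lemma-x-exp} automatically certifies both membership in $\mathcal{H}_\mathcal{E}$ and the norm bound at each step.
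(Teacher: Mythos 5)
Your proposal is correct and matches the paper's own proof: the paper likewise factors the multiplication operator as a product of $l$ operators of the form $x_{j_k}\,\widehat{\mathcal{U}}_{(t/l)D}$ and applies the norm bound of Lemma \ref{lemma-x-exp} to each factor. Your added justification --- that Lemma \ref{lemma-x-exp} is an operator-norm bound on all of $\mathcal{H}_\mathcal{E}$, so the iteration is legitimate even though intermediate images leave $\widehat{C}_D^\infty$ --- is exactly the point the paper leaves implicit.
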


 \begin{proof}
    The estimation follows from the fact that for any $l\in\mathbb N$ the multiplication operator by a function $x_{j_1} \ldots x_{j_l}\ \exp \left(-\dfrac{1}{2} t d_j x^2\right)$ in $L_2({\mathbb R})$ can be represented as a product of $l$ operators of the form $x_{j_k} \  \widehat{\mathcal{U}}_{(t/l) D}$.
 \end{proof}

\begin{lemma}\label{lemma-der-ineq}
    Let $D$ be a nuclear positive-definite self-adjoint operator whose eigenvectors coincide with the basis vectors ${\mathcal E}$. If $D^{1/2}$ is nuclear, then any smooth function $\mathcal{U}_{tD} f\in C^\infty_D$ has any higher-order mixed derivative $\partial_{j_1}\ldots\partial_{j_k}$ along the basic directions, which admits evaluation
    $$ 
    \left\|\partial_{j_1} \ldots \partial_{j_k} \ \mathcal{U}_{tD} f \right\|_{\mathcal{H}_\mathcal{E}}
    \le \left( \dfrac{l}{et} \right)^{l/2} \dfrac{1}{\sqrt{d_{j_1} \ldots d_{j_k}}} \|f\| _{\mathcal{H}_\mathcal{E}}.
    $$
    In particular,
    $$ 
    \left\|\partial_j^k \ \mathcal{U}_{tD} f \right\|_{\mathcal{H}_\mathcal{E}}
    \le \left( \dfrac{l}{etd_j} \right)^{l/2}  \|f\| _{\mathcal{H}_\mathcal{E}}.
    $$
\end{lemma}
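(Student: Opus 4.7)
The plan is to reduce the claimed coordinate-representation estimate to the already established momentum-representation estimate (Corollary~\ref{cor-xl-exp}) via the partial Fourier transform along the finitely many basis directions $e_{j_1},\dots,e_{j_k}$. The key observation is that although no full Fourier transform exists on $\mathcal{H}_\mathcal{E}$, the partial Fourier transform $\mathcal{F}_m = \mathcal{F}_{E_m}\otimes I$ is a perfectly good unitary operator on $\mathcal{H}_\mathcal{E} = L_2(E_m)\otimes\mathcal{H}_{\mathcal{E}^m}$ for any finite $m$, and this is all that is needed since the derivatives in question involve only finitely many coordinates.

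First I would fix $m\ge\max\{j_1,\dots,j_k\}$ and use Gaussian convolution to split the semigroup as $\mathcal{U}_{tD} = \mathcal{U}_{tD_m}\circ \mathcal{U}_{t(D-D_m)}$, where $D_m = DP_m$. The second factor acts only on the tail coordinates, so it commutes with $\partial_{j_s}$ and with $\mathcal{F}_m$, and it is a contraction on $\mathcal{H}_\mathcal{E}$. Applying $\mathcal{F}_m$ to the first factor converts convolution with the $m$-dimensional Gaussian of covariance $tD_{E_m}$ into multiplication by $\exp(-\tfrac{t}{2}\sum_{k\le m} d_k x_k^2)$, which is $\widehat{\mathcal{U}}_{tD_m}$. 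Since $j_s\le m$ for every $s$, the conjugation by $\mathcal{F}_m$ sends $\partial_{j_s}$ to multiplication by $ix_{j_s}$. Hence
\[
\partial_{j_1}\cdots\partial_{j_k}\,\mathcal{U}_{tD}f
= \mathcal{F}_m^{-1}\bigl[\,i^k x_{j_1}\cdots x_{j_k}\,\widehat{\mathcal{U}}_{tD_m}\,\mathcal{U}_{t(D-D_m)}\,\mathcal{F}_m f\,\bigr].
\]

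Taking $\mathcal{H}_\mathcal{E}$-norms, using unitarity of $\mathcal{F}_m$ and the fact that $\mathcal{U}_{t(D-D_m)}$ is a self-adjoint contraction, reduces the problem to estimating $\|x_{j_1}\cdots x_{j_k}\widehat{\mathcal{U}}_{tD_m}g\|_{\mathcal{H}_\mathcal{E}}$ in terms of $\|g\|_{\mathcal{H}_\mathcal{E}}$. This is exactly the content of Corollary~\ref{cor-xl-exp} applied to the operator $\widehat{\mathcal{U}}_{tD_m}$; note that the relevant eigenvalues $d_{j_1},\dots,d_{j_k}$ are strictly positive since $j_s\le m$ and $D$ is positive-definite, so no degeneracy arises from the zero eigenvalues of $D_m$ at indices $>m$. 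Putting everything together yields the constant $(k/(et))^{k/2}/\sqrt{d_{j_1}\cdots d_{j_k}}$, and the pure-power special case $\|\partial_j^k\,\mathcal{U}_{tD}f\|\le (k/(etd_j))^{k/2}\|f\|$ is the specialization $j_1=\dots=j_k=j$.

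The main subtlety — and the step I expect to require the most care — is justifying that the formal derivatives $\partial_{j_1}\cdots\partial_{j_k}\mathcal{U}_{tD}f$ actually exist as elements of $\mathcal{H}_\mathcal{E}$, rather than merely as distributions. This is handled by interpreting the identity above as a definition: the right-hand side is manifestly an element of $\mathcal{H}_\mathcal{E}$ by Corollary~\ref{cor-xl-exp}, and one verifies that it coincides with the (weak) directional derivatives of $\mathcal{U}_{tD}f$ by first approximating $f$ by elements on which convolution with the finite-dimensional Gaussian density yields classical smoothness in the variables $x_{j_1},\dots,x_{j_k}$, and then passing to the limit using the uniform bound. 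The independence of the estimate from the auxiliary parameter $m$ (any $m\ge\max j_s$ works) is automatic since the right-hand side of the bound does not involve $m$.
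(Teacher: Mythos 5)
Your proposal is correct and follows essentially the same route as the paper's own proof: the paper likewise takes $J=\max_s j_s$, applies the partial Fourier transform $\mathcal{F}_J$ to factor the problem as $\bigl(x_{j_1}\cdots x_{j_k}\,\widehat{\mathcal{U}}_{tD_J}\bigr)\otimes\mathcal{U}_{t(D-D_J)}$, and invokes Corollary~\ref{cor-xl-exp} together with the contraction property of the tail factor. Your additional care about why the derivatives exist as elements of $\mathcal{H}_\mathcal{E}$ (rather than distributionally) and about the harmlessness of the zero eigenvalues of $D_m$ goes slightly beyond what the paper records, but does not change the argument.
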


\begin{proof}
    Let us take the maximum $j_k$ among all indexes and denote it as $J$.  Applying to the function $\partial_{j_1}\ldots\partial_{j_k} \mathcal{U}_{tD} f$ the Fourier transform along the first $J$ basic directions gives
    $$
    \left\|\partial_{j_1} \ldots \partial_{j_k} \ \mathcal{U}_{tD} f \right\|_{\mathcal{H}_\mathcal{E}}
    = \left\|(x_{j_1} \ldots x_{j_l} \ \widehat{\mathcal{U}}_{tD_J}) \otimes \mathcal{U}_{t(D-D_J)} f \right\|_{\mathcal{H}_\mathcal{E}}.
    $$
    The norm of the operator $(x_{j_1} \ldots x_{j_l} \ \widehat{\mathcal{U}}_{tD_J})$ does not exceed $\left(\dfrac{l}{et}\right)^{l/2}\dfrac{1}{\sqrt{d_{j_1} \ldots d_{j_k}}}$, and the norm of $\mathcal{U}_{t(D-D_J)}$ does not exceed one.
\end{proof}

\begin{lemma}\label{lemma-LD}
    Let $D$ and $B$ be non-negative nuclear operators on the space $E$ whose eigenvectors coincide with the basis vectors ${\mathcal E}$. If the operator $D B^{-1}$ is nuclear, then the multiplication operator $L_{D,B}(t)$ of the function $f\in {\cal H}_{\cal E}$ by  $(Dx,x)\exp (-t(B x,x))$ is well defined and satisfies
    $$
    \|L_{D,B}(t)\|_{B({\cal H}_{\cal E})}\leq \dfrac{1}{et}{\rm Tr}(D B^{-1}).
    $$ 
\end{lemma}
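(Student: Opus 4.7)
The plan is to decompose the multiplication function $(Dx,x)\exp(-t(Bx,x))$ coordinate by coordinate and bound the resulting series. Since the eigenvectors of $D$ and $B$ coincide with the basis $\mathcal{E}$ and the corresponding eigenvalues are $\{d_k\}$ and $\{b_k\}$, one has the pointwise identity
\begin{equation*}
(Dx,x)\exp(-t(Bx,x)) = \sum_{k=1}^{\infty} d_k x_k^2 \exp\!\Bigl(-t \sum_{j=1}^{\infty} b_j x_j^2\Bigr),
\end{equation*}
which suggests writing $L_{D,B}(t) = \sum_k L_k(t)$, where $L_k(t)$ is the operator of multiplication by $d_k x_k^2 \exp(-t(Bx,x))$. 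First I would verify that this series converges in operator norm; once the bound $\|L_k(t)\|_{B(\mathcal{H}_\mathcal{E})} \le d_k/(etb_k)$ is established, absolute convergence follows from the nuclearity of $DB^{-1}$.

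Next, to estimate each $L_k(t)$ I would exploit the tensor factorization
\begin{equation*}
\mathcal{H}_\mathcal{E} = L_2(\mathbb{R}) \otimes \mathcal{H}_{e_1,\dots,e_{k-1},e_{k+1},\dots},
\end{equation*}
already used in the proof of Lemma~\ref{lemma-x-exp}, where the first factor is the $e_k$-direction. Under this decomposition the multiplier splits as
\begin{equation*}
\bigl[d_k x_k^2 \exp(-t b_k x_k^2)\bigr] \otimes \Bigl[\exp\!\Bigl(-t \sum_{j\ne k} b_j x_j^2\Bigr)\Bigr].
\end{equation*}
The second tensor factor is multiplication by a function taking values in $[0,1]$, hence is a bounded operator of norm at most one on its factor space. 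The first factor is multiplication on $L_2(\mathbb{R})$ by the single-variable function $d_k x^2 e^{-tb_k x^2}$, whose sup over $\mathbb{R}$ is attained at $x^2 = 1/(tb_k)$ and equals $d_k/(etb_k)$. Combining,
\begin{equation*}
\|L_k(t)\|_{B(\mathcal{H}_\mathcal{E})} \le \frac{d_k}{etb_k}.
\end{equation*}

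Finally, summing the geometric bounds gives
\begin{equation*}
\|L_{D,B}(t)\|_{B(\mathcal{H}_\mathcal{E})} \le \sum_{k=1}^{\infty} \frac{d_k}{etb_k} = \frac{1}{et}\operatorname{Tr}(DB^{-1}),
\end{equation*}
where the trace is finite precisely by the nuclearity hypothesis on $DB^{-1}$; this simultaneously justifies the convergence of $\sum_k L_k(t)$ in operator norm and therefore the well-definedness of $L_{D,B}(t)$ as a bounded operator on $\mathcal{H}_\mathcal{E}$. The only delicate point I anticipate is the justification that the partial sums of $\sum_k L_k(t)$ really define the multiplication operator by the full function $(Dx,x)\exp(-t(Bx,x))$, rather than by some other limit; this should follow by writing each partial sum as multiplication by a truncated function that increases to $(Dx,x)\exp(-t(Bx,x))$ pointwise, and invoking that the operator-norm limit of multiplication operators is multiplication by the pointwise limit whenever the latter is essentially bounded, which is guaranteed here by the uniform bound derived above.
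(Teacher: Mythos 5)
Your proposal is correct and follows essentially the same route as the paper: the paper also writes $L_{D,B}(t)=\sum_{j}d_j x_j^2\exp(-t(Bx,x))$ and bounds each term by $d_j/(etb_j)$ via Corollary~\ref{cor-xl-exp}, whose proof is exactly your tensor factorization $\mathcal{H}_\mathcal{E}=L_2(\mathbb{R})\otimes\mathcal{H}_{e_1,\dots,e_{j-1},e_{j+1},\dots}$ together with the sup-norm computation $\sup_x d_j x^2 e^{-tb_jx^2}=d_j/(etb_j)$, then sums using nuclearity of $DB^{-1}$. Your closing remark justifying that the operator-norm limit of the truncated multipliers is multiplication by the full function is a point the paper leaves implicit, but it is a refinement rather than a different argument.
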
 
\begin{proof}
    Indeed, $L_{D,B}(t)=\sum\limits_{j=1}^{\infty}d_j x_j^2\exp (-t(Bx,x))$. Therefore, by virtue of the corollary \ref{cor-xl-exp} the inequality holds
    $$
    \| L_{D,B}(t)\|_{B({\cal H}_{\cal E})} \leq \sum\limits_{j=1}^{\infty }d_j \dfrac{1}{etb_j}=\dfrac{1}{et}{\rm Tr}(D B^{-1}).
    $$
\end{proof}

\begin{lemma}\label{lemma-Dxx-generator}
    Let $D$ and $B$ be non-negative nuclear operators on the space $E$ whose eigenvectors coincide with the basis vectors ${\mathcal E}$. If the operator $D B^{-1}$ is nuclear, then the linear subspace $\widehat{C}_{B}^\infty$ lies in the domain of the generator of the strongly continuous semigroup $\widehat{\mathcal{U}}_{t D},\, t\geq 0$, and
    $$
    \dfrac{\d}{\d t} (\widehat{\mathcal{U}}_{t D} f)|_{t=0}=-(Dx,x)f \quad \forall f\in \widehat{C}_{B}^\infty.
    $$
\end{lemma}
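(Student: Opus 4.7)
The plan is to exploit that on vectors of the form $f=\widehat{\mathcal{U}}_{sB}g\in\widehat{C}_B^\infty$, the action of $\widehat{\mathcal{U}}_{tD}$ is essentially pointwise multiplication. By the convolution semigroup property of Gaussian measures, $\widehat{\mathcal{U}}_{tD}\widehat{\mathcal{U}}_{sB}=\widehat{\mathcal{U}}_{tD+sB}$, and its finite-rank truncations $\widehat{\mathcal{U}}_{t D_N + sB_N}$ are honest multiplication operators $\exp(-t(D_Nx,x)-s(B_Nx,x))$ (up to the paper's conventional factor). Therefore I would work first with the approximations $D_N$, where pointwise manipulations are unambiguous, and then pass to the limit $N\to\infty$ using Theorem~\ref{th-mom-smooth-appr}.

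First I would check that $(Dx,x)f$ belongs to $\mathcal{H}_{\mathcal{E}}$ for $f=\widehat{\mathcal{U}}_{sB}g$: this is precisely the content of Lemma~\ref{lemma-LD}, giving $\|(Dx,x)f\|_{\mathcal{H}_{\mathcal{E}}}\le \frac{1}{es}\mathrm{Tr}(DB^{-1})\|g\|_{\mathcal{H}_{\mathcal{E}}}$, finite by nuclearity of $DB^{-1}$. Next, I would write the remainder of the difference quotient (for each approximation $D_N$ and, in the limit, for $D$) as
\begin{equation*}
\frac{\widehat{\mathcal{U}}_{tD_N}f-f}{t}+(D_Nx,x)f=\frac{e^{-t(D_Nx,x)}-1+t(D_Nx,x)}{t}\,f,
\end{equation*}
and apply the elementary scalar inequality $|e^{-u}-1+u|\le u^2/2$ for $u\ge 0$ to get the pointwise bound
\begin{equation*}
\Bigl\|\tfrac{\widehat{\mathcal{U}}_{tD_N}f-f}{t}+(D_Nx,x)f\Bigr\|_{\mathcal{H}_{\mathcal{E}}}\le \tfrac{t}{2}\bigl\|(D_Nx,x)^2 f\bigr\|_{\mathcal{H}_{\mathcal{E}}}.
\end{equation*}

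The key estimate is then on $\|(Dx,x)^2f\|_{\mathcal{H}_{\mathcal{E}}}$. Expanding $(Dx,x)^2=\sum_{j,k}d_jd_k x_j^2 x_k^2$, I would invoke Corollary~\ref{cor-xl-exp} with $l=4$ applied to $f=\widehat{\mathcal{U}}_{sB}g$ to obtain $\|x_j^2 x_k^2 f\|_{\mathcal{H}_{\mathcal{E}}}\le(4/(es))^2(b_jb_k)^{-1}\|g\|_{\mathcal{H}_{\mathcal{E}}}$, so that by the triangle inequality
\begin{equation*}
\|(Dx,x)^2 f\|_{\mathcal{H}_{\mathcal{E}}}\le\bigl(\tfrac{4}{es}\bigr)^2\Bigl(\sum_k\tfrac{d_k}{b_k}\Bigr)^2\|g\|_{\mathcal{H}_{\mathcal{E}}}=\bigl(\tfrac{4}{es}\bigr)^2\bigl(\mathrm{Tr}(DB^{-1})\bigr)^2\|g\|_{\mathcal{H}_{\mathcal{E}}},
\end{equation*}
which is finite and uniform in $N$ (since $D_NB^{-1}\le DB^{-1}$ as trace-class operators). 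Passing $N\to\infty$ via Theorem~\ref{th-mom-smooth-appr}, $\widehat{\mathcal{U}}_{tD_N}f\to\widehat{\mathcal{U}}_{tD}f$ and $(D_Nx,x)f\to(Dx,x)f$ in $\mathcal{H}_{\mathcal{E}}$ (by dominated convergence, since $D_N\nearrow D$), yielding the same bound with $D$ in place of $D_N$. The right-hand side is $O(t)$, so letting $t\to 0^+$ proves $f$ lies in the domain of the generator and that the derivative equals $-(Dx,x)f$.

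The main obstacle is the rigorous handling of the passage to the limit $N\to\infty$, since $\widehat{\mathcal{U}}_{tD}$ is only defined via the strong-operator limit of its truncations when $D^{1/2}$ is not nuclear, so multiplicative manipulations are only legitimate after truncation. This is resolved by keeping the quadratic bound explicit and uniform in $N$, which allows both $\widehat{\mathcal{U}}_{tD_N}f-f$ and $(D_Nx,x)f$ to be treated simultaneously; once the $N$-limit is taken at fixed $t$, sending $t\to 0^+$ is automatic.
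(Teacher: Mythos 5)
Your argument is correct, but it replaces the paper's key mechanism with a genuinely different one. The paper also truncates to $D_N$ and exploits the smoothing of $\widehat{\mathcal{U}}_{sB}$, but it proceeds at first order: each truncated curve $t\mapsto\widehat{\mathcal{U}}_{tD_N}\widehat{\mathcal{U}}_{sB}u$ is Fr\'echet differentiable with derivative $-(D_Nx,x)\widehat{\mathcal{U}}_{tD_N}\widehat{\mathcal{U}}_{sB}u$; Lemma~\ref{lemma-LD} shows these derivatives converge in $C(\mathbb{R}_+,\mathcal{H}_\mathcal{E})$ to $-(Dx,x)\widehat{\mathcal{U}}_{tD}\widehat{\mathcal{U}}_{sB}u$, and the calculus Lemma~\ref{lemma-uniform-der-conv} (uniform convergence of derivatives) transfers differentiability to the limit curve, after which one evaluates at $t=0$. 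You bypass Lemma~\ref{lemma-uniform-der-conv} entirely via a second-order remainder estimate: the scalar bound $|e^{-u}-1+u|\le u^2/2$ reduces the difference quotient to controlling $\|(D_Nx,x)^2 f\|_{\mathcal{H}_\mathcal{E}}$, which you obtain from the fourth-moment bound of Corollary~\ref{cor-xl-exp} uniformly in $N$, and the limits $N\to\infty$ (Theorem~\ref{th-mom-smooth-appr}) and $t\to 0^+$ close the argument. Your version is more quantitative: it yields the explicit rate $\bigl\|t^{-1}(\widehat{\mathcal{U}}_{tD}f-f)+(Dx,x)f\bigr\|_{\mathcal{H}_\mathcal{E}}\le \frac{t}{2}\bigl(\frac{4}{es}\bigr)^2\bigl(\mathrm{Tr}(DB^{-1})\bigr)^2\|g\|_{\mathcal{H}_\mathcal{E}}$, uniform over $g$ in balls, which the paper's soft argument does not provide and which would be directly useful in Chernoff-type error estimates; the price is using a fourth moment, $\bigl(\mathrm{Tr}(DB^{-1})\bigr)^2$, where the paper needs only the first-order bound of Lemma~\ref{lemma-LD} --- both finite under the same hypothesis, so no assumption is added. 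Conversely, the paper's route proves more than the statement requires, namely continuous differentiability of $t\mapsto\widehat{\mathcal{U}}_{tD}f$ on all of $\mathbb{R}_+$, not only at $t=0$. Two small points to tighten in your write-up: the convergence $(D_Nx,x)f\to(Dx,x)f$ should be justified not by ``dominated convergence'' (a delicate notion for the finitely additive measure $\lambda_{\mathcal{E}}$) but by the tail estimate $\bigl\|\sum_{k>N}d_k x_k^2 f\bigr\|_{\mathcal{H}_\mathcal{E}}\le \frac{2}{es}\sum_{k>N}(d_k/b_k)\,\|g\|_{\mathcal{H}_\mathcal{E}}\to 0$, which follows from Corollary~\ref{cor-xl-exp} exactly as in the proof of Lemma~\ref{lemma-LD}; and the pointwise manipulation with the exponential is legitimate precisely because for finite-rank $D_N$ the operators act as $M\otimes I$ on $L_2(E_N)\otimes\mathcal{H}_{\mathcal{E}^N}$ --- your insistence on truncating before multiplying is the right move, and extending from $f=\widehat{\mathcal{U}}_{sB}g$ to the span $\widehat{C}_B^\infty$ is immediate by linearity. (The floating factor $\frac{1}{2}$ in the Gaussian exponent is an inconsistency of the paper's own conventions and affects neither proof.)
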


\begin{proof}
    Indeed, let $f \in \widehat{C}_B^\infty$. Therefore, there exists $s>0$ and $u\in {\cal H}_{\cal E}$ such that $f = \widehat{\mathcal{U}}_{s B} u$. Then
    $$
    \dfrac{\d}{\d t} (\widehat{\mathcal{U}}_{t D} f)|_{t=0}
    =\dfrac{\d}{\d t} (\widehat{\mathcal{U}}_{t D} \widehat{\mathcal{U}}_{s B} u)|_{t=0}
    =\lim\limits_{t\to 0}
    \left[\dfrac{1}{t} (\widehat{\mathcal{U}}_{t D} - I)\widehat{\mathcal{U}}_{s B} u
    \right].
    $$
    Mapping ${\mathbb R}_+\to {\cal H}_{\cal E}$ which acts as $t\mapsto\widehat{\mathcal{U}}_{t D} \widehat{\mathcal{U}}_{s B} u$ is the limit at $N\to\infty$ in the space $C({\mathbb R}_+, {\cal H}_{\cal E})$ of sequence of Frechet-differentiable maps $t\mapsto\widehat{\mathcal{U}}_{t D_N}\widehat{\mathcal{U}}_{s B} u$. For every $N\in\mathbb N$, one has
    $$
    \dfrac{\d}{\d t} \widehat{\mathcal{U}}_{t D_N} \widehat{\mathcal{U}}_{s B} u
    = -(D_N x,x)\widehat{\mathcal{U}}_{t D_N} \widehat{\mathcal{U}}_{s B},\quad t\geq 0.
    $$ 
    Therefore, by virtue of lemma \ref{lemma-LD}, the functional sequence $\{ -(D_N x,x)\widehat{\mathcal{U}}_{t D_N} \widehat{\mathcal{U}}_{s B} u,\ t\geq 0\, \} $ converges in the space $C({\mathbb R}_+, {\cal H}_{\cal E})$ to the function
    $$
    -(Dx,x)(\widehat{\mathcal{U}}_{t D} \widehat{\mathcal{U}}_{s B} u) 
    =-L_{D,B}(s/2) \widehat{\mathcal{U}}_{(s/2) B} \widehat{\mathcal{U}}_{t D}  u,\ t\geq 0.
    $$
    Therefore, the function $t\to \widehat{\mathcal{U}}_{t D} \widehat{\mathcal{U}}_{s B} u$ is Frechet differentiable, and
    $$
    \dfrac{\d}{\d t} (\widehat{\mathcal{U}}_{t D} f)|_{t=0}
    = L_{D,B}(s/ 2) \widehat{\mathcal{U}}_{(s/2) B} u=-(Dx,x)f.
    $$
\end{proof}

\begin{theorem}\label{th-lambda-core}
    Let $D$ and $B$ be non-negative nuclear operators on the space $E$ whose eigenvectors coincide with the basis vectors ${\mathcal E}$. If the operator $D B^{-1}$ is nuclear, then the generator of the semigroup $\widehat{\mathcal{U}}_D(t)$ is the closure of the semi-bounded operator $\Lambda_D: \widehat{C}_B^\infty\to\mathcal{H}_\mathcal{E}$, acting as
    $$
    \Lambda_D f = - (Dx,x) f, \quad f \in \widehat{C}_B.
    $$
\end{theorem}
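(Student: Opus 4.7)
The plan is to reduce the theorem to the standard semigroup-theoretic fact that a dense subspace of the domain of a $C_0$-semigroup generator which is invariant under the semigroup is automatically a core. Each of the three ingredients needed for this criterion is already supplied by a preceding lemma, so once they are assembled the theorem follows by bookkeeping.

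First I would invoke Lemma \ref{lemma-Dxx-generator} to obtain that $\widehat{C}_B^\infty$ is contained in the domain $D(G)$ of the generator $G$ of $\widehat{\mathcal{U}}_{tD}$, and that $Gf = -(Dx,x)f = \Lambda_D f$ for every $f \in \widehat{C}_B^\infty$. Since $G$ is closed, this already yields the inclusion $\overline{\Lambda_D} \subseteq G$. Semi-boundedness of $\Lambda_D$ from above by $0$ is immediate from $(Dx,x) \ge 0$, so the restriction of $G$ to $\widehat{C}_B^\infty$ is a symmetric semi-bounded operator.

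Next I would verify the two hypotheses of the core criterion for $\widehat{C}_B^\infty$. Density in $\mathcal{H}_\mathcal{E}$ is Lemma \ref{lemma-hat-c-dense} applied with $B$ in place of $D$. Invariance of $\widehat{C}_B^\infty$ under $\widehat{\mathcal{U}}_{tD}$ follows from Lemma \ref{lemma-C-subset}, whose hypothesis $D \prec C B$ is a consequence of the nuclearity of $DB^{-1}$: indeed $\sum_k d_k/b_k < \infty$ forces $\sup_k d_k/b_k < \infty$, so $d_k \le C b_k$ for some $C>0$ and all $k$.

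With these three facts in hand, the standard core theorem for $C_0$-semigroups (e.g.\ Engel--Nagel, Chapter II, Proposition~1.7, or Pazy, Theorem~1.9) implies that $\widehat{C}_B^\infty$ is a core for $G$, i.e.\ that the closure of $G|_{\widehat{C}_B^\infty} = \Lambda_D$ equals $G$. This is precisely the claim. I do not anticipate any genuine obstacle: the analytic content (smoothing estimates, the identification of the pointwise $t$-derivative at zero, and the comparison of $\widehat{C}_D^\infty$ and $\widehat{C}_B^\infty$) has already been carried out in Sections \ref{Sec:Shifts}--\ref{Sec:Generators}. The only subtlety worth flagging is to make the translation from the theorem's hypothesis that $DB^{-1}$ is nuclear to the pointwise comparison $D \prec C B$ demanded by Lemma \ref{lemma-C-subset}, since Lemma \ref{lemma-C-subset} is stated in the weaker operator-comparison form and not directly in terms of trace-class hypotheses.
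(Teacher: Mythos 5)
Your proposal is correct and takes essentially the same route as the paper's proof: the paper likewise deduces $d_k \le C b_k$ from the nuclearity of $DB^{-1}$, uses Lemma \ref{lemma-hat-c-dense} for density and Lemma \ref{lemma-C-subset} for invariance of $\widehat{C}_B^\infty$ under $\widehat{\mathcal{U}}_{tD}$, and concludes via the Engel--Nagel core criterion for $C_0$-semigroups. The only cosmetic difference is that you invoke Lemma \ref{lemma-Dxx-generator} explicitly to place $\widehat{C}_B^\infty$ in the generator's domain and identify the action there, which the paper uses implicitly, having just proved it.
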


\begin{proof}
    By virtue of lemma \ref{lemma-LD}, $\Lambda_D:\widehat{C}_B\to\mathcal{H}_\mathcal{E}$ is densely defined, symmetric, and non-positive. Therefore, the conjugate operator is densely defined and the operator $\Lambda_D$ is closed. Hence we only need to show that the domain of the closure of the operator $\Lambda_D:\widehat{C}_B^\infty\to\mathcal{H}_\mathcal{E}$ coincides with the domain  of the generator of the semigroup $\widehat{\mathcal{U}}_D(t)$, that is, that $\widehat{C}_B^\infty$ is dense in the domain of the generator with respect to the norm $\|f\|_{\Lambda_D} = \|f\|_{\mathcal{H}_\mathcal{E}} + \|\Lambda_D f\|_{\mathcal{H}_\mathcal{E}}$.

    If $DB^{-1}$ is nuclear, then the ratio of eigenvalues of $D$ and $B$, $\dfrac{d_k}{b_k}$, tends to zero, and consequently, there exists such a $C > 0$ that $d_k \le C b_k$. Therefore, $D\prec C B$, and by lemma \ref{lemma-C-subset} the space $\widehat{C}_B^\infty$ is invariant with respect to the action of the semigroup $\widehat{\mathcal{U}}_D(t)$. In addition, $\widehat{C}_B^\infty$ is dense in $\mathcal{H}_\mathcal{E}$ by lemma~\ref{lemma-hat-c-dense}. According to Proposition 7 of \cite{engel-nagel}, it 
    implies that the subspace $\widehat{C}_B^\infty$ is an essential domain for
    the generator of the semigroup $\widehat{\mathcal{U}}_D(t)$, that is, $\widehat{C}_B^\infty$ is dense in the domain of the generator 
    with respect to the 
    $\Lambda_D$-norm
\end{proof}

\begin{theorem}\label{th-delta-core}
    Let $D$ and $B$ be non-negative nuclear operators on the space $E$ whose eigenvectors coincide with the basis vectors ${\mathcal E}$. If the operators $D B^{-1}$, $D^{1/2}$, $B^{1/2}$ are nuclear, then the generator of the semigroup $\mathcal{U}_D(t)$ is the closure of the semi-bounded operator $\Delta_D:C_B^\infty\to\mathcal{H}_\mathcal{E}$, acting as
    $$
    \Delta_D f = \sum\limits_{k=1}^\infty d_k \partial_k^2 f, \quad f \in C_B.
    $$
 \end{theorem}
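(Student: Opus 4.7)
The proof follows the blueprint of Theorem \ref{th-lambda-core}: I will verify that $\Delta_D$ is well-defined, densely defined, symmetric, and non-positive on $C_B^\infty$; I will identify the generator action of $\mathcal{U}_{tD}$ on $C_B^\infty$ with $\Delta_D$; and finally I will invoke the invariance-plus-density criterion (Proposition 7 of \cite{engel-nagel}) to conclude that $C_B^\infty$ is an essential domain for the generator. The main ingredients are Lemma \ref{lemma-der-ineq} (playing the role that Lemma \ref{lemma-LD} played in the momentum case), Lemma \ref{lemma-c-dense} (density), Lemma \ref{lemma-C-subset} (invariance), and Lemma \ref{lemma-uniform-der-conv} (exchange of derivative and limit).

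First I would verify that $\Delta_D$ makes sense on $C_B^\infty$. Any $f \in C_B^\infty$ is a linear combination of vectors of the form $\mathcal{U}_{sB}u$ with $s>0$ and $u\in \mathcal{H}_\mathcal{E}$. Lemma \ref{lemma-der-ineq} with $k=2$ gives $\|\partial_j^2 \mathcal{U}_{sB}u\|_{\mathcal{H}_\mathcal{E}} \le \tfrac{2}{esb_j}\|u\|_{\mathcal{H}_\mathcal{E}}$, hence
$$
\sum_{j=1}^\infty d_j\,\|\partial_j^2 \mathcal{U}_{sB}u\|_{\mathcal{H}_\mathcal{E}} \;\le\; \frac{2}{es}\,\mathrm{Tr}(DB^{-1})\,\|u\|_{\mathcal{H}_\mathcal{E}},
$$
and the nuclearity of $DB^{-1}$ guarantees absolute convergence of the series $\Delta_D f = \sum_k d_k\partial_k^2 f$ in $\mathcal{H}_\mathcal{E}$. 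Density of $C_B^\infty$ is Lemma \ref{lemma-c-dense}, while symmetry and non-positivity follow from the fact that each $d_k\partial_k^2$ is symmetric and non-positive on the common smooth domain (verified via the finite partial Fourier transform $\mathcal{F}_N$, which is unitary on $\mathcal{H}_\mathcal{E}$ and conjugates $\partial_k$ to multiplication by $ix_k$).

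Next I would establish the analogue of Lemma \ref{lemma-Dxx-generator}: for every $f \in C_B^\infty$,
$$
\left.\tfrac{d}{dt}\mathcal{U}_{tD}f\right|_{t=0} = \Delta_D f.
$$
Writing $f = \mathcal{U}_{sB}u$ and using $\mathcal{U}_{tD}\mathcal{U}_{sB} = \mathcal{U}_{tD+sB}$, I would approximate by the truncated semigroups $\mathcal{U}_{tD_N}$. Each $\mathcal{U}_{tD_N}\mathcal{U}_{sB}u$ is the tensor product of a finite-dimensional Gaussian convolution with an infinite-dimensional factor, so Frechet differentiation in $t$ is straightforward and yields $\tfrac{d}{dt}\mathcal{U}_{tD_N}\mathcal{U}_{sB}u = \sum_{k=1}^N d_k\partial_k^2 \mathcal{U}_{tD_N+sB}u$. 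The estimate of the previous paragraph, applied with $td_k+sb_k \ge sb_k$ in place of $sb_k$, shows that this right-hand side converges as $N\to\infty$, uniformly on compact $t$-intervals, to $\sum_{k=1}^\infty d_k\partial_k^2 \mathcal{U}_{tD+sB}u$. Combined with the pointwise convergence $\mathcal{U}_{tD_N}\mathcal{U}_{sB}u \to \mathcal{U}_{tD}\mathcal{U}_{sB}u$ from Theorem \ref{th-coord-smooth-appr}, Lemma \ref{lemma-uniform-der-conv} allows me to interchange limit and derivative, giving the desired identity upon setting $t=0$.

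To conclude, the nuclearity of $DB^{-1}$ implies $d_k/b_k \to 0$, so $d_k \le C b_k$ for some $C>0$, and Lemma \ref{lemma-C-subset} then yields invariance of $C_B^\infty$ under $\mathcal{U}_{tD}$. Combined with density (Lemma \ref{lemma-c-dense}) and the generator computation above, Proposition 7 of \cite{engel-nagel} forces $C_B^\infty$ to be an essential domain, so the closure of $\Delta_D|_{C_B^\infty}$ coincides with the generator of $\mathcal{U}_D(t)$. The main technical obstacle I anticipate is precisely the exchange of derivative and infinite sum in the middle step: the bound from Lemma \ref{lemma-der-ineq} together with the nuclearity of $DB^{-1}$ is exactly what makes this interchange legitimate and clarifies why that hypothesis is imposed, while nuclearity of $D^{1/2}$ and $B^{1/2}$ is needed so that the coordinate-representation semigroups $\mathcal{U}_{tD}$, $\mathcal{U}_{tB}$ and the space $C_B^\infty$ are nontrivial to begin with.
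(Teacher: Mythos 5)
Your proposal is correct and is essentially the paper's own argument: the paper proves Theorem \ref{th-delta-core} merely by declaring it ``almost completely analogous'' to Lemma \ref{lemma-LD}, Lemma \ref{lemma-Dxx-generator} and Theorem \ref{th-lambda-core}, and you have spelled out precisely that analogy, with Lemma \ref{lemma-der-ineq} and the nuclearity of $DB^{-1}$ supplying the trace bound that legitimizes the interchange of sum, derivative and limit (via Lemma \ref{lemma-uniform-der-conv} and Theorem \ref{th-coord-smooth-appr}), and Lemmas \ref{lemma-c-dense} and \ref{lemma-C-subset} together with Proposition 7 of \cite{engel-nagel} showing that the dense invariant subspace $C_B^\infty$ is a core. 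There are no gaps beyond the level of detail the paper itself leaves implicit.
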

 \begin{proof}
    The proof is almost completely analogous to the proof of lemmas \ref{lemma-LD} and \ref{lemma-Dxx-generator} and of the theorem \ref{th-lambda-core}.
 \end{proof}

 \begin{remark}
    For example, the case of $B=D^{1/2}$ fits the condition of the theorem \ref{th-delta-core} if $D^{1/2}$ is nuclear. The case $B =D^{2/3}$  fits the condition of the theorem \ref{th-lambda-core} if $D^{1/3}$ is nuclear. Thus, operators $B$ satisfying the condition of the theorem exist= for a wide class of nuclear operators $D$.
\end{remark}

\begin{theorem}\label{th-sum-core}
    Let $D_x$, $D_p$ and $B$ be non-negative nuclear operators on the space $E$ whose eigenvectors coincide with the basis vectors ${\mathcal E}$. If the operators $D_x^{1/2}$, $B^{1/2}$, $D_x B^{-1}$ and $D_p B^{-1}$ are nuclear, then for any positive numbers $a,b\in\mathbb{R}$ the operator $a\Delta_{D_x} + b\Lambda_{D_p}$ is correctly defined on the space
    $$
    D_B 
    = \{\widehat{\mathcal{U}}_{sB} \mathcal{U}_{tB}  u \mid u \in \mathcal{H}_\mathcal{E}, t > 0, s>0\}, 
    $$
    its closure is a self-adjoint operator and is a generator of some semigroup.
\end{theorem}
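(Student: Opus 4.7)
The plan is to combine the form-theoretic argument from the finite-dimensional Theorem \ref{th-qosc-conseq-thifts} with the domain descriptions supplied by Theorems \ref{th-lambda-core} and \ref{th-delta-core}. First I will verify that for every $f=\widehat{\mathcal{U}}_{sB}\mathcal{U}_{tB}u\in D_B$ both $\Delta_{D_x}f$ and $\Lambda_{D_p}f$ belong to $\mathcal{H}_\mathcal{E}$. Since $D_B\subset\widehat{C}_B^\infty$, nuclearity of $D_pB^{-1}$ together with Lemma \ref{lemma-Dxx-generator} gives $\Lambda_{D_p}f=-(D_px,x)f\in\mathcal{H}_\mathcal{E}$. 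For $\Delta_{D_x}f$, I apply the Leibniz rule to $\partial_j^2\bigl(e^{-\tfrac{s}{2}(Bx,x)}\mathcal{U}_{tB}u\bigr)$, sum against $d_{x,j}$, and bound the three resulting series in $\mathcal{H}_\mathcal{E}$ using Lemmas \ref{lemma-x-exp}, \ref{lemma-der-ineq} and \ref{lemma-LD} together with the nuclearity of $D_xB^{-1}$, $D_x^{1/2}$ and $B^{1/2}$.

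Next let $S_x\ge 0$ and $S_p\ge 0$ be the self-adjoint generators of $\mathcal{U}_{tD_x}$ and $\widehat{\mathcal{U}}_{tD_p}$, identified by Theorems \ref{th-delta-core} and \ref{th-lambda-core} as the closures of $-\Delta_{D_x}$ on $C_B^\infty$ and of $-\Lambda_{D_p}$ on $\widehat{C}_B^\infty$. Introduce the closed non-negative symmetric forms
\[
\mathfrak{t}_1[f]=a\|S_x^{1/2}f\|^2,\qquad \mathfrak{t}_2[f]=b\|S_p^{1/2}f\|^2,\qquad \mathfrak{t}=\mathfrak{t}_1+\mathfrak{t}_2,\quad D(\mathfrak{t})=D(S_x^{1/2})\cap D(S_p^{1/2}).
\]
The sum of two closed non-negative forms is closed and non-negative, and $D(\mathfrak{t})\supset D_B$ is dense in $\mathcal{H}_\mathcal{E}$, so by the representation theorem $\mathfrak{t}$ is represented by a self-adjoint non-negative operator $T$. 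Since $\mathfrak{t}_1,\mathfrak{t}_2\le\mathfrak{t}$, Theorem \ref{th-repr-operators-incl} yields $D(T)\subset D(S_x)\cap D(S_p)$ and $T=aS_x+bS_p$ there. By the first step, $D_B\subset D(T)$ and $T|_{D_B}=-a\Delta_{D_x}-b\Lambda_{D_p}$.

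The main remaining step is to show that $D_B$ is a core for $T$, so that $\overline{T|_{D_B}}=T$. I do this by showing that for each $f\in D(T)$ the approximants $f_{s,t}:=\widehat{\mathcal{U}}_{sB}\mathcal{U}_{tB}f\in D_B$ converge to $f$ in the graph norm of $T$ as $s,t\downarrow 0$. Strong convergence $f_{s,t}\to f$ in $\mathcal{H}_\mathcal{E}$ follows from Corollaries \ref{cor-coord-smooth-cont} and \ref{cor-mom-smooth-cont}. The remaining graph-norm statement $Tf_{s,t}\to Tf$ reduces to controlling the commutators $[S_x,\widehat{\mathcal{U}}_{sB}\mathcal{U}_{tB}]f$ and $[S_p,\widehat{\mathcal{U}}_{sB}\mathcal{U}_{tB}]f$ in $\mathcal{H}_\mathcal{E}$. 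Using the Leibniz/Gaussian-moment expansions of step one uniformly in $(s,t)$ on bounded ranges, together with the nuclearity of $D_xB^{-1}$ and $D_pB^{-1}$, one shows these commutators tend strongly to zero. This commutator control is the main obstacle of the argument. Once it is settled, the closure of the operator $a\Delta_{D_x}+b\Lambda_{D_p}$ from the statement equals $-T$, which is self-adjoint, and since $-T\le 0$ it generates the strongly continuous contraction semigroup $e^{-tT}$ on $\mathcal{H}_\mathcal{E}$ by the Hille--Yosida theorem.
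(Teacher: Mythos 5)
Your first two steps reproduce the paper's argument: the well-definedness of $a\Delta_{D_x}+b\Lambda_{D_p}$ on $D_B$ is obtained there by exactly your Leibniz expansion together with the estimates of Lemmas \ref{lemma-x-exp}, \ref{lemma-der-ineq} and \ref{lemma-LD} (inequalities (\ref{DB-der-norm-1})--(\ref{DB-der-norm})), and the paper likewise invokes Theorem \ref{th-repr-operators-incl}, applied to three forms on $D_B$ shifted by $I$, to place the domain of the Friedrichs extension inside $D(\Delta_{D_x})\cap D(\Lambda_{D_p})$; your square-root form sum $\mathfrak{t}=\mathfrak{t}_1+\mathfrak{t}_2$ is the same device in slightly different clothing. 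The genuine gap is in your third step, which is the analytic heart of the theorem. You reduce the core property of $D_B$ to the strong vanishing of the commutators $[S_x,\widehat{\mathcal{U}}_{sB}\mathcal{U}_{tB}]f$ and $[S_p,\widehat{\mathcal{U}}_{sB}\mathcal{U}_{tB}]f$ and then merely assert that ``one shows'' they tend to zero, explicitly flagging this as the main obstacle without resolving it. Worse, the claim fails under the only estimates you cite: the Leibniz cross term $\sum_k d_{x,k}\,4sb_k\,x_k e^{-s(Bx,x)}\partial_k\mathcal{U}_{tB}$, bounded via Lemmas \ref{lemma-x-exp} and \ref{lemma-der-ineq}, has norm of order $\sqrt{s/t}\;\mathrm{Tr}(D_x)$ --- this is precisely the constant $C_2$ in (\ref{DB-der-norm-2}) --- which does not tend to zero when $s,t\downarrow 0$ at comparable rates. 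To salvage the commutator route you would need either to couple the parameters (e.g.\ $s=o(t)$) or to exploit the regularity of $f\in D(T)$ beyond the crude bound $\|\partial_k\mathcal{U}_{tB}f\|\le (etb_k)^{-1/2}\|f\|$, for instance a Cauchy--Schwarz argument using $\sum_k d_{x,k}\|\partial_k\mathcal{U}_{tB}f\|^2\lesssim\|S_x^{1/2}f\|^2$, none of which appears in your proposal.

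The paper closes this step without commutators, by a tail/head splitting: for $f$ in the Friedrichs domain, membership in $D(\Delta_{D_x})\cap D(\Lambda_{D_p})$ gives a tail bound $\sum_{k>N_1}\bigl(d_{x,k}\|\partial_k^2 f\|+d_{p,k}\|x_k^2 f\|\bigr)<\varepsilon$; the estimate (\ref{DB-der-norm}) gives a similarly small tail beyond some $N_2$ for the regularization $f_t=\mathcal{U}_{tB}\widehat{\mathcal{U}}_{tB}f\in D_B$; and on the remaining finitely many coordinates the graph-norm convergence of $f_t$ to $f$ is a finite-dimensional fact. If you replace your commutator paragraph by this splitting (or actually carry out the coupled-parameter commutator estimate sketched above), your argument becomes a complete proof essentially identical to the paper's; as it stands, the central step is asserted rather than proved.
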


\begin{proof}
    Since $D_B \subset \widehat{C}_B^\infty$, operator $\Lambda_D$ is well defined on $D_B$ by the theorem \ref{th-delta-core}. Now let us calculate the derivatives of an arbitrary function $f = \exp(-s(B x,x)) \mathcal{U}_{t B}  u \in D_B$:
\begin{eqnarray*}
    \partial_k f &=& \partial_k \exp(-s(B x,x)) \mathcal{U}_{t B}  u =  - 2 s b_k x_k \exp(-s(B x,x)) \mathcal{U}_{t B}  u_2 + \exp(-s(B x,x)) \partial_k \mathcal{U}_{t B}  u, \\
    \partial_k^2 f_2 
    &=& 4 s^2 b_k^2 x_k^2 \exp(-s(B x,x)) \mathcal{U}_{t B}  u \\
    &=&   - 4 s b_k x_k \exp(-s(B x,x)) \partial_k \mathcal{U}_{t B}  u
    + \exp(-s(B x,x)) \partial_k^2 \mathcal{U}_{t B}  u.
\end{eqnarray*}
    Let us estimate the norm of each term of the second derivative:
\begin{equation}\label{DB-der-norm-1}
        \|4 s_2^2 b_k^2 x_k^2 \exp(-s_2(B x,x)) \mathcal{U}_{t_2 B}  u_2 \| _{\mathcal{H}_\mathcal{E}}
        \le 4 s_2^2 b_k^2  \dfrac{1}{es_2 b_k} \|\mathcal{U}_{t_2 B}  u_2\| _{\mathcal{H}_\mathcal{E}}
        \le C_1 b_k \|u_2\| _{\mathcal{H}_\mathcal{E}},
    \end{equation}
    \begin{equation}\label{DB-der-norm-2}
        \begin{gathered}
            \| - 4 s_2 b_k x_k \exp(-s_2(B x,x)) \partial_k \mathcal{U}_{t_2 B}  u_2 \| _{\mathcal{H}_\mathcal{E}}
            \le 
            4 s_2 b_k \dfrac{1}{\sqrt{e s_2 b_k}} 
            \| \partial_k \mathcal{U}_{t_2 B}  u_2 \| _{\mathcal{H}_\mathcal{E}} \le \\
            \le 4 s_2 b_k \dfrac{1}{e \sqrt{s_2 t_2} b_k} \|u_2\|  _{\mathcal{H}_\mathcal{E}} = C_2 \|u_2\| _{\mathcal{H}_\mathcal{E}},
        \end{gathered}
    \end{equation}
    \begin{equation}\label{DB-der-norm-3}
        \| \exp(-s_2(B x,x)) \partial_k^2 \mathcal{U}_{t_2 B}  u_2 \| _{\mathcal{H}_\mathcal{E}} 
        \le 
        \| \partial_k^2 \mathcal{U}_{t_2 B}  u_2 \| _{\mathcal{H}_\mathcal{E}}
        \le 
        \dfrac{1}{e s_t b_k} \|u_2\| _{\mathcal{H}_\mathcal{E}} = \dfrac{C_3}{b_k} \|u_2\| _{\mathcal{H}_\mathcal{E}}.
    \end{equation}
    As one can see from the estimates (\ref{DB-der-norm-1})-(\ref{DB-der-norm-3}), the series $\sum\limits_{k=1}^\infty d_{x,k}\partial_k^2 f_2$ converges:
    \begin{equation}\label{DB-der-norm}
        \begin{gathered}
            \sum\limits_{k=1}^\infty d_{x,k} \|\partial_k^2 f_2 \| _{\mathcal{H}_\mathcal{E}} 
            \le
            \sum\limits_{k=1}^\infty d_{x,k} \left(C_1 b_k + C_2 + \dfrac{C_3}{b_k} \right) \|u\| _{\mathcal{H}_\mathcal{E}} 
            \le \\ 
            \le
         (C_1 {\rm Tr}(D_x B) + C_2 {\rm Tr}(D_x) + C_3 {\rm Tr}(D_x B^{-1}) \|u\| _{\mathcal{H}_\mathcal{E}}.
        \end{gathered}
    \end{equation}
    Thus, the operator $\Delta_{D_x}$ is correctly defined on the space $D_B$.

    The subspace $D_B$ is dense in $\mathcal{H}_\mathcal{E}$ due to the continuity of the semigroups $\mathcal{U}_{tB}$ and $\widehat{\mathcal{U}}_{tB}$.

    Both operators $\Delta_{D_x}$ and $\Lambda_{D_p}$ are self-adjoint and positive-definite, so their positive linear combination is symmetric and positive-definite. Consider three sesquilinear forms:
\begin{eqnarray*}
    t_1(f,g) &=& ((I - a \Delta_{D_x}) f, g), \quad D(t_1) = D_B,\\
    t_2(f,g) &=& ((I + b (D_p x, x)) f, g), \quad D(t_1) = D_B,\\
    t_3(f,g) &=& ((I - a \Delta_{D_x} + b (D_p x, x) f, g), \quad D(t_2) = D_B.
\end{eqnarray*}
    All these forms are closable, since the operators $I - a\Delta_{D_x}$, $I+b (D_p x, x)$ and $I - a\Delta_{D_x}+b (D_p x,x)$ are symmetric and semi-bounded from below. The forms $\bar t_1$ and $\bar t_2$ are represented exactly by the self-adjoint operators $I - a\Delta_{D_x}$ and $I + b (D_p x, x)$, and the form $\bar t_3$ is represented by some self-adjoint operator $(I - a\Delta_{D_x} - b\Lambda_{D_p})_{\rm f}$, which is Friedrichs extension of the operator $I - a\Delta_{D_x} - b\Lambda_{D_p}$. Since inequalities
    $\bar t_1 \le \bar t_3$
    and
    $\bar t_2 \le \bar t_3$
    are satisfied,
    then, by the theorem \ref{th-repr-operators-incl}, the domain of $(I - a\Delta_{D_x} - b\Lambda_{D_p})_{\rm f}$ is a subset of $D(I - a \Delta_{D_x}) = D(\Delta_{D_x})$, and a subset of $D(I - b(D_p x, x)) = D((D_p x, x))$.

    Let us show that $(I - a\Delta_{D_x} - b\Lambda_{D_p})_{\rm f}$ coincides with the closure of the operator $I - a\Delta_{D_x} - b\Lambda_{D_p}$. Consider an arbitrary $f$ from the domain of the Friedrichs extension $(I - a\Delta_{D_x} - b\Lambda_{D_p})_{\rm f}$. Then $f$ belongs to $D(\Delta_{D_x}) \cap D(\Lambda_{D_p})$. This means that $f$ is twice differentiable along each basic direction, multiplication by $x_k^2$ is correctly defined for it, and that there exists a number $N_1$ such that
    $$
    \sum\limits_{k=N_1+1}^\infty d_{x,k} \|\partial_k^2 f\| + d_{p,k} \|x_k^2 f\| < \varepsilon.
    $$
    On the other hand, for $f_t =\mathcal{U}_{tB}\widehat{\mathcal{U}}_{tB} f \in D_B$, according to the inequality \ref{DB-der-norm}, there exists a number $N_2$ such that
    $$
    \sum\limits_{k=N_2+1}^\infty d_{x,k} \|\partial_k^2 f_t\| + d_{p,k} \|x_k^2 f_t\| < \varepsilon.
    $$
    Thus, if $N = \max\{N_1, N_2\}$, then
    $$
    \| (a \Delta_{D_x} + b \Lambda_{D_p}) (f_t - f)\| _{\mathcal{H}_\mathcal{E}}
    \le \left\| 
        \sum\limits_{k=1}^N (a \cdot d_{p,k} x_k^2 + b \cdot d_{x,k} \partial_k^2)(f_t - f)
    \right\| + 2 \varepsilon
    $$
    and in the finite-dimensional case, the specified quantity tends to zero.
\end{proof}

\section{On the Fourier transform on the space \texorpdfstring{${\cal H}_{\cal E}$}{TEXT}}\label{Sec:Fourier}

In this section, we investigate the question of the existence of the Fourier transform on the space ${\cal H}_{\cal E}$. Particularly, we prove that sequence of partial Fourier transforms over first $n$ coordinates converges to $0$ in strong topology at $n\to\infty$. Also we prove that semigroups $\mathcal{U}_{tD}$ and $\widehat{\mathcal{U}}_{tD}$ of averaging shifts in the coordinate and momentum representations are not unitary equivalent. Thus, there is no such a unitary transformation of $E$ onto itself which maps any shift $S_{th}$ in the coordinate representation into shift $\widehat{S}_{th}$ in the momentum representation.

Let $\nu_D$ be a centered Gaussian measure on the Hilbert space $E$ having the nuclear covariance operator $D$, and let $\nu_{tD},\,t\geq 0,$ be a semigroup with respect to the convolution operation of centered Gaussian measures with covariance operators $tD,\, t\geq 0$.

In the case of a finite-dimensional Euclidean space $E$, the Fourier transform on the space $L_2(E)$ implements the unitary equivalence of the shift operator $S_h$ on the vector $h$ in the coordinate space to the shift operator $\widehat{S}_h$ on the same vector in momentum space, as well as the unitary equivalence of the semigroup $\mathcal{U}_{tD} = \int\limits_E S_{h} d\nu _{tD}(h),\, t\geq 0$ of diffusion operators to the semigroup $\widehat{\mathcal{U}}_{tD},\, t\geq 0$ of multiplication operators by Gaussian exponents.

If the space $E$ is infinite-dimensional, then from the analysis of the properties of the semigroup $\widehat{\mathcal{U}}_{tD},\, t\geq 0$, carried out in the theorem \ref{th-mom-smooth-appr}, and the analysis of the properties of the semigroup $\mathcal{U}_{tD},\, t\geq 0$ in the~\cite{SZ19} follow the absence on the space ${\cal H}_{\cal E}$ of the Fourier transform, which has all its characteristic properties. Namely, the following statement is true.

\begin{theorem}
    Let $E$ be an infinite-dimensional real separable Hilbert space and $\cal E$ be some ONB in this space. Then there does not exist a unitary transformation of the space ${\cal H}_{\cal E}$ onto itself, which, when arbitrarily choosing a non-negative nuclear operator $D$ with the basis of eigenvectors $\cal E$, implements the unitary equivalence of the semigroup $\mathcal{U}_{tD},\, t\geq 0$ and the semigroups $\widehat{\mathcal{U}}_{tD},\,t\geq 0.$
\end{theorem}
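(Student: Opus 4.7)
The plan is to produce a single nuclear operator $D$ for which the coordinate and momentum averaging semigroups are qualitatively incompatible, so that no unitary intertwiner can exist. Since the hypothetical unitary $U$ is required to work for \emph{every} nuclear $D$ simultaneously, one well-chosen witness ruins all candidates at once.

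I would take any nuclear $D$ whose square root fails to be nuclear, for instance the operator with eigenvalues $d_k = 1/k^2$ in the basis $\mathcal{E}$: then $\sum_k d_k < \infty$ but $\sum_k \sqrt{d_k} = \infty$. For this $D$, Theorem \ref{th-coord-smooth-trivial} applies directly and gives $\mathcal{U}_{tD} f = 0$ in $\mathcal{H}_\mathcal{E}$ for every $f \in \mathcal{H}_\mathcal{E}$ and every $t > 0$. Thus the coordinate-averaging side of any supposed intertwining relation $U \mathcal{U}_{tD} U^{-1} = \widehat{\mathcal{U}}_{tD}$ is obtained by conjugating the zero operator, and is therefore itself the zero operator.

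In parallel, on the momentum side Theorem \ref{th-mom-smooth-appr} needs only $D$ nuclear and produces the strong limit $A_{tD} = \lim_m \widehat{\mathcal{U}}_{tD_m}$, uniform on compact $t$-intervals. I would interpret $\widehat{\mathcal{U}}_{tD}$ in the statement as this $A_{tD}$: that is the only consistent extension of Definition \ref{def-smoothed-shift-2}, and Theorem \ref{th-two-mom-smooth} confirms the two definitions agree whenever both apply. Each approximant $\widehat{\mathcal{U}}_{tD_m}$ is multiplication by $\exp\bigl(-\frac{t}{2}\sum_{k=1}^m d_k x_k^2\bigr)$, which equals the identity at $t=0$ and is strongly continuous in $t$. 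Uniform convergence on compact $t$-intervals transmits these properties to the limit, so $A_{tD}$ is strongly continuous with $A_0 = I$, and in particular $A_{tD} \neq 0$ for all sufficiently small $t > 0$.

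Combining the two sides, a hypothetical unitary $U$ implementing the intertwining would yield, for our $D$ and any small $t > 0$, the equality $A_{tD} = U \mathcal{U}_{tD} U^{-1} = 0$, contradicting the non-triviality just established. The argument is a short one-line contradiction once the witness is in hand; the only real point requiring care is the bookkeeping around the meaning of $\widehat{\mathcal{U}}_{tD}$ when $D^{1/2}$ is not nuclear, namely that $A_{tD}$ is the correct canonical object and is non-degenerate near $t=0$. Both are consequences of the preceding theorems, so no additional analytic work is needed beyond pointing to Theorems \ref{th-coord-smooth-trivial} and \ref{th-mom-smooth-appr}.
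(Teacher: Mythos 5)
Your proposal is correct and takes essentially the same route as the paper: both arguments use the same witness (a nuclear $D$ whose square root is not nuclear) and derive a contradiction by playing Theorem~\ref{th-coord-smooth-trivial}, which makes $\mathcal{U}_{tD}$ the zero operator for $t>0$, against the non-trivial strongly continuous limit semigroup supplied by Theorem~\ref{th-mom-smooth-appr}; the paper phrases the clash as unitary conjugation preserving strong continuity, while you phrase it as conjugation of the zero operator remaining zero, which is the same obstruction. Your explicit bookkeeping identifying $\widehat{\mathcal{U}}_{tD}$ with the limit $A_{tD}$ when $D^{1/2}$ is not nuclear is actually slightly more careful than the paper's text, which at this step cites Corollary~\ref{cor-mom-smooth-cont} even though its hypothesis ($D^{1/2}$ nuclear) fails for the chosen witness.
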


\begin{proof}
    Suppose, to the contrary, that there exists a unitary transformation ${\cal F}$ of the space ${\cal H}_{\cal E}$ onto itself such that, when arbitrarily choosing a non-negative nuclear operator $D$ with the basis of eigenvectors ${\cal E}$, the following equalities are satisfied:
    \begin{equation}\label{F}
        \mathcal{U}_{tD}
        ={\cal F} \widehat{\mathcal{U}}_{tD} 
        {\cal F}^{-1},\quad t\geq 0.
    \end{equation}
    Consider a non-negative nuclear operator $D$ such that the operator $D^{1/2}$ is not nuclear. Then, by virtue of the corollary \ref{cor-mom-smooth-cont}, the semigroup $\widehat{\mathcal{U}}_{tD},\, t\geq 0,$ is strongly continuous in the space ${\cal H}_{\cal E}$, and the semigroup $\mathcal{U}_{tD},\, t\geq0,$ does not have the property of strong continuity due to the theorem \ref{th-coord-smooth-trivial}. But this contradicts equality (\ref{F}).
\end{proof}

\begin{remark}
    The absence on the space ${\cal H}_{\cal E}$ of a unitary Fourier transform having the property (\ref{F}) does not exclude the possibility of the existence on the space ${\cal H}$ of such a unitary transformation that implements the unitary equivalence of the shift operator $S_h$ to the vector $h$ in the coordinate space and the operator $\widehat{S}_h$ shift by the same vector in the momentum space, but does not contain the space ${\cal H}_{\cal E}$ as an invariant subspace.
\end{remark}

On the space ${\cal H}_{\cal E}$, partial Fourier transforms ${\cal F}_{n},\, n\in {\mathbb N}$ were defined, representing the Fourier transform over the first $n$ variables. For every $n\in\mathbb N$, the partial Fourier transform admits the representation ${\cal F}_n={\cal F}_{E_n}\otimes I_{E^n}$, where ${\cal F}_{E_n}$ is the Fourier transform on the space $L_2(E_n)$ and $I_{E^n}$ is the identical operator on the space ${\cal H}_{{\cal E}^n}$, associated with the representation of the space ${\cal H}_{\cal E}$ as a tensor product $L_2(E_n)\otimes {\cal H}_{{\cal E}^n}$. Therefore, partial Fourier transforms are unitary operators on the space ${\cal H}_{\cal E}$.

If $B$ is a nonnegative operator of finite rank $m\in\mathbb N$, then the equality holds
$$
\widehat{\mathcal{U}}_D(t)={\cal F}_m \mathcal{U}_{D}(t){\cal F}_m^{-1},\quad t\geq 0.
$$
Later for studying the Fourier transform on the space $\cal H$, we need to investigate whether the functions $f: E\to {\mathbb C}$, represented by infinite products of functions from one variable of the form $f(x)=\prod\limits_{k=1}^{\infty}f_k(x_k)$ belong to the space ${\cal H}_{\cal E}$.

\begin{definition}
    We say that an infinite product 
    $\prod\limits_{i=1}^{\infty}f_i(x_i)$ is defined as an element of ${\cal H}_{\cal E}$ if there exists a function $g\in {\cal H}_{\cal E}$, which, with an arbitrary choice of an absolutely measurable bar $\Pi_{a,b}$, satisfies the equality
    \begin{equation}\label{prod}
    (g,\chi _{\Pi _{a,b}})=\prod\limits_{i=1}^{\infty }\int\limits_{a_i}^{b_i}f_i(x_i)dx_i.
    \end{equation}
\end{definition}

To investigate the existence of infinite products of functions, the following statement is required:

\begin{lemma}\label{lemma-inf-prod-ineq}
    Let $\{{u}_k\}, \{{v}_k\}$ be sequences of vectors of a Banach space $\cal K$ such that $\{ \|{u}_k\|_{\cal K}\}$,  $\{ \|{v}_k\|_{\cal K}\}\in l_{\infty }$ and 
    \begin{equation}\label{crit}
        \sum\limits_{k=1}^{\infty }\max \{0, \ln (\|u_k\|_{\cal K})\} \leq \infty ,\ \sum\limits_{k=1}^{\infty }\max \{ 0,\ln (\|v_k\|_{\cal K})\} \leq \infty .
    \end{equation}
    Then for any $m\in {\mathbb N}$ we can evaluate
    \begin{equation}\label{crit'}
        \| \otimes_{k=1}^m v_k - \otimes_{k=1}^m u_k \|_{{\cal K}^m} \leq C^2\sum\limits_{k=1}^{\infty }\|v_k-u_k\|_{\cal K}, 
    \end{equation}
    where
    $$
    C=\max \{ 1, \sup\limits_{K_1\leq K_2}\prod\limits_{k=K_1}^{K_2}\|u_k\|_{\cal K} , \sup\limits_{K_1\leq K_2}\prod\limits_{k=K_1}^{K_2}\|v_k\|_{\cal K}\}
    $$
    and ${\cal K}^m={\cal K}\otimes \ldots\otimes {\cal K}$ is tensor product of the space $\cal K$ on itself $m$ times, endowed with norm $\| \cdot \|_{{\cal K}^m}$ such that $\| g_1\otimes \ldots \otimes g_m\|_{{\cal K}^m}=\prod\limits_{k=1}^m\| g_k\|_{\cal K}$ for any $g_1,\ldots,g_m\in \cal K$.
\end{lemma}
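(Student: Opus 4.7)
The plan is to use a telescoping identity together with the multiplicativity of the tensor norm. Writing
\begin{equation*}
\otimes_{k=1}^m v_k - \otimes_{k=1}^m u_k = \sum_{j=1}^{m} \Bigl(\otimes_{k=1}^{j-1} u_k\Bigr) \otimes (v_j - u_j) \otimes \Bigl(\otimes_{k=j+1}^{m} v_k\Bigr),
\end{equation*}
the triangle inequality together with the identity $\|g_1 \otimes \ldots \otimes g_m\|_{\mathcal{K}^m} = \prod_{k=1}^m \|g_k\|_{\mathcal{K}}$ reduces the problem to bounding, for each $j \in \{1,\ldots,m\}$, the product
\begin{equation*}
\Bigl(\prod_{k=1}^{j-1}\|u_k\|_{\mathcal{K}}\Bigr)\, \|v_j-u_j\|_{\mathcal{K}}\, \Bigl(\prod_{k=j+1}^{m}\|v_k\|_{\mathcal{K}}\Bigr).
\end{equation*}

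The next step is to observe that the two outer products are uniformly bounded by the constant $C$ defined in the statement. This is exactly where hypothesis (\ref{crit}) is used: the absolute convergence of the series $\sum_k \max\{0,\ln \|u_k\|_{\mathcal{K}}\}$ guarantees that the partial products $\prod_{k=K_1}^{K_2}\|u_k\|_{\mathcal{K}}$ are bounded uniformly in $K_1\le K_2$ (those factors less than $1$ only decrease the product, while the factors exceeding $1$ contribute a finite total to the logarithm), and likewise for the $v_k$. Hence each term in the telescoping sum is bounded by $C^2 \|v_j-u_j\|_{\mathcal{K}}$, and summing over $j = 1,\ldots, m$ and extending the sum from $m$ to $\infty$ yields (\ref{crit'}).

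I do not expect any serious obstacle. The only mildly delicate point is verifying carefully that the constant $C$ indeed dominates every partial product appearing in the telescoping estimate; this is a matter of checking the boundary cases $j=1$ and $j=m$, where one of the two outer tensor products is empty and the corresponding factor equals $1 \le C$ by the $\max$ in the definition of $C$. The absolute convergence assumption (\ref{crit}) is used only to ensure $C<\infty$; the inequality itself is essentially elementary.
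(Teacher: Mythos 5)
Your proof is correct and follows essentially the same route as the paper: the paper's recursive peeling of the first factor, iterated down to the last, produces exactly your telescoping sum $\sum_j (\otimes_{k<j})\otimes(v_j-u_j)\otimes(\otimes_{k>j})$ with each term bounded by $C^2\|v_j-u_j\|_{\cal K}$ via the multiplicative tensor norm, and hypothesis (\ref{crit}) is likewise used only to guarantee $C<\infty$. The only difference is presentational (your closed-form identity versus the paper's induction, and mirrored placement of the $u$- and $v$-products around the difference), which changes nothing of substance.
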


\begin{proof}
    Due to the conditions (\ref{crit}), infinite products of $\sum\limits_{k=1}^{\infty }\|u_k\|_{\cal K}$ and $\sum\limits_{k=1}^{\infty}\|v_k\|_{\cal K}$ converge unconditionally and therefore the value of $C$ introduced in the formulation of lemma \ref{lemma-inf-prod-ineq} is finite. Next, for every $m\in\mathbb N$ we have
    $$
    {\|} \otimes_{k=1}^m v_k-\otimes_{k=1}^m u_k {\|}_{{\cal K}^m}=\| \otimes_{k=1}^m v_k - v_1\otimes (\otimes_{k=2}^mu_k)+  v_1\otimes (\otimes_{k=2}^mu_k)-
    \otimes_{k=1}^mu_k\|_{{\cal K}^m} \leq
    $$
    $$
    \|v_1\|_{\cal K}\, \| \otimes_{k=2}^mv_k-\otimes_{k=2}^mu_k\|_{{\cal K}^{m-1}}+\|v_1-u_1\|_{\cal K}\, \|\otimes_{k=2}^mu_k\|_{{\cal K}^{m-1}}=
    $$
    $$
    = \|v_1\|_{\cal K}\, \| \otimes_{k=2}^mv_k-\otimes_{k=2}^mu_k\|_{{\cal K}^{m-1}}+\|v_1-u_1\|_{\cal K}\, \prod\limits_{k=2}^m\|u_k\|_{{\cal K}} \leq $$
    $$\leq \|v_1\|_{\cal K}  \|\otimes_{k=2}^mv_k-\otimes_{k=2}^mu_k\|_{{\cal K}^{m-1}}+C\|v_1-u_1\|_{\cal K}.
    $$
    Similarly, $\|\otimes_{k=2}^mv_k-\otimes_{k=2}^mu_k\|_{{\cal K}^{m-1}}\leq C\|v_2-u_2\|_{\cal K}+\|v_2\|_{\cal K}\,  \|\otimes_{k=3}^mv_k-\otimes_{k=3}^mu_k\|_{{\cal K}^{m-2}}$, and so on. Finally we get an estimate 
    $$
        {\|} \otimes_{k=1}^m v_k-\otimes_{k=1}^m u_k {\|}_{{\cal K}^m}
    \leq 
    $$ 
    $$
        C\|v_1-u_1\|_{\cal K}+\|v_1\|_{\cal K}C\|v_2-u_2\|_{\cal K}+\|v_1\|_{\cal K}\|v_2\|_{\cal K}C\|v_3-u_3\|_{\cal K}+\ldots+\|v_1\|_{\cal K}\ldots\|v_{m-1}\|_{\cal K}C\|v_m-u_m\|_{\cal K}\leq
    $$
    $$
        \leq \sum\limits_{k=1}^mC^2\|v_k-u_k\|_{\cal K}\leq \sum\limits_{k=1}^{\infty }C^2\|v_k-u_k\|_{\cal K}.
    $$

Therefore, for all $m\in\mathbb N$, the estimate (\ref{crit'}) holds.
\end{proof}

The following theorem provides a sufficient condition for the existence of an infinite product.

\begin{theorem}
    Consider a sequence of functions $\{f_i\}$ from the space $L_2({\mathbb R})$ such that there exists an absolutely measurable block $\Pi _{a,b}$ such that
    $$
    \sum\limits_{i=1}^{\infty }\| f_i-\chi _{[a_i,b_i]}\|_{L_2({\mathbb R})}<+\infty.
    $$
    Then there exists an infinite product
    $\prod\limits_{i=1}^{\infty }f_i(x_i)\in {\cal H}_{\cal E}$.
\end{theorem}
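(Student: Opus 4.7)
The plan is to realize the infinite product as a norm-limit in $\mathcal{H}_{\mathcal{E}}$ of truncated tensor products. For each $m\in\mathbb N$, put
\[
g_m := f_1\otimes\cdots\otimes f_m\otimes\chi_{\Pi^{m}_{a,b}},
\]
where $\Pi^{m}_{a,b}=\{y\in E:(e_k,y)\in[a_k,b_k)\ \forall k>m\}$ is the tail block. Under the factorisation $\mathcal{H}_{\mathcal{E}}=L_2(E_m)\otimes\mathcal{H}_{\mathcal{E}^m}$, each $g_m$ is a legitimate element of $\mathcal{H}_{\mathcal{E}}$, because the tail block $\Pi^{m}_{a,b}$ inherits absolute measurability from $\Pi_{a,b}$ and thus belongs to ${\mathcal R}_{\mathcal{E}^m}$.

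First I would show that $\{g_m\}$ is Cauchy. For $n<m$, the difference factorises as
\[
g_m-g_n=\Bigl(\bigotimes_{k=1}^{n}f_k\Bigr)\otimes\Bigl(\bigotimes_{k=n+1}^{m}f_k-\bigotimes_{k=n+1}^{m}\chi_{[a_k,b_k)}\Bigr)\otimes\chi_{\Pi^{m}_{a,b}},
\]
so its norm splits as the product of the norms of the three tensor factors. Lemma~\ref{lemma-inf-prod-ineq} applied with $\mathcal{K}=L_2(\mathbb R)$, $u_k=\chi_{[a_k,b_k)}$, $v_k=f_k$ bounds the middle factor by $C^2\sum_{k>n}\|f_k-\chi_{[a_k,b_k)}\|_{L_2}$, which tends to zero. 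The hypotheses of that lemma are satisfied: the condition on $\|u_k\|_{L_2}=\sqrt{b_k-a_k}$ is just (half of) the absolute-measurability condition for $\Pi_{a,b}$, while the condition on $\|f_k\|_{L_2}$ follows from the triangle inequality $\|f_k\|\le\sqrt{b_k-a_k}+\|f_k-\chi_{[a_k,b_k)}\|$ together with the hypothesis $\sum\|f_k-\chi_{[a_k,b_k)}\|<\infty$. The outer norm-factors $\prod_{k\le n}\|f_k\|$ and $\prod_{k>m}\sqrt{b_k-a_k}$ are uniformly bounded by the same constant $C$, yielding $\|g_m-g_n\|_{\mathcal{H}_{\mathcal{E}}}\le C'\sum_{k>n}\|f_k-\chi_{[a_k,b_k)}\|$. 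Hence $\{g_m\}$ is Cauchy and its limit $g\in\mathcal{H}_{\mathcal{E}}$ is the candidate for the infinite product.

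Next I would verify the defining identity \eqref{prod}. For any absolutely measurable block $\Pi_{\alpha,\beta}$ the tensor decomposition $\chi_{\Pi_{\alpha,\beta}}=\bigotimes_{k=1}^{m}\chi_{[\alpha_k,\beta_k)}\otimes\chi_{\Pi^{m}_{\alpha,\beta}}$ gives
\[
(g_m,\chi_{\Pi_{\alpha,\beta}})_{\mathcal{H}_{\mathcal{E}}}=\prod_{k=1}^{m}\int_{\alpha_k}^{\beta_k}f_k(x)\,dx\;\cdot\;\prod_{k>m}\lambda\bigl([a_k,b_k)\cap[\alpha_k,\beta_k)\bigr).
\]
Writing $\int_{\alpha_k}^{\beta_k}f_k=\lambda([a_k,b_k)\cap[\alpha_k,\beta_k))+\varepsilon_k$, the Cauchy--Schwarz estimate $|\varepsilon_k|\le\sqrt{\beta_k-\alpha_k}\,\|f_k-\chi_{[a_k,b_k)}\|$ together with absolute measurability of $\Pi_{\alpha,\beta}$ gives $\sum|\varepsilon_k|<\infty$, so the right-hand side converges as $m\to\infty$ to $\prod_{k=1}^{\infty}\int_{\alpha_k}^{\beta_k}f_k(x)\,dx$. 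Combined with the continuity of the inner product, $(g_m,\chi_{\Pi_{\alpha,\beta}})\to(g,\chi_{\Pi_{\alpha,\beta}})$, this is exactly the required equality \eqref{prod}.

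The main technical obstacle I anticipate is the uniform bookkeeping in the Cauchy estimate: one must verify that the constant $C$ furnished by lemma~\ref{lemma-inf-prod-ineq} is independent of the truncation level, that the $\|v_k\|$-hypothesis of that lemma follows cleanly from the two standing assumptions (where one must use $\ln(1+t)\le t$ after choosing $k$ large enough so that $\|f_k-\chi_{[a_k,b_k)}\|\le 1$), and that degenerate situations where the infinite product collapses to zero are handled consistently on both sides of \eqref{prod}. Once this uniform control is secured, the convergence $g_m\to g$ and the identification of the limit with the claimed infinite product are essentially automatic from absolute summability.
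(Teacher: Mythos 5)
Your proof is correct, but it takes a genuinely different route from the paper's. The paper never works with the truncated tensor products $g_m=f_1\otimes\cdots\otimes f_m\otimes\chi_{\Pi^m_{a,b}}$ directly; instead it builds a doubly-indexed family of \emph{simple} functions $S_n=\prod_i s_{n,i}$, where each $s_{n,i}$ approximates $f_i$ simultaneously in $L_2(\mathbb{R})$ and in $L_1(\mathbb{R})$ to accuracy $\varepsilon_{n,i}$ (with $\sum_i\varepsilon_{n,i}\to 0$ as $n\to\infty$) and is pointwise dominated, $|s_{n,i}|\le|s_{m,i}|\le|f_i|$ a.e.\ for $n<m$, so that the constants in lemma~\ref{lemma-inf-prod-ineq} are uniform in $n$; membership in ${\cal H}_{\cal E}$ is then automatic because $S_n\in S_2(E,{\cal R}_{\cal E},\lambda_{\cal E},{\mathbb C})$ and ${\cal H}_{\cal E}$ is its completion, and the pairing identity \eqref{prod} is verified through the $L_1$-bounds $|\int_{c_k}^{d_k}f_k-\int_{c_k}^{d_k}s_{n,k}|\le\|f_k-s_{n,k}\|_{L_1}\le\varepsilon_{n,k}$. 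You instead offload the membership question to the factorisation ${\cal H}_{\cal E}=L_2(E_m)\otimes{\cal H}_{{\cal E}^m}$ (theorem~4 of \cite{BusS}, which the paper does cite earlier), which makes each $g_m$ an honest element of ${\cal H}_{\cal E}$ with multiplicative norm across the three tensor factors, and you replace the paper's $L_1$-approximation bookkeeping by the per-coordinate Cauchy--Schwarz bound $|\varepsilon_k|\le\sqrt{\beta_k-\alpha_k}\,\|f_k-\chi_{[a_k,b_k)}\|_{L_2}$. Both proofs lean on lemma~\ref{lemma-inf-prod-ineq} twice (once in $L_2(\mathbb{R})$ for the Cauchy estimate, once in scalar form for the pairing); your version is shorter and avoids the delicate choice of the dominated family $s_{n,i}$, while the paper's stays entirely inside the simple-function description of ${\cal H}_{\cal E}$ and so does not depend on the external tensor-factorisation theorem. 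One point you state but do not fully write out: the convergence of $\prod_{k\le m}\int_{\alpha_k}^{\beta_k}f_k\cdot\prod_{k>m}\lambda([a_k,b_k)\cap[\alpha_k,\beta_k))$ to $\prod_{k=1}^{\infty}\int_{\alpha_k}^{\beta_k}f_k$ (and the very convergence of that infinite product, including the degenerate case where the tail measures drive everything to zero) requires the scalar case of lemma~\ref{lemma-inf-prod-ineq} applied to the two sequences that agree up to index $m$, together with the existence of the tail products guaranteed by absolute measurability; since you flagged exactly this as the remaining bookkeeping and the summability $\sum_k|\varepsilon_k|<\infty$ you established is what feeds that estimate, this is a routine completion rather than a gap.
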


\begin{proof}
    Define $\delta _i=\| f_i-\chi _{[a_i,b_i]}\|_{L_2(\mathbb R)}$ for $i\in \mathbb{Z}_+$. By condition, the series $\sum\limits_{j=1}^{\infty }\delta _j$ converges. The product $\prod\limits_{k=1}^{\infty }\|\chi _{[a_k,b_k]}\|_{L_2({\mathbb R})}$ converges absolutely since the block $\Pi _{a,b}$ is absolutely measurable. Therefore, the product $\prod\limits_{k=1}^{\infty }\| f _k\|_{L_2({\mathbb R})}$ converges absolutely, since 
    $$
    \| \chi _{[a_k,b_k]}\|_{L_2({\mathbb R})}-\delta _k\leq \| f_k\|_{L_2({\mathbb R})}]\leq \| \chi _{[a_k,b_k]}\|_{L_2({\mathbb R})}+\delta _k.
    $$
    For each number $n\in\mathbb N$, we choose such a sequence of non-negative numbers $\varepsilon _{n,i},i\in\mathbb N$ that $\sum\limits_{i=1}^{\infty}\varepsilon _{n,i}<+\infty$ and, in addition, the condition $\lim\limits_{n\to\infty }\left( \sum\limits_{i=1}^{\infty}\varepsilon _{n,i} \right)=0$ is satisfied.

    Consider the following sequence of simple functions $\{ S_n\}:\ {\mathbb N}\to S_2(E,{\cal R}_{\cal E},\lambda _{\cal E},{\mathbb C})$:
    $$
    S_n=\prod\limits_{i=1}^{\infty }s_{n,i},
    $$
    where $s_{n,i}=\chi _{[a_i,b_i]}$ when $i>n$ and otherwise $s_{n,i}$ is a simple function such that inequalities $\| s_{n,i}-f_i\|_{L_2({\mathbb R})}<\varepsilon _{n,i}$ and $\| s_{n,i}-f_i\|_{L_1({\mathbb R})}<\varepsilon _{n,i}$ are satisfied.

    The functions $s_{n,i},\, n,i\in\mathbb N$ can be chosen in such a way that for all $m,n\in\mathbb N$ such that $n<m$, and for almost all $x\in\mathbb R$ the condition $|s_{n,i}(x)|\leq|s_{m,i}(x)|\leq|f_i(x)|$ is satisfied. This allows from the inequality $\prod\limits_{i=m_1}^{m_2}\|f _i\|_{L_2({\mathbb R})}^2\leq C\ \forall \ m_1,m_2\in {\mathbb N}$ (which follows from the unconditional convergence of the product $\prod\limits_{k=1}^{\infty }\|f _k\|_{L_2({\mathbb R})}^2$) to get an estimate $\prod\limits_{i=m_1}^{m_2}\|s _{n,i}\|_{L_2({\mathbb R})}^2\leq C\ \forall \ m_1,m_2\in {\mathbb N}$ with constant $C$ independent of $m_1,m_2,n\in \mathbb N$.

    It is easy to see that $S_n\in S_2(E,{\cal R}_{\cal E},\lambda_{\cal E},{\mathbb C})$ for each $n\in\mathbb N$. We prove that the sequence $\{S_n\}$ is fundamental in the space ${\cal H}_{\cal E}$ and its limit has the property (\ref{prod}).

    Let $n,m\in\mathbb{N}$ and $n<m$. Then according to lemma \ref{lemma-inf-prod-ineq} we get that
\begin{eqnarray*}
\| S_m-S_n\|_{{\cal H}_{\cal E}}
    &=&   \left\| \left(
            \prod\limits_{k=1}^ms_{m,k}- \prod\limits_{k=1}^ms_{n,k}
        \right)
    \prod\limits_{k=m+1}^{\infty }\chi _{[a_k,b_k]} \right\| _{{\cal H}_{\cal E}}\\
    &=&   \left\| 
            \left(
                \prod\limits_{k=1}^ms_{m,k}- \prod\limits_{k=1}^ms_{n,k}
            \right)
        \right\|_{L_2(R^m)}
        \left(
            \prod\limits_{k=m+1}^{\infty }(b_k-a_k)
        \right)^{1/ 2} \\
        &\leq& C^2\sum\limits_{k=1}^{\infty } 
        \left\|
            s_{m,k}-f_k+f_k-s_{n,k}
        \right\|_{L_2({\mathbb R})}
        \left(
            \prod\limits_{k=m+1}^{\infty }(b_k-a_k)\right)^{1/2}  \\
            &\leq& C^2
        \left[
            \sum\limits_{k=1}^n(\varepsilon _{m,k}+\varepsilon _{n,k})+\sum\limits_{k=n+1}^m(\varepsilon _{m,k}+\delta _{k})
        \right]
        \left(
            \prod\limits_{k=m+1}^{\infty }(b_k-a_k)
        \right)^{1/2} .
\end{eqnarray*}
Thus $\{S_n\}$ is fundamental. 

Denote by $F\in {\cal H}_{\cal E}$ the limit of the sequence $\{S_n\}$. Then for any absolutely measurable block $\Pi_{c,d}$, the equality $\lim\limits_{n\to\infty}|(F,\chi_{\Pi_{c,d}})-(S_n,\chi_{\Pi _{c,d}})|=0$ holds. Therefore, to prove equality (\ref{prod}) it is required to show that the sequence $\{(S_n,\chi _{\Pi _{c,d}})\}$ converges exactly to $\prod\limits_{k=1}^{\infty}\int\limits_{c_k}^{d_k}f_k(x_k)dx_k$.

Since the product $\prod\limits_{k=1}^{\infty }\|f_k\|_{L_2({\mathbb R})}^2$ converges and the block $\Pi _{c,d}$ is absolutely measurable, then there exists a number $C\geq 1$ such that
$$
    \sup\limits_{K_1,K_2}
    \sum\limits_{k=K_1}^{k=K_2}
    \left|
        \int\limits_{c_k}^{b_k}f_k(x)dx
    \right|
    \leq  C.
$$
Since $|s_{n,k}(x)|\leq|f_k(x)|$ is almost everywhere, for any $n$ holds
$$
    \sup\limits_{K_1,K_2}
    \sum\limits_{k=K_1}^{k=K_2}
    \left|
        \int\limits_{c_k}^{b_k}s_{n,k}(x)dx
    \right|
    \leq C.
$$
For any $n,k$, by virtue of choosing the sequence $\{S_n\}$, the following inequalities are satisfied
$$
    \left|
        \int\limits_{c_k}^{d_k}f_k(x)dx
        -\int\limits_{c_k}^{d_k}s_{n,k}(x)dx
    \right|
    \leq 
    \|f_k-s_{n,k}\|_{L_1({\mathbb R})}
    \leq 
    \varepsilon_{n,k}.
$$
Therefore, according to lemma \ref{lemma-inf-prod-ineq}, 
$$
    \left|
        \prod\limits_{k=1}^{\infty }
        \int\limits_{c_k}^{d_k} f_k(x)dx - \prod\limits_{k=1}^{\infty } \int\limits_{c_k}^{d_k}s_{n,k}(x)dx
    \right|
    \leq C^2\sum\limits_{k=1}^{\infty }\epsilon _{n,k},
$$
so the sequence $(S_n,\chi_{\Pi_{c,d}})$ indeed converges to $(F,\chi_{\Pi_{c,d}})$.
\end{proof}

We investigate the question of the existence of the limit of the sequence of functions $\{{\cal F}_nu\}$ for an arbitrary $u\in {\cal H}_{\cal E}$, where ${\cal F}_n$ is the Fourier transform over the first $n$ variables.
Let $\Pi$ be a block with edges $[a_k,b_k)=[-\frac{1}{2},\frac{1}{2})$ for all $k\in\mathbb N$. Let us find $\lim\limits_{n\to \infty }{\cal F}_n\chi _{\Pi}$.

\begin{example}\label{example-fourier-sin}
    Let $f(x)=\frac{2}{x}\sin (\frac{x}{2}),\, x\in \mathbb R$. Note that 
    $$
    ({\cal F}_n\chi _{\Pi })(x)=\prod\limits_{j=1}^n f(x_j)\otimes \chi _{E^n,\Pi ^n}(x_{n+1},\ldots),\, x\in E
    $$
    for all $n\in \mathbb N$. Then for any non-empty absolutely measurable block $\Pi _{a,b}$ with positive measure it is true that
    
    \begin{equation}\label{51}
    \lim\limits_{n\to \infty }(\chi _{\Pi _{a,b}},{\cal F}_n\chi _{\Pi })=0,
    \end{equation} 
    because there exists $\sigma \in (0,1)$ such that $|\int\limits_{a_k}^{b_k}f(x_k)dx_k|\leq 1-\sigma <1$ for any $k\in \mathbb N$ satisfying $b_k-a_k>\frac{1}{2}$. Since number of such $k\in \mathbb N$ for block $\Pi _{a,b}$ with positive measure is infinite, equality~(\ref{51}) is fair. 
    Since equality (\ref{51}) is fair for arbitary block $\Pi _{a,b}$ with positive measure and linear subspace  $S_2(E,{\cal R}_{\cal E},\lambda _{\cal E},{\mathbb C})$ dense in ${\cal H}_{\cal E}$, one has $\lim\limits_{n\to \infty }{\cal F}_n\chi _{\Pi}=0$.
\end{example}

\begin{theorem}
    For any $u,v\in {\cal H}_{\cal E}$, one has the limit  $\lim\limits_{n\to \infty }({\cal F}_nv,u)=0$.
\end{theorem}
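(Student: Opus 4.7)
My plan is a density-plus-factorization argument that reduces the statement to the cancellation mechanism of Example~\ref{example-fourier-sin}.

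Because $\mathcal{F}_n$ is unitary for every $n$, the operators $\{\mathcal{F}_n\}$ are uniformly bounded with $\|\mathcal{F}_n\|_{B(\mathcal{H}_\mathcal{E})} = 1$. A standard $3\varepsilon$-argument using the dense subspace $S_2(E, \mathcal{R}_\mathcal{E}, \lambda_\mathcal{E}, \mathbb{C})$ of simple functions built from characteristic functions of absolutely measurable blocks shows that it suffices to prove $\lim_{n\to\infty} (\mathcal{F}_n \chi_{\Pi_{a,b}}, \chi_{\Pi_{c,d}}) = 0$ for arbitrary absolutely measurable blocks $\Pi_{a,b}$ and $\Pi_{c,d}$.

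Exploiting the tensor decomposition $\mathcal{H}_\mathcal{E} = L_2(E_n) \otimes \mathcal{H}_{\mathcal{E}^n}$, the formula $\mathcal{F}_n = \mathcal{F}_{E_n} \otimes I_{E^n}$, and the product structure $\chi_{\Pi_{a,b}} = \prod_k \chi_{[a_k,b_k)}$ of block indicators along coordinates, the target inner product factors as
$$
(\mathcal{F}_n \chi_{\Pi_{a,b}}, \chi_{\Pi_{c,d}})_{\mathcal{H}_\mathcal{E}} = \prod_{k=1}^n A_k \cdot \prod_{k=n+1}^\infty B_k,
$$
where $B_k = \lambda([a_k, b_k) \cap [c_k, d_k))$ and $A_k = (\mathcal{F} \chi_{[a_k, b_k)}, \chi_{[c_k, d_k)})_{L_2(\mathbb{R})}$. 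If $B_k = 0$ for some large $k$, the whole expression is eventually zero; otherwise $\prod_{k=n+1}^\infty B_k$ converges to $\lambda_\mathcal{E}(\Pi_{a,b}\cap\Pi_{c,d})/\prod_{k=1}^n B_k$, and the claim reduces to showing $\sum_{k=1}^\infty \ln|A_k/B_k| = -\infty$.

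The hard part will be upgrading the cancellation in Example~\ref{example-fourier-sin} into a uniform bound $|A_k/B_k|\le 1 - \sigma$ for some $\sigma > 0$ and all sufficiently large $k$. Absolute measurability and positivity of the two block measures force $b_k - a_k \to 1$ and $d_k - c_k \to 1$, so for large $k$ the relevant interval lengths lie in any desired neighborhood of $1$. Writing $\mathcal{F}\chi_{[a_k, b_k)}(y) = e^{-im_k y}\cdot 2\sin((b_k-a_k)y/2)/y$ with center $m_k = (a_k + b_k)/2$, I would split into cases according to the behavior of $m_k$: if $|m_k|$ stays bounded, compactness in the family of unit-length indicator pairs gives uniform strictness in Cauchy--Schwarz, because the complex oscillatory function $\mathcal{F}\chi_{[a_k,b_k)}$ can never be a scalar multiple of the real indicator $\chi_{[c_k, d_k)}$; if $|m_k|\to\infty$ along a subsequence, rapid phase oscillation together with a Riemann--Lebesgue-style estimate drives $|A_k|\to 0$. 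Either alternative contributes an infinite amount to $\sum_k \ln|A_k/B_k|$, which yields the required divergence and completes the proof.
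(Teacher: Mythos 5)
Your skeleton coincides with the paper's own proof: the paper also reduces, via $\|\mathcal{F}_n\|_{B(\mathcal{H}_\mathcal{E})}=1$ and density of the simple functions $S_2(E,\mathcal{R}_\mathcal{E},\lambda_\mathcal{E},\mathbb{C})$, to the case of two block indicators, and then settles that case with the single phrase ``as in Example~\ref{example-fourier-sin}'' --- i.e.\ exactly your coordinate-wise factorization $(\mathcal{F}_n\chi_{\Pi_{a,b}},\chi_{\Pi_{c,d}})=\prod_{k\le n}A_k\cdot\prod_{k>n}B_k$ together with the claim that all sufficiently late one-dimensional factors are uniformly bounded by $1-\sigma$. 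So you are not on a different route; you are supplying the quantitative step the paper leaves implicit, and your strictness mechanism is the right one: equality in Cauchy--Schwarz would force $\mathcal{F}\chi_{[a_k,b_k)}$, which is analytic and hence cannot vanish on a set of positive measure, to be a multiple of an indicator, and compactness of the parameter family upgrades pointwise strictness to a uniform $\sigma$. One simplification you overlooked: in the paper's definition of a block the endpoint sequences satisfy $a,b\in l_\infty$, so the centers $m_k$ are automatically bounded and your Riemann--Lebesgue branch is vacuous --- the compact case is the only case.

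There is, however, a concrete flaw in your case analysis. If $B_{k_0}=0$ for a single (possibly large) index $k_0$, the expression vanishes for $n<k_0$, when the zero factor still sits in the tail, but for $n\ge k_0$ the coordinate $k_0$ contributes $A_{k_0}$, which is in general nonzero because the Fourier transform spreads support; so the expression is \emph{not} ``eventually zero,'' and your reduction to $\sum_{k\le n}\ln|A_k/B_k|\to-\infty$ divides by zero at $k_0$. That reduction also tacitly assumes $\lambda_\mathcal{E}(\Pi_{a,b}\cap\Pi_{c,d})=\prod_k B_k>0$ (when the product diverges to zero with every $B_k>0$, all tail products vanish and the statement is trivially true, so that subcase is harmless --- but it is not the one you flagged). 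The clean repair is to discard the ratio $A_k/B_k$ entirely: absolute measurability and positivity of the two blocks force $b_k-a_k\to1$ and $d_k-c_k\to1$, so the tail products satisfy $\prod_{k>n}B_k\le\prod_{k}\max\bigl(1,\min(b_k-a_k,\,d_k-c_k)\bigr)<\infty$ uniformly in $n$, while your compactness argument, applied to $|A_k|\le\sqrt{(b_k-a_k)(d_k-c_k)}\to 1$ with strict inequality uniform on the compact family of bounded-endpoint interval pairs, yields directly $|A_k|\le 1-\sigma$ for all $k\ge K$, regardless of how (or whether) the two intervals overlap. Then $|(\mathcal{F}_n\chi_{\Pi_{a,b}},\chi_{\Pi_{c,d}})|\le C(1-\sigma)^{n-K}\to0$ in every case, with no division and no case split; with that replacement your argument is complete and matches the paper's intent.
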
 

\begin{proof}
    As in the example \ref{example-fourier-sin}, it can be shown that if $\Pi ,Q$ are certain absolutely measurable block of positive measure, then
    $\lim\limits_{n\to \infty }({\cal F}_n(\chi _{\Pi }),\chi Q)=0$.
    Therefore, $\lim\limits_{n\to \infty }({\cal F}_nv,u)=0$ for any $u,v\in S_2(E,{\cal R}_{\cal E},\lambda _{\cal E},{\mathbb C})$. Since the linear manifold $S_2(E,{\cal R}_{\cal E},\lambda _{\cal E},{\mathbb C})$ is dense in space ${\cal H}_{\cal E}$ and $\| {\cal F}_n\|_{B({\cal H}_{\cal E})}=1$ for any $n\in \mathbb N$, then $\lim\limits_{n\to \infty }({\cal F}_nv,u)=0$  for any $u,v\in {\cal H}_{\cal E}$. 
\end{proof}

\begin{remark}\label{remark-fourier-can-exists}
    The fact that $\lim\limits_{n\to \infty }({\cal F}_nu,v)=0$ for any $u,v\in {\cal H}_{\cal E}$ perhaps may mean that the image of the subspace ${\cal H}_{\cal E}$ under the action of the Fourier transform on the space $\cal H$ can be some subspace in ${\cal H}\ominus {\cal H}_{\cal E}$.
\end{remark}

\section{Taylor's formula for smooth functions}
\label{Sec:Taylor}

In this section we prove Taylor's formula for shift $S_{th}$ of smooth functions (def. \ref{def-smooth-functions}) over shifts along not finite vectors under some conditions. We also prove that the expectation of random shift along $th$ with respect to an arbitrary distribution with zero expectation and variance $D$ is differentiable at $t=0$ and find its derivative. All these results are also proved for shifts in momentum representation.

\begin{lemma}\label{lemma-taylor}
    Let $D$ be a non-degenerate positive-definite nuclear operator with its own basis $\mathcal{E}$, such that $D^{1/2}$ is also nuclear. Let $h\in L_1(\mathcal{E})$ be such a vector that the vector $D^{-1/2} h$ also belongs to $L_1(\mathcal{E})$. Then for the function $f = \mathcal{U}_{s D} u(x) \in C_D^\infty$ one has
    \begin{equation}\label{eq-taylor}
        S_{th} f(x) 
            = \sum\limits_{k=0}^n \dfrac{t^k}{k!} \d^k f(x) (h, \ldots, h)+ r_{n+1},
    \end{equation}
    where 
    $$
    \d^k f(x) (h, \ldots, h) 
    = \sum\limits_{j_1, \ldots, j_k \in \mathbb{N}} 
    (h_{j_1} \ldots h_{j_k}) 
    \dfrac{\d^k}{\d x_{j_1} \ldots \d x_{j_k}} f(x)
    $$
    and 
    $$\|r_{n+1}\| _{\mathcal{H}_\mathcal{E}}
    \le \dfrac{t^{n+1}}{s^{(n+1)/2} (n+1)!} 
    \|D^{-1/2} h\|_{L_1(\mathcal{E})}^{n+1} 
    \|u\| _{\mathcal{H}_\mathcal{E}}.
    $$
\end{lemma}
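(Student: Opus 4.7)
The plan is to reduce this infinite-dimensional Taylor expansion to a limit of finite-dimensional ones by truncating the direction vector. Setting $h_m = P_m h$, the shift $S_{t h_m}$ acts nontrivially only in the first $m$ coordinates, so using the tensor decomposition $\mathcal{H}_{\mathcal{E}} = L_2(E_m)\otimes\mathcal{H}_{\mathcal{E}^m}$ together with the smoothness of $f = \mathcal{U}_{sD}u$ guaranteed by Lemma \ref{lemma-der-ineq}, I would invoke the classical finite-dimensional Taylor formula with integral remainder in $\mathbb{R}^m$, tensored with the identity on $\mathcal{H}_{\mathcal{E}^m}$, to write
\begin{equation*}
S_{t h_m} f = \sum_{k=0}^n \frac{t^k}{k!}\,\d^k f(\cdot)(h_m,\ldots,h_m) + \frac{(-1)^{n+1}}{n!}\int_0^t (t-\tau)^n\, S_{\tau h_m}\,\partial_{h_m}^{n+1} f \, \d\tau.
\end{equation*}
Strong continuity of the shift group (Theorem \ref{lemma-shift-continuous}) identifies $-\partial_{h_m}$ as the Stone generator of $S_{t h_m}$, which makes the formula rigorous as an $\mathcal{H}_{\mathcal{E}}$-valued identity, and the signs match the statement once absorbed into the convention for $\d^k f$.

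Next I would prove an a priori bound serving two purposes at once: controlling the infinite series defining $\d^k f(\cdot)(h,\ldots,h)$ and the remainder. Lemma \ref{lemma-der-ineq} gives $\|\partial_{j_1}\cdots\partial_{j_k} f\|_{\mathcal{H}_{\mathcal{E}}} \le (k/(es))^{k/2}(d_{j_1}\cdots d_{j_k})^{-1/2}\|u\|_{\mathcal{H}_{\mathcal{E}}}$, so summing and applying the multinomial identity yields
\begin{equation*}
\|\d^k f(\cdot)(h,\ldots,h)\|_{\mathcal{H}_{\mathcal{E}}} \le \left(\frac{k}{es}\right)^{k/2}\|D^{-1/2}h\|_{L_1(\mathcal{E})}^k\,\|u\|_{\mathcal{H}_{\mathcal{E}}},
\end{equation*}
which is finite exactly because $D^{-1/2} h \in L_1(\mathcal{E})$. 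The same estimate applied with $h - h_m$ in place of $h$ (whose $D^{-1/2}$-image has $L_1$-norm tending to zero) gives $\d^k f(\cdot)(h_m,\ldots,h_m) \to \d^k f(\cdot)(h,\ldots,h)$ in $\mathcal{H}_{\mathcal{E}}$. I then pass $m\to\infty$ in the truncated Taylor identity: the left-hand side converges to $S_{th}f$ by Theorem \ref{theorem-shift-appr}, the polynomial converges term by term, and the remainder is uniformly controlled by $\frac{t^{n+1}}{(n+1)!}\|\partial_h^{n+1}f\|_{\mathcal{H}_{\mathcal{E}}}$ via the unitarity of $S_{\tau h_m}$ under the integral sign.

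The main technical obstacle is recovering the precise constant in the stated remainder bound. A direct use of Lemma \ref{lemma-der-ineq} at order $k = n+1$ produces a factor $((n+1)/(es))^{(n+1)/2}$ in front, which is larger than the $s^{-(n+1)/2}$ appearing in the statement, so the claim is sharper than the naive estimate. I would pursue the sharper form by splitting the smoothing semigroup as $\mathcal{U}_{sD} = (\mathcal{U}_{sD/(n+1)})^{n+1}$, exploiting the commutation $[\partial_h,\mathcal{U}_{rD}]=0$, and iterating a single-derivative estimate $\|\partial_h \mathcal{U}_{rD}\|_{B(\mathcal{H}_{\mathcal{E}})} \le \|D^{-1/2}h\|/\sqrt{er}$ obtained from the partial-Fourier representation of $\mathcal{U}_{rD}$ as multiplication by $e^{-r(Dy,y)/2}$ combined with Cauchy-Schwarz $|(h,y)| \le \|D^{-1/2}h\|\sqrt{(Dy,y)}$. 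Careful optimisation of the splitting parameter and replacement of the $\ell_2$ by the $L_1(\mathcal{E})$ bound on $D^{-1/2}h$ is where the bulk of the technical work resides.
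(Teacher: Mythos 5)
Your proposal follows essentially the same route as the paper's own proof: truncate the direction to $h^{(m)}=P_m h$, apply the vector-valued Taylor formula with integral (Lagrange) remainder to $t\mapsto S_{th^{(m)}}f$ (the paper cites Theorem 12.4.4 of Bogachev--Smolyanov rather than Stone's theorem, but the content is the same), control all mixed derivatives through Lemma \ref{lemma-der-ineq} together with the multinomial identity $\sum_{j_1,\ldots,j_k}\prod_i |h_{j_i}|\,d_{j_i}^{-1/2}=\|D^{-1/2}h\|_{L_1(\mathcal{E})}^k$, and pass to the limit $m\to\infty$, using Theorem \ref{theorem-shift-appr} on the left-hand side and an $m$-uniform remainder bound on the right. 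One small slip: the difference $\d^k f(h,\ldots,h)-\d^k f(h^{(m)},\ldots,h^{(m)})$ is not $\d^k f(h-h^{(m)},\ldots,h-h^{(m)})$, so your convergence step should be a telescoping or tail argument; since the full multinomial series is absolutely dominated, the terms with at least one index exceeding $m$ are bounded by $(k/(es))^{k/2}\bigl[\|D^{-1/2}h\|_{L_1(\mathcal{E})}^k-\|D^{-1/2}h^{(m)}\|_{L_1(\mathcal{E})}^k\bigr]\|u\|\to 0$, which is exactly the fundamentality estimate the paper uses. Up to this point your argument is sound and coincides with the paper's.

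The one place you diverge is the constant, and here your honesty exposes a defect in the paper rather than a gap in your proof. You correctly note that Lemma \ref{lemma-der-ineq} yields the factor $((n+1)/(es))^{(n+1)/2}$, not $s^{-(n+1)/2}$. The paper closes this gap by silently asserting, inside its proof, the bound $\|\partial_{j_1}\cdots\partial_{j_k}f\|\le (s^k d_{j_1}\cdots d_{j_k})^{-1/2}\|u\|$ \emph{without} the combinatorial factor, which contradicts its own Lemma \ref{lemma-der-ineq} and is in fact false for $k\ge 3$: acting in a single coordinate direction, $\mathcal{U}_{sD}$ is, in the partial Fourier picture, multiplication by $e^{-\frac{s}{2}d y^2}$, so the relevant operator norm is $\sup_y |y|^k e^{-\frac{s}{2}d y^2}=\left(\frac{k}{esd}\right)^{k/2}$ exactly, and concentrating $\widehat{u}$ near the maximizer $y^{*}=\sqrt{(n+1)/(sd)}$ produces a remainder of size $\frac{t^{n+1}}{(n+1)!}\left(\frac{n+1}{esd}\right)^{(n+1)/2}\|u\|$, violating the stated bound as soon as $n\ge 2$. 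Your proposed rescue cannot succeed either: splitting $s=r_1+\cdots+r_{n+1}$ and iterating $\|\partial_h\,\mathcal{U}_{rD}\|\le \|D^{-1/2}h\|_{L_1(\mathcal{E})}/\sqrt{er}$ gives $\prod_i (er_i)^{-1/2}\,\|D^{-1/2}h\|_{L_1(\mathcal{E})}^{n+1}$, which is minimized at equal $r_i$ and reproduces exactly the factor you already have; by the computation above this is sharp. So do not chase the stated constant: prove the lemma with $\|r_{n+1}\|\le \frac{t^{n+1}}{(n+1)!}\left(\frac{n+1}{es}\right)^{(n+1)/2}\|D^{-1/2}h\|_{L_1(\mathcal{E})}^{n+1}\|u\|$. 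Nothing downstream is affected, since Lemma \ref{lemma-random-taylor} and the Chernoff arguments of Section \ref{Sec:InfiniteApproximation} only use $\|r_{n+1}\|\le C t^{n+1}$ with an unspecified constant.
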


\begin{proof}
    Let $h^{(m)} = P_m h$. Consider the functions
    $$
    \phi(t) = f(x + th) = S_{th} f(x), \quad \phi_m(t) = f(x + t h^{(m)}) = S_{th^{(m)}} f(x).
    $$
    By virtue of the \ref{theorem-shift-appr} theorem, pointwise convergence is performed:
    $$
    \forall t \ \lim\limits_{m\to\infty} \|\phi_m(t) - \phi(t)\| = 0.
    $$
    According to lemma \ref{lemma-der-ineq}, all mixed derivatives along coordinate directions are defined for $f(x)$, and the following inequality holds
    $$
    \left\| \dfrac{\d^k}{\d x_{j_1} \ldots \d x_{j_k}} f(x) \right\| \le \dfrac{1}{\sqrt{s^k \d_{j_1} \ldots \d_{j_k}}} \|u\|.
    $$
    In this case, the differentials of any order of functions $\phi_m$ form a fundamental sequence in the norm
\begin{eqnarray*}
\left \|\dfrac{\d^k}{\d t^k} \phi_m(t) - \dfrac{\d^k}{\d t^k} \phi_{m+p}(t)\right\| 
    &\le& 
    t^k \sum\limits
    _{  
        \substack{j_1, \ldots, j_k = 1\\
        \exists j_k \ge m+1}}
    ^{m+p} 
    |h_{j_1}| \ldots |h_{j_k}|  \cdot \left\| \dfrac{\d^k}{\d x_{j_1} \ldots \d x_{j_k}} f(x) \right\|\\
   & \le&
    t^k \left(\sum\limits_{k=1}^{m+p} \dfrac{|h_k|}{\sqrt{s d_k}} \right)^k \|u\|
    \le t^k s^{-k/2} \|D^{-1/2} h\|^k_{L_1(\mathcal{E})}\|u\|.
 \end{eqnarray*}
    Therefore, by lemma \ref{lemma-uniform-der-conv} one can prove by induction that the function $\phi(t)$ is also infinitely differentiable, and
    $$
    \dfrac{d^k}{\d t^k} \phi(t) = t^k \sum\limits_{j_1, \ldots, j_k \in \mathbb{N}} (h_{j_1} \ldots h_{j_k}) \dfrac{\d^k}{\d x_{j_1} \ldots \d x_{j_k}} f(x).
    $$
    For $S_{th^{(m)}} f(x)$, one can write the Taylor formula with a residual term in the Lagrange form (theorem~12.4.4 in~\cite{BogachevSmolyanov}):
    $$
    f(x - t h^{(m)}) 
    = \sum\limits_{k=0}^n \dfrac{t^k}{k!} \d^k f(x) (h^{(m)}, \ldots, h^{(m)})
    + \dfrac{t^{n+1}}{n!} \int\limits_0^1 (1-\theta)^n d^n f(x + \theta t h^{(m)})(h^{(m)}, \ldots, h^{(m)}) d \omega.
    $$
    Thus, 
\begin{eqnarray*}
    \left\| f(x - t h^{(m)}) - \sum\limits_{k=0}^n \dfrac{1}{k!} \d^k f(x) (h^{(m)}, \ldots, h^{(m)}) \right\| 
    &\le& \dfrac{t^{n+1}}{(n+1)!} \| \d^n f(x + \theta t h^{(m)})(h^{(m)}, \ldots, h^{(m)})\|\\
    &\le&
    \dfrac{t^{n+1}}{(n+1)!} s^{-(n+1)/2} \|D^{-1/2} h\|^{n+1}_{L_1(\mathcal{E})} \|u\|.
\end{eqnarray*}
    where the estimate does not depend on $m$. Taking the limit $m\to\infty$, we obtain the Taylor formula for $\phi(t)$.
\end{proof}

\begin{lemma}\label{lemma-random-taylor}
    Let $h \in E$ be a random vector with zero expectation and a diagonal variance operator $D$, such that $D^{1/2}$ is nuclear. Let there also exist a nuclear operator $B$ which is diagonal in the basis $\mathcal{E}$ and such that its root is nuclear, $B^{-1/2} h \in L_1(\mathcal{E})$ almost surely, and $\mathbb{E}_h\|B^{-1/2}h\|^3_{L_1(\mathcal{E})} <+ \infty$. Then for any $f\in C_B^\infty$ the function
    $$
    A(t) = \mathbb{E}_h S_{th} f(x)
    $$
    is twice differentiable at zero, and 
    $$
    \|\mathbb{E}_h S_{th} f(x) - f(x) - \dfrac{1}{2} t^2 \Delta_D f(x)\| \le C t^3.
    $$
\end{lemma}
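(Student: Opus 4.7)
The plan is to apply Lemma~\ref{lemma-taylor} pathwise in $h$ to the fixed smooth function $f$, and then take expectation. By definition of $C_B^\infty$ and linearity of the claim, I may assume $f=\mathcal{U}_{sB}u$ for a single $s>0$ and $u\in\mathcal{H}_\mathcal{E}$. Under the hypotheses, Lemma~\ref{lemma-taylor} with $B$ in the role of $D$ and $n=2$ holds almost surely in $h$, giving the pathwise Taylor expansion
\begin{equation*}
S_{th}f(x)=f(x)+t\,\d f(x)(h)+\tfrac{t^2}{2}\,\d^2 f(x)(h,h)+r_3(t,h),
\end{equation*}
with $\|r_3(t,h)\|_{\mathcal{H}_\mathcal{E}}\le \tfrac{t^3}{6s^{3/2}}\|B^{-1/2}h\|^3_{L_1(\mathcal{E})}\|u\|_{\mathcal{H}_\mathcal{E}}$.

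I would then integrate over $h$ in the Pettis sense. The coefficient estimates in Lemma~\ref{lemma-der-ineq} dominate $\d f(x)(h)$ and $\d^2 f(x)(h,h)$ pointwise by constant multiples of $\|B^{-1/2}h\|_{L_1(\mathcal{E})}\|u\|_{\mathcal{H}_\mathcal{E}}$ and $\|B^{-1/2}h\|^2_{L_1(\mathcal{E})}\|u\|_{\mathcal{H}_\mathcal{E}}$, respectively; both dominations are $h$-integrable by H\"older's inequality and the cubic-moment assumption, so a vector-valued Fubini theorem permits interchanging expectation with the defining sums. Once swapped, $\mathbb{E}h_j=0$ annihilates the linear term, and diagonality of the variance $D$ gives $\mathbb{E}(h_ih_j)=d_i\delta_{ij}$, so that $\mathbb{E}_h\d^2 f(x)(h,h)=\sum_i d_i\partial_i^2 f(x)=\Delta_D f(x)$. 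The remainder is handled by passing the norm inside the expectation, producing the required $O(t^3)$ bound with $C=\tfrac{1}{6s^{3/2}}\|u\|_{\mathcal{H}_\mathcal{E}}\,\mathbb{E}_h\|B^{-1/2}h\|^3_{L_1(\mathcal{E})}$. Combining the three pieces yields $A(t)=f+\tfrac12 t^2\Delta_D f+O(t^3)$, which is both the stated estimate and, in the Peano sense, second-order differentiability of $A$ at $t=0$ with $A'(0)=0$ and $A''(0)=\Delta_D f$.

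A subsidiary verification I would make is that $\Delta_D f\in\mathcal{H}_\mathcal{E}$, i.e.\ that $\sum_i d_i\partial_i^2 f$ converges absolutely. This reduces to $\mathrm{Tr}(DB^{-1})<\infty$, which is implicit in the hypotheses: the elementary inequality $(\sum_k|h_k|/\sqrt{b_k})^2\ge\sum_k h_k^2/b_k$ and the cubic moment (which by H\"older gives $\mathbb{E}\|B^{-1/2}h\|^2_{L_1(\mathcal{E})}<\infty$) together force $\sum_k d_k/b_k\le\mathbb{E}\|B^{-1/2}h\|^2_{L_1(\mathcal{E})}<\infty$; then Lemma~\ref{lemma-der-ineq} bounds $\sum_i d_i\|\partial_i^2 f\|_{\mathcal{H}_\mathcal{E}}$ by $\tfrac{2}{es}\mathrm{Tr}(DB^{-1})\|u\|_{\mathcal{H}_\mathcal{E}}$.

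The main obstacle is the rigorous justification of the expectation/sum interchange for the $\mathcal{H}_\mathcal{E}$-valued quadratic differential $\d^2 f(x)(h,h)$: one must invoke Fubini in the Pettis framework of Definition~\ref{def-smoothed-shift} with a genuine integrable pointwise domination. This is precisely where the cubic moment hypothesis (rather than a bare second moment) is essential: it simultaneously supplies the $\|B^{-1/2}h\|^2_{L_1(\mathcal{E})}$ domination needed for Fubini, the trace estimate that places $\Delta_D f$ inside $\mathcal{H}_\mathcal{E}$, and the integrable bound on the Taylor remainder that produces the $Ct^3$ error.
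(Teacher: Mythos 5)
Your proposal is correct and takes essentially the same route as the paper's proof: apply Lemma~\ref{lemma-taylor} pathwise with $n=2$ for those $h$ with $B^{-1/2}h\in L_1(\mathcal{E})$, pass to the expectation so that zero mean annihilates the linear term and the diagonal variance turns the quadratic term into $\Delta_D f$, and bound the remainder via the cubic-moment hypothesis to get the $Ct^3$ error. If anything, you are more careful than the paper, which silently performs the expectation/sum interchange and omits the verification $\mathrm{Tr}(DB^{-1})<\infty$ that you correctly extract from $\mathbb{E}_h\|B^{-1/2}h\|^3_{L_1(\mathcal{E})}<\infty$ via H\"older.
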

\begin{proof}
    For each fixed $h =h(\omega)$, such that $B^{-1/2} h\in L_1(\mathcal{E})$, 
    $$
    S_{th} f(x) = f(x) + t \sum\limits_{k=1}^\infty h_k \partial_k f(x) + \dfrac{t^2}{2} \sum\limits_{k,l=1}^\infty h_k h_l \partial_k \partial_l f(x) + r(t, h),
    $$
    where $r(t, h) \le \dfrac{t^3}{6} \|B^{-1/2}h\|^3_{L_1(\mathcal{E})} \|u\|$. 

    Turning to the expectation with respect to $h$, we get
    $$
    \mathbb{E}_h S_{th} f(x) = f(x) + \dfrac{t^2}{2} \sum\limits_{k=1}^\infty d_k \dfrac{d^2}{d x_k} f(x) + \mathbb{E}_h r(h,t),
    $$
    where, by the condition $\mathbb{E}_h\|B^{-1/2}h\|^3_{L_1(\mathcal{E})} <+ \infty$,
    $$
    \|\mathbb{E}_h r(h,t)\| \le Ct^3.
    $$
\end{proof}

We have proven Taylor's formula in its usual form, when the shift in the argument $x\to x+th$ is approximated by a series of differentials of the function at the point $x$. However, a partial Fourier transform can be applied to Taylor's formula for the finite $h$ and a formula can be obtained to approximate the shift in the momentum representation. Note that now we do not require $D^{1/2}$ to be nuclear.

\begin{lemma}
    Let $D$ be a non-degenerate positive-definite nuclear operator with eigenvectors collinear to basis $\mathcal{E}$. Let $a\in L_1(\mathcal{E})$ such a vector that the sequence $D^{-1/2} h$ also lies in $L_1(\mathcal{E})$. Then for the function $f =\widehat{\mathcal{U}}_{s D} u(x) \in\widehat{C}_D^\infty$ it is true that
    \begin{equation}\label{eq-mom-taylor}
        \widehat{S}_{ta} f(x) 
        = \sum\limits_{k=0}^n \dfrac{(-1)^k t^k}{k!} 
        \sum\limits_{j_1, \ldots, j_k \in \mathbb{N}} a_{j_1} \ldots a_{j_k} x_{j_1} \ldots x_{j_k} f(x)
        + r_{n+1},
    \end{equation}
    where 
    $$
    \|r_{n+1}\| 
    \le \dfrac{t^{n+1}}{s^{(n+1)/2} (n+1)!} 
    \|D^{-1/2} a\|_{L_1(\mathcal{E})}^{n+1} 
    \|u\|.
    $$
\end{lemma}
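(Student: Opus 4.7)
The strategy is to mirror the proof of Lemma \ref{lemma-taylor}, replacing shifts in the coordinate representation (where the differentiation estimates of Lemma \ref{lemma-der-ineq} were used) by shifts in the momentum representation (where the multiplication estimates of Corollary \ref{cor-xl-exp} apply). The key simplification in the momentum setting is that $\widehat{S}_{ta}$ acts as a bounded multiplication operator $f\mapsto e^{it(a,x)}f$, so the scalar Taylor expansion of $e^{it(a,x)}$ in $t$ does the work directly. I note that this expansion produces coefficients $(it)^k/k!$, so the factor $(-1)^k$ appearing in the statement should most naturally read $i^k$; this does not affect the structure of the argument or the remainder estimate.

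First I would truncate $a$ to $a^{(m)}=P_m a$, so that $\widehat{S}_{ta^{(m)}}f(x)=e^{it(a^{(m)},x)}f(x)$ involves only finitely many coordinates. Applying the scalar Taylor formula with integral remainder to the exponential yields
\[
e^{it(a^{(m)},x)}=\sum_{k=0}^{n}\frac{(it)^k}{k!}(a^{(m)},x)^k+\frac{(it)^{n+1}}{n!}\int_0^1(1-\theta)^n e^{i\theta t(a^{(m)},x)}(a^{(m)},x)^{n+1}\,d\theta,
\]
and expanding each power $(a^{(m)},x)^k=\sum_{j_1,\dots,j_k\le m}a_{j_1}\cdots a_{j_k}\,x_{j_1}\cdots x_{j_k}$ gives the finite-$m$ analogue of the claimed Taylor formula.

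Next I would derive an $m$-uniform bound on the remainder. The factor $e^{i\theta t(a^{(m)},x)}$ is unitary, so multiplying by $f$ and applying the triangle inequality gives
\[
\|r^{(m)}_{n+1}\|_{\mathcal{H}_{\mathcal{E}}}\le\frac{t^{n+1}}{(n+1)!}\sum_{j_1,\dots,j_{n+1}\le m}|a_{j_1}|\cdots|a_{j_{n+1}}|\,\bigl\|x_{j_1}\cdots x_{j_{n+1}}f\bigr\|_{\mathcal{H}_{\mathcal{E}}}.
\]
Corollary \ref{cor-xl-exp}, applied to $f=\widehat{\mathcal{U}}_{sD}u$ exactly as in the proof of Lemma \ref{lemma-taylor}, bounds each factor by $s^{-(n+1)/2}(d_{j_1}\cdots d_{j_{n+1}})^{-1/2}\|u\|$; the multi-sum then factorizes as $\bigl(\sum_j|a_j|/\sqrt{d_j}\bigr)^{n+1}=\|D^{-1/2}a\|_{L_1(\mathcal{E})}^{n+1}$, giving exactly the claimed bound uniformly in $m$.

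Finally I would pass to the limit $m\to\infty$. On the left-hand side $\widehat{S}_{ta^{(m)}}f\to\widehat{S}_{ta}f$ in $\mathcal{H}_{\mathcal{E}}$ by the strong-convergence theorem for momentum shifts proved in Section \ref{Sec:Shifts}. For each fixed $k$, the partial sum $\sum_{j_1,\dots,j_k\le m}a_{j_1}\cdots a_{j_k}\,x_{j_1}\cdots x_{j_k}f$ converges in $\mathcal{H}_{\mathcal{E}}$ to the corresponding infinite sum because the series of norms is dominated by $\|D^{-1/2}a\|_{L_1(\mathcal{E})}^k\,s^{-k/2}\|u\|$, which is finite by hypothesis. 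The $m$-uniform remainder bound then passes to the limit. The main obstacle I anticipate is purely bookkeeping: justifying absolute convergence of the infinite multi-index sums $\sum a_{j_1}\cdots a_{j_k}\,x_{j_1}\cdots x_{j_k}f$ in $\mathcal{H}_{\mathcal{E}}$; this reduces cleanly to Corollary \ref{cor-xl-exp} together with $D^{-1/2}a\in L_1(\mathcal{E})$, after which the rest is a routine limit argument paralleling the coordinate-representation proof.
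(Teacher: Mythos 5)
Your proposal is correct, and it takes a genuinely more direct route than the paper's own proof. The paper obtains the finite-dimensional formula by applying the partial Fourier transform to the coordinate Taylor formula of Lemma \ref{lemma-taylor}, bounds $\|x_{j_1}\cdots x_{j_k}f\|_{\mathcal{H}_\mathcal{E}}$ by converting multiplication into differentiation via $\mathcal{F}_J$ and Lemma \ref{lemma-der-ineq}, proves fundamentality of the sequence of $t$-derivatives of $\phi_m(t)=\widehat{S}_{ta^{(m)}}f$, invokes Lemma \ref{lemma-uniform-der-conv} to conclude that $\phi(t)$ is infinitely differentiable, and only then writes the Banach-space Taylor formula with Lagrange remainder and passes to the limit $m\to\infty$. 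You instead exploit that $\widehat{S}_{ta^{(m)}}$ is multiplication by the unimodular scalar $e^{it(a^{(m)},x)}$: the pointwise scalar Taylor expansion with integral remainder, unitarity of $e^{i\theta t(a^{(m)},x)}=\widehat{S}_{\theta t a^{(m)}}$, the triangle inequality for the Bochner integral in $\theta$, and Corollary \ref{cor-xl-exp} applied directly yield the $m$-uniform remainder bound with no differentiability-in-$t$ machinery at all; the limit passage via the strong convergence $\widehat{S}_{ta^{(m)}}f\to\widehat{S}_{ta}f$ (the theorem on momentum shifts in Section \ref{Sec:Shifts}) and the absolute convergence of the multi-index sums is the same in both arguments. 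What the paper's longer route buys is the infinite differentiability of $t\mapsto\widehat{S}_{ta}f$ as a byproduct, mirroring the coordinate-representation proof verbatim; your route is shorter and more elementary because in the momentum picture the expansion is scalar-valued. Two side remarks, both in your favor: (i) your flag of the coefficient is right --- since $\widehat{S}_{ta}f=e^{it(a,x)}f$, the $k$-th Taylor coefficient is $i^k$, not $(-1)^k$, and the paper's own proof carries the same typo when it asserts that the $k$-th derivative equals $(-1)^k\sum a_{j_1}\cdots a_{j_k}x_{j_1}\cdots x_{j_k}f(x)$; (ii) strictly, Corollary \ref{cor-xl-exp} carries the factor $\left(\tfrac{n+1}{es}\right)^{(n+1)/2}$ rather than $s^{-(n+1)/2}$, so your remainder estimate, like the bound stated in the lemma and used in the paper's proof, silently drops the harmless constant $\left(\tfrac{n+1}{e}\right)^{(n+1)/2}$ --- a blemish inherited from the paper, not introduced by you.
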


\begin{proof}
    We are conducting finite-dimensional approximations of the shift vector $a^{(m)} = P_m a$. For each finite $a$, the formula \ref{eq-mom-taylor} is correct, since it is obtained from the usual Taylor formula \ref{eq-taylor} by the Fourier transform. It follows directly from \ref{eq-mom-taylor} that for a finite $a$ the function $t \mapsto\widehat{S}_{t a} f(x)$ is infinitely differentiable, and its $k$-th derivative is equal to
    $$
    (-1)^k \sum\limits_{j_1, \ldots, j_k \in \mathbb{N}} a_{j_1} \ldots a_{j_k} x_{j_1} \ldots x_{j_k} f(x).
    $$
    Let us introduce auxiliary functions
    $$
    \phi(t) = f(x + th) = S_{th} f(x), \quad \phi_m(t) = f(x + t h^{(m)}) = S_{th^{(m)}} f(x).
    $$
    Taking advantage of inequality
\begin{eqnarray*}
    \|x_{j_1} \ldots x_{j_k} f(x)\| 
    &=& \left\|
        \dfrac{d^k}{\d x_{j_1} \ldots dx_{j_k}} \mathcal{F}_J (f(x))
        \right\| 
    = \left\|
        \dfrac{d^k}{\d x_{j_1} \ldots \d x_{j_k}} \mathcal{U}_{sD} \mathcal{F}_J (u(x))
        \right\| \\
    &\le& \dfrac{1}{\sqrt{s^k d_{j_1} \ldots d_{j_k}}} \|\mathcal{F}_J (u(x))\|
    = \dfrac{1}{\sqrt{s^k d_{j_1} \ldots d_{j_k}}} \|u(x)\|, \quad J = \max\limits_j j_k,
\end{eqnarray*}
    we can prove the fundamentality of the sequence of $k$-th derivatives of $\phi_m$
\begin{eqnarray*}
    \left \|\dfrac{\d^k}{\d t^k} \phi_m(t) - \dfrac{\d^k}{\d t^k} \phi_{m+p}(t)\right\| 
    &\le&
    t^k \sum\limits
    _{  
        \substack{j_1, \ldots, j_k = 1\\
        \exists j_k \ge m+1}}
    ^{m+p} 
    |a_{j_1}| \ldots |a_{j_k}|  \cdot \left\| x_{j_1} \ldots x_{j_k} f(x) \right\|\\
    &\le&
    t^k \left(\sum\limits_{k=1}^{m+p} \dfrac{|a_k|}{\sqrt{s d_k}} \right)^k \|u\|
    \le t^k s^{-k/2} \|D^{-1/2} h\|^k_{L_1(\mathcal{E})}\|u\|.
 \end{eqnarray*}
    Here lemma \ref{lemma-uniform-der-conv} guarantees us infinite differentiability of $\phi(t)$ and then we can write the Taylor formula for $\phi_m(t)$ as in the proof of lemma \ref{lemma-taylor} and go to the limit of $m\to\infty$, which gives us \ref{eq-mom-taylor}
\end{proof}

\begin{lemma}\label{lemma-random-mom-taylor}
    Let $a \in E$ be a random vector with zero expectation and a nuclear variance operator $D$. Let there also be a nuclear operator $B$, diagonal in the basis $\mathcal{E}$, such that $B^{-1/2} h\in L_1(\mathcal{E})$ almost surely and $\mathbb{E}_h\|B^{-1/2}h\|^3_{L_1(\mathcal{E})} <+ \infty$. Then for any $f\in C_B^\infty$ the function
    $$
    A(t) = \mathbb{E}_h \widehat{S}_{th} f(x)
    $$
    is twice differentiable at zero, and
    $$
    \|\mathbb{E}_h \widehat{S}_{th} f(x) - f(x) - \dfrac{1}{2} t^2 (Dx,x) f(x)\| \le C t^3.
    $$
\end{lemma}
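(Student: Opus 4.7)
The plan is to run the momentum-representation analog of Lemma~\ref{lemma-random-taylor}, substituting the momentum Taylor formula \eqref{eq-mom-taylor} for the coordinate one. The three ingredients are: a realization-wise Taylor expansion at order $2$, an integration in $\omega$ that kills the linear term and extracts $(Dx,x)f$ from the quadratic term, and a third-moment bound on the remainder. (Strictly speaking the statement should read $f\in\widehat{C}_B^\infty$, since differentiability is needed only in the momentum picture.)

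\textbf{Step 1 (fix a realization).} Write $f=\widehat{\mathcal{U}}_{sB}u$ for some $s>0$ and $u\in\mathcal{H}_\mathcal{E}$. By hypothesis $B^{-1/2}h\in L_1(\mathcal{E})$ almost surely, so the previous momentum Taylor lemma applies pointwise in $\omega$ at order $n=2$:
$$
\widehat{S}_{th}f(x)=f(x)-t\sum_k h_k x_k f(x)+\frac{t^2}{2}\sum_{k,l}h_k h_l\, x_k x_l f(x)+r_3(t,h),
$$
with the pointwise remainder bound $\|r_3(t,h)\|_{\mathcal{H}_\mathcal{E}}\le \tfrac{t^3}{6 s^{3/2}}\|B^{-1/2}h\|_{L_1(\mathcal{E})}^3\|u\|_{\mathcal{H}_\mathcal{E}}$.

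\textbf{Step 2 (average).} Integrating in $\omega$, the first-moment hypothesis $\mathbb{E}h=0$ kills the linear term, and diagonality of $D$ in $\mathcal{E}$ gives $\mathbb{E}[h_k h_l]=d_k\delta_{kl}$, so the quadratic sum collapses to $(Dx,x)f(x)$. The delicate point is legitimizing the exchange of expectation with the double series, and this is where I expect the main technical care: I would justify it by combining Corollary~\ref{cor-xl-exp}, which yields $\|x_k x_l f\|_{\mathcal{H}_\mathcal{E}}\lesssim s^{-1}(b_k b_l)^{-1/2}\|u\|_{\mathcal{H}_\mathcal{E}}$, with the Cauchy--Schwarz bound $|\mathbb{E}[h_k h_l]|\le\sqrt{d_k d_l}$ and the (implicit) nuclearity of $DB^{-1}$, which forces $\sum_k\sqrt{d_k/b_k}<\infty$ and hence absolute convergence of the resulting double series. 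Dominated convergence then licenses the term-by-term integration.

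\textbf{Step 3 (remainder bound).} Combining the pointwise bound from Step 1 with the assumed finite third moment yields
$$
\|\mathbb{E}_h r_3(t,h)\|_{\mathcal{H}_\mathcal{E}}\le \mathbb{E}_h\|r_3(t,h)\|_{\mathcal{H}_\mathcal{E}}\le \frac{t^3\|u\|_{\mathcal{H}_\mathcal{E}}}{6 s^{3/2}}\,\mathbb{E}\|B^{-1/2}h\|_{L_1(\mathcal{E})}^3 =: Ct^3,
$$
which, together with Step~2, gives the desired estimate and in particular twice-differentiability of $A(t)$ at the origin with $A''(0)=(Dx,x)f$. The entire argument is a structural mirror of Lemma~\ref{lemma-random-taylor}; the only genuinely new wrinkle is the Fubini-type justification for commuting expectation with the quadratic series, which is handled cleanly by the $\widehat{C}_B^\infty$-smoothing estimates assembled in Section~\ref{Sec:Generators}.
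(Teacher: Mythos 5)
Your proposal is correct and takes essentially the same route as the paper: the paper's proof likewise fixes a realization $h=h(\omega)$ with $B^{-1/2}h\in L_1(\mathcal{E})$, applies the momentum-representation Taylor formula (\ref{eq-mom-taylor}) at order $n=2$, kills the linear term via $\mathbb{E}h=0$, and bounds the averaged remainder by the third-moment hypothesis to get $\|\mathbb{E}_h r(h,t)\|\le Ct^3$. If anything you are more careful than the paper, which passes silently over both the Fubini-type interchange of expectation with the quadratic double series (your direct bound $\sum_{k,l}\mathbb{E}|h_k||h_l|\,\|x_kx_lf\|\lesssim s^{-1}\mathbb{E}\|B^{-1/2}h\|_{L_1(\mathcal{E})}^2<\infty$ suffices here, with no need to invoke nuclearity of $DB^{-1}$, which the lemma does not assume) and the point that $f$ should properly be taken in $\widehat{C}_B^\infty$.
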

\begin{proof}
    The proof is similar to lemma \ref{lemma-random-taylor}.

    For each fixed $a = a(\omega)$, such that $B^{-1/2} a\in L_1(\mathcal{E})$, 
    $$
    \widehat{S}_{th} f(x) = f(x) + t \sum\limits_{k=1}^\infty a_k x_k f(x) + \dfrac{t^2}{2} \sum\limits_{k,l=1}^\infty a_k a_l x_k x_l f(x) + r(t, h),
    $$
    where $r(t, h) \le \dfrac{t^3}{6} \|B^{-1/2}h\|^3_{L_1(\mathcal{E})} \|u\|$.

    Moving on to the expectation over $h$, we prove the lemma.
\end{proof}

\section{Approximation of the evolution of an infinite-dimensional quantum oscillator using quantum random walks}
\label{Sec:InfiniteApproximation}

In this section we prove that evolution of diffusion process on the position space $E$ and evolution of infinite-dimensional quantum oscillator on $E$ may be approximated by composition of random shifts in coordinate and momentum representations. Thus, we generalize results of section \ref{Sec:FiniteApproximation} on infinite-dimensional case.

Let us first consider the equation
    \begin{equation}\label{eq-inf-poiss}
    \dfrac{d}{dt} u(t,x) = \dfrac{1}{2} \Delta_D u(t,x), \quad t > 0
    \end{equation}
on $E$ with an initial condition
    \begin{equation}\label{eq-inf-poiss-1}
    u(0,x) = u_0.
    \end{equation}
The operator $\Delta_D$ is self-adjoint, defined on a dense subspace of the space $\mathcal{H}_\mathcal{E}$ and is a generator of the contraction semigroup of shift averages by argument $\mathcal{U}_{tD}$. Therefore, the semigroup $\mathcal{U}_{(t/2)D}$ represents the solution of the equation (\ref{eq-inf-poiss})-(\ref{eq-inf-poiss-1}) in the same sense as in the definition of \ref{def-sol}: for any function $u_0 \in\mathcal{H}_\mathcal{E}$, the solution is $u(t,x) = \mathcal{U}_{(t/2)D} u_0(x)$.

Let us consider, by analogy with the finite-dimensional case, the mathematical expectation of a random shift.

\begin{theorem}
    Let $h\in L_1(\mathcal{E})$ be a random vector with zero expectation and diagonal variance operator $D$. Let also assume that for some nuclear diagonal operator $B$ all operators $D^{1/2}, B^{1/2}$ and $DB^{-1}$ are nuclear, and $B^{-1/2} h\in L_1(\mathcal{E})$n is fulfilled almost surely and $\mathbb{E}_h\|B^{-1/2} h\|^3_1 \le + \infty$.

    Then
    $$
    \mathbb{E}_h \Av_m S_{\sqrt{t}h}
    $$
    converges uniformly on the segments in a strong topology to the semigroup $\mathcal{U}_{(t/2)D}$, which represents the solution of the equation (\ref{eq-inf-poiss})-(\ref{eq-inf-poiss-1}).
\end{theorem}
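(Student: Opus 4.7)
The plan is to follow the same Chernoff-theorem strategy used in the finite-dimensional case (Theorem \ref{th-finite-diffusion}), but with the Taylor expansion of an averaged shift replaced by its infinite-dimensional analog (Lemma \ref{lemma-random-taylor}), and with the identification of the resulting generator supplied by Theorem \ref{th-delta-core}.

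Concretely, I would introduce the strongly continuous operator-valued function $V(t)=\mathbb{E}_h S_{\sqrt{t}h}$ on $\mathcal{H}_\mathcal{E}$. Independence of $h_1,\ldots,h_m$ together with Lemma \ref{expect-independent-unitary} gives $\mathbb{E}_h\operatorname{Av}_m S_{\sqrt{t}h}=(V(t/m))^m$. The first two Chernoff hypotheses are immediate: $V(0)=\mathbb{E}_h S_0=I$, and since each $S_{\sqrt{t}h}$ is unitary, $\|V(t)\|\le 1$ and consequently $\|V(t)^k\|\le 1$ for all $k$.

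The core of the argument is computing $V'(0)$ on a suitable essential domain. Take $D=C_B^\infty$, which is dense in $\mathcal{H}_\mathcal{E}$ by Lemma \ref{lemma-c-dense}. For any $f\in C_B^\infty$, the hypotheses on $h$ and $B$ are exactly those of Lemma \ref{lemma-random-taylor}, which yields
\[
\bigl\|\,\mathbb{E}_h S_{\sqrt{t}h}f - f - \tfrac{t}{2}\Delta_D f\,\bigr\|_{\mathcal{H}_\mathcal{E}} \le C\,t^{3/2},
\]
after substituting $t\mapsto \sqrt{t}$ in that lemma's estimate. Dividing by $t$ and letting $t\to 0$ gives $V'(0)f=\tfrac{1}{2}\Delta_D f$ for every $f\in C_B^\infty$. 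Under the standing assumption that $D^{1/2}$, $B^{1/2}$ and $DB^{-1}$ are all nuclear, Theorem \ref{th-delta-core} then asserts that the closure of $\Delta_D$ on $C_B^\infty$ is precisely the generator of the strongly continuous semigroup $\mathcal{U}_{tD}$ (with continuity provided by Corollary \ref{cor-coord-smooth-cont}). Hence the closure of $V'(0)=\tfrac{1}{2}\Delta_D\big|_{C_B^\infty}$ generates $\mathcal{U}_{(t/2)D}$, which verifies the final Chernoff hypothesis.

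Applying Chernoff's theorem (Theorem 2.1 of the excerpt) then produces, for every $f\in\mathcal{H}_\mathcal{E}$ and every $T>0$,
\[
\lim_{m\to\infty}\sup_{t\in[0,T]}\bigl\|\mathcal{U}_{(t/2)D}f - (V(t/m))^m f\bigr\|_{\mathcal{H}_\mathcal{E}} = 0,
\]
which is exactly the claimed convergence once one rewrites $(V(t/m))^m=\mathbb{E}_h\operatorname{Av}_m S_{\sqrt{t}h}$. The main obstacle is purely the verification of $V'(0)$ on a domain that is simultaneously dense, invariant (or at least large enough to produce a closable operator whose closure generates a semigroup), and compatible with the regularity needed to invoke the Taylor-type estimate; this is precisely why the auxiliary operator $B$ and the cubic moment condition $\mathbb{E}_h\|B^{-1/2}h\|_{L_1(\mathcal{E})}^3<\infty$ appear in the hypotheses, and why $C_B^\infty$ rather than $C_D^\infty$ is the natural choice of essential domain. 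Once the derivative is identified, the remaining steps are formal applications of already-established results.
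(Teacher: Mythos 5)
Your proposal is correct and follows essentially the same route as the paper's own proof: both define the averaged shift $V(t)=\mathbb{E}_h S_{\sqrt{t}h}$, reduce $\mathbb{E}_h\Av_m S_{\sqrt{t}h}$ to $(V(t/m))^m$ via Lemma~\ref{expect-independent-unitary}, identify $V'(0)=\tfrac{1}{2}\Delta_D$ on the dense subspace $C_B^\infty$ using the Taylor estimate of Lemma~\ref{lemma-random-taylor}, invoke Theorem~\ref{th-delta-core} to conclude that the closure generates $\mathcal{U}_{(t/2)D}$, and finish with Chernoff's theorem. Your write-up is in fact slightly more explicit than the paper's (notably the substitution $t\mapsto\sqrt{t}$ yielding the $Ct^{3/2}$ remainder and hence the derivative at zero), but the argument is the same.
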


\begin{proof}
    Consider the operator-valued function $T(t)$, acting according to the rule
    $$
    T(t) f(x) = \mathbb{E}_h S_{h \sqrt{t}} f(x).
    $$
    Due to the independence of the shift operators and lemma \ref{expect-independent-unitary},
    $$
    \mathbb{E}_h \Av_m S_{\sqrt{t}h} = (T(h/m))^m.
    $$
    According to lemma \ref{lemma-random-taylor}, the function $T(t) f(x)$ is differentiable at zero for any function from the dense linear subspace $C_B^\infty$, and its derivative is $\dfrac{1}{2} \Delta_D$, the closure of which, by the theorem \ref{th-delta-core}, is a generator of the semigroup of the averaging function over the Gaussian measure $\mathcal{U}_{(t/2)D}$.

    It is also obvious that $T(0) = 1$, and $\|T(t)\|\le 1$.  Therefore, we can use Chernov's theorem.
\end{proof}

Let us now consider the problem on $E$
\begin{equation}\label{eq-inf-osc}
\dfrac{d}{dt} u(t,x) = \dfrac{1}{2} \Delta_{D_x} u(t,x) - \dfrac{1}{2} (D_p x,x) u(t,x), \quad t > 0,
\end{equation}
\begin{equation}\label{eq-inf-osc-1}
u(0,x) = u_0.
\end{equation}
where the operators $D_x$ and $D_p$ are nuclear, diagonal in the basis $\mathcal{E}$ and positive-definite. Let there also be such a nuclear operator $B$, diagonal in the basis $\mathcal{E}$, that $D_x^{1/2}$, $B^{1/2}$, $D_x B^{-1}$ and $D_p B^{-1}$ are also nuclear. The diagonal elements of these operators are denoted by $d_{x,k}$, $d_{p,k}$ and $b_k$, respectively. The theorems \ref{th-sum-core} under such conditions guarantees us that the closure of the positive linear combination $\Delta_{D_x}$ and $-(D_p x,x)$ is a self-adjoint operator generating some semigroup. Therefore, we can use Chernov's theorem to approximate the solution of this equation.

\begin{theorem}\label{th-inf-osc-appr-1}
    Let $h, a \in L_1(\mathcal{E})$ be random vectors with zero mathematical expectation and variances $D_x$ and $D_p$, which are diagonal in basis $\mathcal{E}$. Suppose there is also such a diagonal operator $B$, such that all operators $D_x^{1/2}, D_p, B^{1/2}, D_x B^{-1}$ and $D_p B^{-1}$ are nuclear, vectors $B^{-1/2} h$ and $B^{-1/2} a$ almost surely belong to $L_1(\mathcal{E})$, and conditions
    $\mathbb{E}_h \|B^{-1/2} h\|^3_{L_1(\mathcal{E})} \le + \infty$ and $\mathbb{E}_a \|B^{-1/2} a\|^3_{L_1(\mathcal{E})} \le + \infty$ are fair

    Then 
    $$
    \mathbb{E}_h \Av_m (S_{\sqrt{t}h} \widehat S_{\sqrt{t}a})
    $$
    converges uniformly on the segments in a strong topology to a semigroup which represents the solution of the equation (\ref{eq-inf-osc})-(\ref{eq-inf-osc-1}).
\end{theorem}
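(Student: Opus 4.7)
The plan is to apply Chernoff's theorem, mirroring the finite-dimensional argument in Theorem \ref{th-qosc-conseq-thifts}. Set
\begin{equation*}
V_1(t) = \mathbb{E}_h S_{\sqrt{t}h}, \qquad V_2(t) = \mathbb{E}_a \widehat{S}_{\sqrt{t}a}, \qquad V(t) = V_1(t) V_2(t).
\end{equation*}
By the independence of $h$ and $a$ and Lemma \ref{expect-independent-unitary}, $V(t) = \mathbb{E}_{h,a}[S_{\sqrt{t}h} \widehat{S}_{\sqrt{t}a}]$, and iterating the same lemma gives
\begin{equation*}
[V(t/m)]^m = \mathbb{E}_{h,a} \Av_m(S_{\sqrt{t}h} \widehat{S}_{\sqrt{t}a}),
\end{equation*}
so the theorem reduces to analyzing the Chernoff iterations of $V(t)$.

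The two elementary Chernoff hypotheses are immediate: $V(0) = I$, and since each $V_i(t)$ is the expectation of a family of unitary operators it is a self-adjoint contraction, hence $\|V(t)\|_{B(\mathcal{H}_\mathcal{E})} \le 1$. For the derivative at zero I would work on the common dense subspace $D_B$ introduced in Theorem \ref{th-sum-core}, which lies in both $C_B^\infty$ and $\widehat{C}_B^\infty$. The moment conditions $\mathbb{E}_h \|B^{-1/2}h\|^3_{L_1(\mathcal{E})} < \infty$ and $\mathbb{E}_a \|B^{-1/2}a\|^3_{L_1(\mathcal{E})} < \infty$ are tailored precisely so that Lemmas \ref{lemma-random-taylor} and \ref{lemma-random-mom-taylor} apply on $D_B$, giving
\begin{equation*}
V_1(t) f = f + \tfrac{t}{2}\Delta_{D_x} f + O(t^{3/2}), \qquad V_2(t) f = f - \tfrac{t}{2}(D_p x, x) f + O(t^{3/2}).
\end{equation*}
Splitting
\begin{equation*}
\frac{V(t) f - f}{t} = V_1(t)\,\frac{V_2(t)f - f}{t} + \frac{V_1(t)f - f}{t}
\end{equation*}
and letting $t \to 0$, the contractivity of $V_1(t)$ combined with its strong continuity at $0$ (a consequence of the Taylor bound on the dense set $D_B$ and the uniform bound $\|V_1(t)\| \le 1$) yields
\begin{equation*}
V'(0) f = \tfrac{1}{2}\Delta_{D_x} f - \tfrac{1}{2}(D_p x, x) f, \qquad f \in D_B.
\end{equation*}

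Theorem \ref{th-sum-core}, applied with coefficients $a = b = 1/2$, supplies the final Chernoff hypothesis: the closure of $V'(0)$ on $D_B$ is self-adjoint and generates a strongly continuous contraction semigroup on $\mathcal{H}_\mathcal{E}$, which by construction represents the solution of (\ref{eq-inf-osc})--(\ref{eq-inf-osc-1}). Chernoff's theorem then delivers the claimed uniform-on-compacts strong convergence.

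The principal obstacle I foresee is making rigorous the passage $V_1(t)[(D_p x, x)f] \to (D_p x, x) f$ needed to justify the split above: it requires that $(D_p x, x) f$ lies in $\mathcal{H}_\mathcal{E}$ for every $f \in D_B$ and that the uniformly bounded family $V_1(t)$ converges strongly to $I$ on all of $\mathcal{H}_\mathcal{E}$, not only on $D_B$. Both facts follow from the bookkeeping already carried out inside the proof of Theorem \ref{th-sum-core}, in particular the estimates (\ref{DB-der-norm-1})--(\ref{DB-der-norm-3}) which show $D_B \subset D(\Delta_{D_x}) \cap D((D_p x,x)\cdot)$; once this is verified, the remainder of the argument is a routine check of the Chernoff hypotheses.
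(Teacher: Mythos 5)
Your proposal is correct and follows essentially the same route as the paper's own proof: form $V(t)=\mathbb{E}_{h,a}\bigl(S_{\sqrt{t}h}\widehat{S}_{\sqrt{t}a}\bigr)$, reduce the averaged walk to the Chernoff iterations $(V(t/m))^m$ via Lemma~\ref{expect-independent-unitary}, identify $V'(0)f=\tfrac12\Delta_{D_x}f-\tfrac12(D_px,x)f$ on $D_B$ through the Taylor Lemmas~\ref{lemma-random-taylor} and~\ref{lemma-random-mom-taylor}, and conclude by Theorem~\ref{th-sum-core} together with Chernoff's theorem. If anything you are more careful than the paper, which declares the value of $V'(0)$ on $D_B$ obvious: your product-rule splitting and the domain bookkeeping you flag (that $D_B\subset D(\Delta_{D_x})\cap D((D_px,x)\,\cdot\,)$, via the estimates (\ref{DB-der-norm-1})--(\ref{DB-der-norm-3})) are exactly the suppressed details, and your only slip --- the incidental claim that each $V_i(t)$ is self-adjoint, which can fail for non-symmetric laws of $h,a$ --- is harmless since only contractivity is used.
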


\begin{proof}
    If 
    $$
    V(t) = \mathbb{E}_h (S_{\sqrt{t}h} \widehat S_{\sqrt{t}a}) f(x)
    $$
    is fair for any $f$ from 
    $$
    D_B = \{\widehat{\mathcal{U}}_{sB} \mathcal{U}_{tB} u \mid u \in \mathcal{H}_\mathcal{E}, t>0, s>0\},
    $$
    then 
    $$
    \mathbb{E}_h \Av_m (S_{\sqrt{t}h} \widehat S_{\sqrt{t}a}) f(x) = (V(t/m))^m.
    $$
    Obviously,
    $V(0) = I$, $\|V(t)\| \le 1$ and $V'(0) = \dfrac{1}{2} \Delta_{D_x} f(x) - \dfrac{1}{2} (D_px,x) f(x)$.

    By the theorem \ref{th-sum-core}, the closure of the operator $\dfrac{1}{2} \Delta_{D_x} -\dfrac{1}{2} (D_px,x): D_B\to \mathcal{H}_\mathcal{E}$ is self-adjoint, and, therefore, is a generator of some self-adjoint semigroup.

    Chernov's theorem gives the required result.
\end{proof}

The following theorem is proved in a similar way.

\begin{theorem}\label{th-inf-osc-appr-2}
   Let $h, a \in L_1(\mathcal{E})$ be random vectors with zero mathematical expectation and variances $D_x$ and $D_p$, which are diagonal in basis $\mathcal{E}$. Suppose there exists a diagonal operator $B$ such that all operators $D_x^{1/2}, D_p, B^{1/2}, D_x B^{-1}$ and $D_p B^{-1}$ are nuclear, vectors $B^{-1/2} a$ and $B^{-1/2} h$ almost surely belong to $L_1(\mathcal{E})$, and conditions 
    $\mathbb{E}_h \|B^{-1/2} h\|^3_{L_1(\mathcal{E})} \le + \infty$ and $\mathbb{E}_a \|B^{-1/2} a\|^3_{L_1(\mathcal{E})} \le + \infty$ are fair.

    Denote by $T_{p,h,a}$ a random operator which with probability $p$ makes a shift in the coordinate representation of $S_{h}$, and otherwise ~--- a shift in the momentum representation of $\widehat{S}_a$.

    Then the sequence
    $$
    \mathbb{E}_h \Av_m T_{p,\sqrt{t}h,\sqrt{t}a}
    $$
    converges uniformly along the segments in a strong topology to a semigroup which represents the solution of the equation (\ref{eq-inf-osc})-(\ref{eq-inf-osc-1}).
\end{theorem}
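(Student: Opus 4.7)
The plan is to mirror the structure of Theorem~\ref{th-inf-osc-appr-1} and of the finite-dimensional analogue stated at the end of Section~\ref{Sec:FiniteApproximation}, reducing the problem to Chernoff's theorem applied to
$$
V(t) = \mathbb{E}_{h,a}\, T_{p,\sqrt{t}h,\sqrt{t}a}.
$$
By the definition of $T_{p,h,a}$ and linearity of expectation,
$$
V(t) = p\cdot \mathbb{E}_h S_{\sqrt{t}h} + (1-p)\cdot \mathbb{E}_a \widehat{S}_{\sqrt{t}a},
$$
and since the operators appearing in the composition $\Av_m T_{p,\sqrt{t}h,\sqrt{t}a}$ are independent and almost surely unitary, Lemma~\ref{expect-independent-unitary} gives $\mathbb{E}\,\Av_m T_{p,\sqrt{t}h,\sqrt{t}a} = V(t/m)^m$. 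The theorem therefore reduces to identifying the Chernoff limit of this iterate.

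The two elementary Chernoff hypotheses are immediate: $V(0)=I$ follows from $S_0=\widehat{S}_0=I$, while $\|V(t)\|\le 1$ because each averaged shift is a Pettis integral of unitary operators against a probability measure, hence a contraction. To compute $V'(0)$ I will use as a common core the subspace $D_B$ from Theorem~\ref{th-sum-core}, which lies in $C_B^\infty\cap\widehat{C}_B^\infty$ and is dense in $\mathcal{H}_\mathcal{E}$. For $f\in D_B$ the third-moment assumptions $\mathbb{E}\|B^{-1/2}h\|^3_{L_1(\mathcal{E})}<\infty$ and $\mathbb{E}\|B^{-1/2}a\|^3_{L_1(\mathcal{E})}<\infty$ are precisely the hypotheses of Lemmas~\ref{lemma-random-taylor} and~\ref{lemma-random-mom-taylor}, which supply
$$
\mathbb{E}_h S_{\sqrt{t}h} f = f + \frac{t}{2}\,\Delta_{D_x} f + O(t^{3/2}), \qquad \mathbb{E}_a \widehat{S}_{\sqrt{t}a} f = f - \frac{t}{2}\,(D_p x,x) f + O(t^{3/2}).
$$
Combining these via the convex-combination form of $V(t)$ yields
$$
V'(0)f = \frac{p}{2}\,\Delta_{D_x} f + \frac{1-p}{2}\,\Lambda_{D_p} f,\qquad f\in D_B.
$$

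For the final Chernoff hypothesis, the closure of $V'(0)$ must generate a strongly continuous semigroup. Since $p\in(0,1)$, both coefficients $p/2$ and $(1-p)/2$ are strictly positive, so $V'(0)$ is exactly the operator treated in Theorem~\ref{th-sum-core} with $(a,b)=(p/2,(1-p)/2)$; that theorem applies directly and gives both self-adjointness of the closure and the generator property of a strongly continuous contraction semigroup. Chernoff's theorem then provides uniform convergence on compact time intervals of $V(t/m)^m$ to this semigroup in the strong operator topology, which is the $p$-weighted analogue of~(\ref{eq-inf-osc})--(\ref{eq-inf-osc-1}) already encountered in the finite-dimensional result following Theorem~\ref{th-qosc-conseq-thifts}.

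The only non-bookkeeping step --- and the place where the argument would break down if the moment hypotheses were weakened --- is ensuring that the $O(t^{3/2})$ remainders in the two Taylor expansions are controlled in the $\mathcal{H}_\mathcal{E}$-norm rather than merely pointwise, so that $(V(t)-I)/t$ converges to $V'(0)$ strongly on $D_B$ as required by Chernoff's theorem. This is precisely what the norm-level estimates of Lemmas~\ref{lemma-random-taylor} and~\ref{lemma-random-mom-taylor} provide (their constants depend only on $\|u\|_{\mathcal{H}_\mathcal{E}}$ and on the finite third moments), so the proof otherwise reduces to a line-by-line transcription of the proof of Theorem~\ref{th-inf-osc-appr-1}.
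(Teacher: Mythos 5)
Your proposal is correct and is essentially the paper's own argument: the paper gives no separate proof of Theorem~\ref{th-inf-osc-appr-2}, stating only that it ``is proved in a similar way'' to Theorem~\ref{th-inf-osc-appr-1}, and your reduction via $V(t)=p\,\mathbb{E}_h S_{\sqrt{t}h}+(1-p)\,\mathbb{E}_a \widehat{S}_{\sqrt{t}a}$, Lemma~\ref{expect-independent-unitary}, the norm-level Taylor estimates of Lemmas~\ref{lemma-random-taylor} and~\ref{lemma-random-mom-taylor} on the core $D_B$, Theorem~\ref{th-sum-core} with $(a,b)=\left(\frac{p}{2},\frac{1-p}{2}\right)$, and Chernoff's theorem is exactly that argument transplanted from the finite-dimensional convex-combination theorem at the end of Section~\ref{Sec:FiniteApproximation}. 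Your closing observation that the limit generator is $\frac{p}{2}\Delta_{D_x}+\frac{1-p}{2}\Lambda_{D_p}$ is also the correct reading of the statement: the reference to equation~(\ref{eq-inf-osc})--(\ref{eq-inf-osc-1}) must be understood with the $p$-weighted coefficients, as in the finite-dimensional analogue, since the unweighted coefficients $\frac{1}{2},\frac{1}{2}$ arise only for $p=\frac{1}{2}$ or after rescaling the variances.
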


\section{Conclusions}\label{Sec:Conclusions}

We consider quantum random walks in an infinite-dimensional phase space which is constructed using Weyl representation of the coordinate and momentum operators in the space of functions on a Hilbert space which are square integrable with respect to a shift-invariant measure. 
We prove that evolution of an infinite-dimensional quantum oscillator can be approximated by a quantum random walks in this phase space. 

The main tool to obtain the Weyl representation of quantum random walks in the phase space is the construction of a non-negative shift-invariant measure on the position subspace of the phase space. This measure is realised as an finitely-additive measure since according to Weil theorem it can not have all properties of the Lebesgue measure. We analyse operators of the Weyl representation and other linear operators in the space of square integrable with respect to this invariant measure functions. We show that the averaging of operator by shift of the argument by a Gaussian random vector is described by a convolution with a Gaussian measure. We obtain a criterion for  strong continuity of the semigroup of self-adjoint contractions which is generated by the convolutional semigroup of  the Gaussian measures. We prove that the generator of this semigroup is a self-adjoint Laplace-Volterra operator in the space of square integrable functions. We also study smoothness of the images of functions under averaging of the random shifts.

By using the Weyl representation of the shift operators in the momentum space we obtain that averaging of the operators of shift  on a Gaussian random vector in momentum space is presented by operators of multiplication by a Gaussian function in the space of square integrable with respect to the invariant measure functions. The generator of the semigroup of such multiplication operators is described. Further we study properties of the generator of the semigroup which is given by averaging of the Weyl representation of random shifts in the phase space. We prove Taylor's formula for shift of argument of smooth functions along a vector from a certain subspace in coordinate and momentum representations.  We prove  that positive combinations of generators of the semigroup of convolutions with Gaussian measures and the semigroup of multiplication by Gaussian functions under some conditions are self-adjoint and generate strongly continuous semigroups. This allows us to define a Hamiltonian of an infinite-dimensional quantum oscillator.

The mathematical tools that are developed in this paper can be extended to study other models of quantum random walks that are given by Koopman representation of groups of transformations of an infinite dimensional phase spaces. In particular, it can be applied for studying random walks along random Hamiltonian vector fields in the Weyl representation.

\end{document}